\DeclarePairedDelimiter{\floor}{\lfloor}{\rfloor}
\titleformat{\chapter}[display]
  {\normalfont\LARGE\bfseries}
  {\titleline{}\vspace{5pt}\titleline{}\vspace{1pt}%
  \MakeUppercase{\chaptertitlename} \thechapter}
  {1pc}
  {\titleline{}\vspace{0.5pc}} 
\DeclarePairedDelimiter\abs{\lvert}{\rvert}
\renewcommand\section{\@startsection {section}{1}{\z@}%
                               {-3.5ex \@plus -1ex \@minus -.2ex}%
                               {2.3ex \@plus.2ex}%
                               {\normalfont\large\bfseries}}
\renewcommand\subsection{\@startsection{subsection}{2}{\z@}%
                                 {-3.25ex\@plus -1ex \@minus -.2ex}%
                                 {1.5ex \@plus .2ex}%
                                 {\normalfont\bfseries}}
\newcommand{\distas}[1]{\mathbin{\overset{#1}{\kern\z@\sim}}}%
\newsavebox{\mybox}\newsavebox{\mysim}
\newcommand{\distras}[1]{%
  \savebox{\mybox}{\hbox{\kern3pt$\scriptstyle#1$\kern3pt}}%
  \savebox{\mysim}{\hbox{$\sim$}}%
  \mathbin{\overset{#1}{\kern\z@\resizebox{\wd\mybox}{\ht\mysim}{$\sim$}}}%
}
\numberwithin{equation}{section}       
\numberwithin{figure}{section}         
\numberwithin{table}{section}          
\newtheorem{theorem}{Theorem}[section]
\newtheorem{prop}[theorem]{Proposition}
\newtheorem{corollary}[theorem]{Corollary}
\newtheorem{lemma}[theorem]{Lemma}
\theoremstyle{remark}
\newtheorem{remark}{Remark}[section]
\theoremstyle{definition}
\newtheorem{assumption}{Assumption}
\theoremstyle{definition}
\newtheorem{definition}[theorem]{Definition}
\title{
\normalfont \normalsize   
\huge      Optimal friction matrix for underdamped Langevin sampling      
}
\author{ Martin Chak \and Nikolas Kantas \and Tony Leli\`evre \and Grigorios A. Pavliotis}
\date{\normalsize\today}
\begin{document}

\maketitle

\begin{abstract}
A systematic procedure for optimising the friction coefficient in underdamped Langevin dynamics as a sampling tool is given by taking the gradient of the associated asymptotic variance with respect to friction. We give an expression for this gradient in terms of the solution to an appropriate Poisson equation and show that it can be approximated by short simulations of the associated first variation/tangent process under concavity assumptions on the log density. Our algorithm is applied to the estimation of posterior means in Bayesian inference problems and reduced variance is demonstrated when compared to the original underdamped and overdamped Langevin dynamics in both full and stochastic gradient cases.
\end{abstract}
\tableofcontents
\section{Introduction}
Let $\pi$ be a probability measure on $\mathbb{R}^n$ with smooth positive bounded density, also denoted $\pi$, with respect to the Lebesgue measure on $\mathbb{R}^n$ and let $f\in L^2(\pi)$ be an observable. In a range of applications including molecular dynamics \cite{MR2583309,MR3463433,MR3509213} and machine learning \cite{NIPS1992_f29c21d4,MR3555050,10.5555/3104482.3104568}, a quantity of interest is the expectation of $f$ with respect to $\pi$,
\begin{equation*}
\pi(f) := \int f d\pi,
\end{equation*}
which is analytically intractable and is numerically approximated most commonly by Markov Chain Monte Carlo (MCMC) methods, whereby $\pi$ is sampled by simulating an ergodic Markov chain $(X_k)_{1\leq k \leq N}$ with $\pi$ as its unique invariant measure and $\pi(f)$ is approximated by the empirical average $\frac{1}{N}\sum_{k = 1}^N f(X_k)$. MCMC methods enjoy central limit theorems for many Markov chains employed, the most well-known (class) of such methods being the Metropolis-Hastings algorithm \cite{MR3363437, doi:10.1063/1.1699114}. Recent efforts have been to develop MCMC methods suited to settings where $n\gg 1$ and where point evaluations of $\pi$ or its gradients are computationally expensive; these methods include slice sampling \cite{dubois2014approximate, MR1994729}, Hamiltonian Monte Carlo \cite{MR3648031, MR3960671, MR2858447}, piecewise-deterministic Markov processes \cite{MR3911113, MR3832232, vanetti2018piecewisedeterministic} and those based on discretisations of continuous-time stochastic dynamics \cite{MR4048994, MR3509213, MR4278799} together with divide-and-conquer and subsampling approaches \cite{MR3670492}.

In this paper we consider the underdamped Langevin dynamics. Denoting $\mathbb{S}_{++}^n$ as the set of real symmetric $n\times n$ positive definite matrices, 
the underdamped Langevin dynamics\footnote{also referred to as Langevin, second-order Langevin or kinetic Langevin.} with mass $M\in\mathbb{S}_{++}^n$ and friction matrix $\Gamma\in \mathbb{S}_{++}^n$ is given by the $\mathbb{R}^{2n}$-valued solution $(q_t,p_t)$ to
\begin{subequations}\label{langevin00}
\begin{align}
dq_t &= M^{-1}p_tdt\\
dp_t &= -\nabla U(q_t) -\Gamma M^{-1} p_tdt + \sqrt{2\Gamma} dW_t,
\end{align}
\end{subequations}
where $\sqrt{\Gamma}\in\mathbb{R}^{n\times n}$ is any matrix satisfying
\begin{equation*}
\sqrt{\Gamma}\sqrt{\Gamma}^\top = \Gamma,
\end{equation*}
$U:\mathbb{R}^n\rightarrow\mathbb{R}$ is the associated smooth potential or negative log density such that $\pi \propto e^{-U}$ and $W_t$ denotes a standard Wiener process on $\mathbb{R}^n$. 
The probability distribution from underdamped Langevin dynamics converges under general assumptions to the invariant probability measure given by
\begin{equation}\label{pitildeden}
\tilde{\pi}(dq,dp) = Z^{-1}e^{-U(q) - \frac{p^\top M^{-1} p}{2}}dqdp
\end{equation}
for a normalising constant $Z$ and there have been numerous recent works \cite{pmlr-v75-cheng18a, MR4091098, durmus2021uniform, Foster2021TheSO, he2020ergodicity, MR3040887, MR4309974, SanzSerna2021WassersteinDE} on its discretisations in terms of the quality of convergence to $\tilde{\pi}$ over time measured by (e.g.) Wasserstein distance; in this paper, 
the goal is to optimise $\Gamma\in\mathbb{S}_{++}^n$ directly with respect to the asymptotic variance in the convergence of 
\begin{equation*}
\pi_T(f) := \frac{1}{T}\int_0^T f(q_t)dt
\end{equation*}
to $\pi(f)$ for any particular $f$ (or a finite set of observables) as $T\rightarrow\infty$.\\ 
We mention that parameter tuning in MCMC methods is a widely considered topic \cite{MR2461882, MR4140028} (and references within).
Specifically for underdamped Langevin dynamics, tuning the momentum part of $\tilde{\pi}$ with respect to reducing metastability or computational effort was considered in \cite{MR3529154, MR3799045, MR3622623}. The choice of friction (as a scalar) has been a subject of consideration as early as in \cite{HOROWITZ1987510}, then in \cite{MR3774648, PhysRevE.75.056707, doi:10.1080/08927022.2020.1791858, skeel2021choice} within the context of molecular dynamics and also in \cite{MR4091098, MR3723428}. Most of these works make use of different measures for efficiency. The present work constitutes the first systematic gradient procedure for choosing the friction matrix in an optimal manner, with respect to a appropriate cost criterion. 
\subsection{Outline of approach}
We proceed with a formal description of our approach, precise statements can be found in the main Theorems~\ref{funcdereq} and~\ref{newformprop}. It is known using results from \cite{e19120647} and \cite{MR663900} that, under suitable assumptions on $U$ and $f$, a central limit theorem
\begin{equation}\label{est2}
\frac{1}{\sqrt{T}}\int_0^T (f(q_t)-\pi(f))dt \overset{\mathcal{D}}{\rightarrow} \mathcal{N}(0,\sigma^2)\qquad \textrm{as } T\rightarrow \infty
\end{equation}
holds and that $\sigma^2$, the asymptotic variance, has the form
\begin{equation}\label{av0}
\sigma^2 = 2\int\phi(f-\pi(f))d\tilde{\pi}
\end{equation}
where $\phi$ is a solution to the Poisson equation
\begin{equation}\label{poisson0}
-L\phi = f- \pi(f)
\end{equation}
and $L$ denotes the infinitesimal generator associated to~\eqref{langevin00}. 
Two key observations are then made. Firstly, for any direction $\delta\Gamma\in\mathbb{R}^{n\times n}$ in the friction matrix, the derivative of $\sigma^2$ with respect to the entries of $\Gamma$ in the direction $\delta\Gamma$, denoted $d\sigma^2.\delta\Gamma$, is given by the formula  
\begin{equation}\label{funcder00}
d\sigma^2.\delta\Gamma = -2\int (\nabla\!_p \phi)^\top \delta\Gamma \nabla\!_p \tilde{\phi} d\tilde{\pi},
\end{equation}
where $\tilde{\phi}$ is given by
\begin{equation}\label{phitilde}
\tilde{\phi}(q,p) = \phi(q,-p).
\end{equation}
A direction $\delta\Gamma$ that guarantees a decrease in $\sigma^2$ is then
\begin{equation}\label{DeltaGam}
\Delta\Gamma := \int \nabla\!_p \phi \otimes \nabla\!_p\tilde{\phi} d\tilde{\pi}
\end{equation}
where $\otimes$ denotes the outer product. Similarly, taking $\delta\Gamma$ to be the diagonal elements of~\eqref{DeltaGam} or $\delta\Gamma = I_n \int \nabla\!_p \phi \cdot \nabla\!_p \tilde{\phi} d\tilde{\pi}$ 
give in both cases a negative change in asymptotic variance respectively for diagonal $\Gamma$ and $\Gamma$ of the form $c I_n$. The second observation is that since the solution $\phi$ to the Poisson equation~\eqref{poisson0} is known to be given by
\begin{equation}\label{pform}
\phi(q,p) = \int_0^\infty \mathbb{E}[f(q_t)]dt,
\end{equation}
where $(q_t,p_t)$ solves~\eqref{langevin00} with initial condition $(q_0,p_0) = (q,p)$, given convexity conditions on the potential $U$ and under suitable assumptions, we have 
\begin{equation}\label{nabphi0}
\nabla\!_p \phi = \int_0^\infty \mathbb{E}[\nabla f(q_t)^\top D_p q_t]dt,
\end{equation}
where $D_p q_t$ denotes the $\mathbb{R}^{n\times n}$-matrix made of partial derivatives of $q_t$ with respect to the initial condition $p$ in momentum. Not only does $D_p q_t$ satisfy the dynamics that result from taking partial derivatives in~\eqref{langevin00}, which are susceptible to algorithmic simulation, but the process also decays to zero exponentially quickly, so that the infinite time integral~\eqref{nabphi0} can be accurately approximated with a truncation using short simulations of $D_p q_t$ for adaptive estimations of the direction~\eqref{DeltaGam} in $\Gamma$. This leads to an adaptive algorithm involving the selection of $\Gamma$ in an appropriate constrained set, of which we illustrate the performance with numerical examples.\\


Examples where improved $\Gamma$ can be found analytically are presented in Section~\ref{anacas}. 
Numerical illustrations making use of~\eqref{funcder00} and~\eqref{nabphi0} are presented in Sections~\ref{conc}. In particular, the algorithm is applied on the problem of finding the posterior mean in a Bayesian logistic regression inference problem for two datasets with hundreds of dimensions, where the best friction matrices found in both cases are close to zero (for example $\Gamma = 0.1I_n$ performs well compared to $\Gamma = I_n$, demonstrating reduced variance of almost an order of magnitude in Tables~\ref{table1} and~\ref{table2}).\\ 
To use the asymptotic variance for a particular observable (or a set of them) and to use measures for the quality of convergence to $\tilde{\pi}$ or to minimise an autocorrelation time as considered in \cite{MR3774648, PhysRevE.75.056707, MR4091098, HOROWITZ1987510, doi:10.1080/08927022.2020.1791858, skeel2021choice} can be conflicting goals. 
To elaborate, in \cite{HOROWITZ1987510}, the autocorrelation time was used as the point of comparison in the Gaussian target measure case for the optimal friction. For $n=1$, $\omega,\gamma\in\mathbb{R}$, $U(q) = \frac{1}{2}\omega^2 q^2$, $M = 1$, $\Gamma = \gamma>0$, the autocorrelation time for~\eqref{langevin00} satisfies 
\begin{equation}\label{autodec}
\partial_t \begin{pmatrix}\mathbb{E}(q_t q_0) \\ \mathbb{E}(p_t q_0)\end{pmatrix} = \begin{pmatrix} 0 & 1\\ -\omega^2 & -\gamma \end{pmatrix} \begin{pmatrix} \mathbb{E}(q_t q_0) \\ \mathbb{E}(p_t q_0)\end{pmatrix}.
\end{equation}
By considering the eigenvalues, the conclusion in \cite{HOROWITZ1991247} is that the optimal $\gamma$ for minimising the magnitude of $\mathbb{E}(q_t q_0)$ is given by the critical damping $\gamma = 2\omega$. A similar conclusion can be made when considering the spectral gap\cite{MR3288096}.
\begin{figure}[h]
  \centering
    \includegraphics[width=0.5\linewidth]{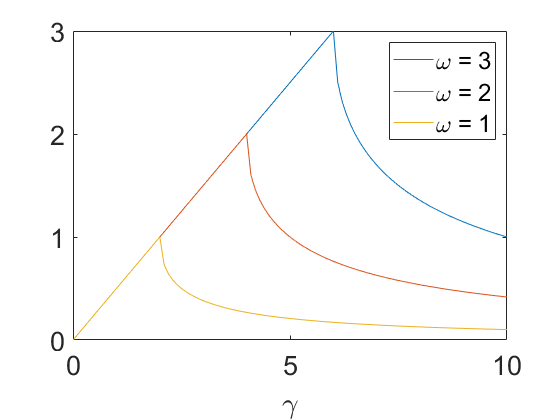}
\caption{The values $\min_i(\abs*{\textrm{Re}(\lambda_i)})$, where $\lambda_i$ are the eigenvalues of the matrix appearing in~\eqref{autodec}, also the spectral gap. Critical values of $\gamma$ are given by $2\omega$.}\label{specga}
  \end{figure}
On the other hand, if $f(q)=q$ in our setting, formally, the quantity $\int \int_0^\infty \mathbb{E}(q_t q_0) dt d\pi(q_0)$ is the asymptotic variance due to~\eqref{av0} and~\eqref{pform}. Despite the similarity, Corollary~\ref{lincor} asserts that $\gamma = 0$ is optimal.  
A more detailed discussion about Corollary~\ref{lincor} is given in Section~\ref{oddpol}. This difference emphasizes that, at the cost of generic convergence to $\tilde{\pi}$, the tuning of $\Gamma$ here is directed at variance reduction for a particular observable, in this case $f(q) = q$. 
However, multiple asymptotic variances can be used for the objective function to minimise, so that $\Gamma$ can be optimised with respect to several observables of interest simultaneously. Remark~\ref{simul} describes the implementation for a linear combination of asymptotic variances at no extra cost in terms of evaluations of $\pi$ or its gradients.

The rest of the paper is organised as follows. In Section~\ref{setting}, we provide a mathematical setting in which the underdamped Langevin dynamics with a friction matrix and in particular~\eqref{langevin00} has a well-defined solution and satisfies the central limit theorem for suitable observables, together with notations used throughout the paper. In Section~\ref{fdsec}, prerequisite results and the main formulae~\eqref{funcder00} and~\eqref{nabphi0} are precisely stated. 
Exact results concerning improvements in $\Gamma$ including the quadratic $U$, quadratic $f$ and linear $f$ cases are given in Section~\ref{anacas}. Numerical methods in approximating~\eqref{DeltaGam} together with an algorithm resulting from~\eqref{funcder00} and~\eqref{nabphi0} is outlined and detailed in Algorithm~\ref{algo0} and~\ref{algorithm} respectively in Section~\ref{nummeth}, alongside examples of $U$ and $f$ where improvements in variance are observed. In Section~\ref{proofs}, deferred proofs are given. In Section~\ref{discussion}, we conclude and discuss about future work.

\section{Setting}\label{setting}
Let $(\Omega,\mathcal{F},\mathbb{P})$ be a complete probability space, $(\mathcal{F}_t)_{t\in\mathbb{R}}$ be a normal (satisfying the usual conditions) filtration with $(W_t)_{t\geq 0}$ a standard Wiener process on $\mathbb{R}^n$ with respect to $(\mathcal{F}_t)_{t\in\mathbb{R}}$, $\tilde{\pi}$ be a probability measure given by~\eqref{pitildeden}. 

\begin{assumption}\label{smu}\textcolor{white}{a}
$U\in C^\infty(\mathbb{R}^n)$ satisfies 
$U\geq 0$ and its second derivatives satisfy \begin{equation}\label{assump2eq0}
\| D^2 U \|_\infty := \sum_{i,j} \sup_{q\in\mathbb{R}^{2n}}\abs*{\partial_{q_i}\partial_{q_j} U(q)} < \infty.
\end{equation}
\end{assumption}Note that~\eqref{assump2eq0} implies
\begin{equation}\label{assump2eq}
\abs*{\nabla U(q)} \leq K_U(1+\abs*{q})
\end{equation}
for some $K_U >0$. 
The existence and uniqueness of a strong solution to~\eqref{langevin00} is established in Theorem~\ref{dent}. Due to the smoothness of $U$ and $\Gamma$, the coefficients in~\eqref{langevin00} are locally Lipschitz and well-posedness of equation~\eqref{langevin00} is given by \cite{MR2329435}, to which we also refer to for the sense of solution. In addition, we make certain to satisfy the joint measurability assumption in \cite{MR663900} of~\eqref{trans}.

\subsection{Preliminaries and notation}
The set of smooth compactly supported functions is denoted $C_c^\infty$. The infinitesimal generator $L$ (defined in~\eqref{l2gen}) associated to~\eqref{langevin00} is given formally by its differential operator form, denoted $\mathcal{L}$, when acting on the subset $C_c^\infty(\mathbb{R}^{2n})$,
\begin{equation}\label{infgen}
\mathcal{L} = p^\top M^{-1}\nabla\!_q - \nabla U(q) ^\top \nabla\!_p - p^\top M^{-1}\Gamma \nabla\!_p + \nabla\!_p^\top \Gamma \nabla\!_p.
\end{equation}
Its formal $L^2$-adjoint $L^\top$ satisfies
\begin{equation}\label{inv}
L^\top \tilde{\pi} = 0,
\end{equation}
so that $\tilde{\pi}$ (see~\eqref{pitildeden}) is an invariant probability measure for~\eqref{langevin00} for a normalisation constant $Z$. Let 
\begin{equation*}
L_0^2(\pi) := \{ g\in L^2(\pi) : \int g d\pi = 0 \}
\end{equation*}
and similar for $\tilde{\pi}$. The notation $D^2 U$ will be used for the Hessian matrix of $U$. As in the introduction, $I_n\in\mathbb{R}^{n\times n}$ denotes the identity matrix. For matrices $A$, $\abs*{A}$ denotes the operator norm associated with the Euclidean norm on $\mathbb{R}^n$. $e_i$ is used to denote the $i^{\textrm{th}}$ Euclidean basis vector. For $A, B\in\mathbb{R}^{n\times n}$, $A:B := \sum_{i,j}A_{ij}B_{ij}$ and $A_S = \frac{1}{2}(A+A^\top)$. $\langle\cdot,\cdot\rangle_{\tilde{\pi}}$ denotes the inner product in $L^2(\tilde{\pi})$ and similar for $\pi$.

\subsection{Semigroup bound, Poisson equation and central limit theorem}
In this section, a central limit theorem for the solution to~\eqref{langevin00} is established, where the resulting asymptotic variance will be used as a cost function to optimise $\Gamma$ with respect to. Specifically, it will be shown that under some weighted $L^\infty$ bound on the observable $f\in L^2(\pi)$, the estimator $\pi_T$ for the unique solution $(q_t,p_t)$ to~\eqref{langevin00} converges to $\pi(f)$ as $T\rightarrow \infty$ such that~\eqref{est2} holds with~\eqref{av0}.\\[1em]
It is well known that the asymptotic variance can be expressed in terms of the solution to the Poisson equation~\eqref{poisson0} using the Kipnis-Varadhan framework, 
see for example Chapter 2 in \cite{MR2952852}, Section 3.1.3 in \cite{MR3509213}, \cite{MR3069369} and references therein. 
In order to show that the expression~\eqref{pform} is indeed a solution to the Poisson equation~\eqref{poisson0}, exponential decay of the semigroup~\eqref{semigroup} is used. In Theorem~\ref{thmc} below, we establish convergence in law to the invariant measure for the Langevin dynamics~\eqref{langevin00}.
For this, let the Lyapunov function $\mathcal{K}_l:\mathbb{R}^{2n}\rightarrow\mathbb{R}$ for all $l\in\mathbb{N}$ be given by
\begin{equation}\label{lyafunc}
\mathcal{K}(z) = \mathcal{K}_l(q,p) = \Big(cU(q) + a\abs*{q}^2 + b\langle q,p\rangle + \frac{c}{2}\abs*{p}^2 + 1\Big)^l
\end{equation}
for constants $a,b,c>0$.
\begin{assumption}\label{assump1}
There exist constants $\beta_1,\beta_2>0$ and $\alpha\in\mathbb{R}$ such that
\begin{equation}\label{assumpgrad}
\forall q\in\mathbb{R}^n, \langle q, \nabla\!_q U(q) \rangle \geq \beta_1 U(q) + \beta_2 \abs*{q}^2 + \alpha.
\end{equation}
\end{assumption}
Inequality~\eqref{assumpgrad} implies
\begin{equation}\label{Uquad}
U(q) \geq C\abs*{q}^2 - C
\end{equation}
for all $q\in\mathbb{R}^{n}$ and some generic constant $C>0$.
\begin{theorem}\label{thmc}
Under Assumptions~\ref{smu} and~\ref{assump1}, 
$\tilde{\pi}$ is the unique invariant probability measure for the SDE~\eqref{langevin00} and for all $l\in\mathbb{N}$, there exist constants $\kappa_l,C_l>0$ depending on $l$ and constants $a,b,c>0$ independent of $l$ such that the solution $z_t^z = (q_t,p_t)$ to~\eqref{langevin00} with initial condition $z$ satisfies
\begin{equation}\label{convlaw}
\abs*{ \mathbb{E}[\varphi(z_t^z)] - \tilde{\pi}(\varphi) } \leq C_l e^{-t\kappa_l} \mathcal{K}_l(z) \left\| \frac{\varphi - \tilde{\pi}(\varphi)}{\mathcal{K}_l} \right\|_{L^\infty}
\end{equation}
for Lebesgue almost all initial $z\in\mathbb{R}^{2n}$, $\mathcal{K}_l \geq 1$ given by~\eqref{lyafunc} and all Lebesgue measurable $\varphi$ satisfying
\begin{equation}\label{Linf}
\frac{\varphi}{\mathcal{K}_l} \in L^\infty
\end{equation}
Moreover for any $l\in\mathbb{N}$, $\mathcal{K}_l$ satisfies
\begin{equation}\label{fin}
\int \mathcal{K}_l d\tilde{\pi} < \infty
\end{equation}
and 
\begin{equation}\label{lyapunov}
\mathcal{L}K_l \leq -a_l \mathcal{K}_l + b_l
\end{equation}
for some constants $a_l,b_l >0 $.
\end{theorem}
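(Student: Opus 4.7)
The plan is to deduce both~\eqref{convlaw} and the uniqueness of $\tilde{\pi}$ from a Harris-type theorem (e.g.\ Hairer and Mattingly's ``Yet another look at Harris' theorem'', or the classical Meyn-Tweedie framework), whose hypotheses reduce to (i) a Lyapunov drift of the form~\eqref{lyapunov} on a function bounded below by $1$ with compact sub-level sets, and (ii) a uniform minorisation $P_{t_0}(z,\cdot) \geq \eta\,\nu(\cdot)$ on those sub-level sets. Granted (i) and (ii), the theorem delivers exponential contraction in the $\mathcal{K}_l$-weighted total variation distance, which is precisely~\eqref{convlaw} and simultaneously implies that $\tilde{\pi}$ is the unique invariant probability measure. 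The integrability~\eqref{fin} can be read off directly from the explicit density~\eqref{pitildeden}: the $p$-marginal is Gaussian, and~\eqref{Uquad} forces $e^{-U}$ to have Gaussian decay in $q$, so any polynomial in $(|q|,|p|,U)$ lies in $L^1(\tilde{\pi})$.

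The central step is the verification of~\eqref{lyapunov}. Setting $V := cU(q) + a|q|^2 + b\langle q,p\rangle + \frac{c}{2}|p|^2 + 1$ so that $\mathcal{K}_l = V^l$, a direct computation using~\eqref{infgen} gives
\begin{equation*}
\mathcal{L}V = cp^T(M^{-1}-I)\nabla U + 2a p^T M^{-1}q + b p^T M^{-1}p - b\langle q,\nabla U\rangle - b p^T M^{-1}\Gamma q - c p^T M^{-1}\Gamma p + c\,\mathrm{tr}(\Gamma).
\end{equation*}
The dissipation has two sources: Assumption~\ref{assump1} applied to $-b\langle q,\nabla U\rangle$ yields the negative terms $-b\beta_1 U - b\beta_2|q|^2$, while $-cp^T M^{-1}\Gamma p$ dominates $|p|^2$ by positivity of $(M^{-1}\Gamma + \Gamma M^{-1})/2$ (if this fails for some pathological pair $M,\Gamma\in\mathbb{S}_{++}^n$, one replaces $\frac{c}{2}|p|^2$ by $\frac{c}{2}p^T M^{-1}p$ in $V$, which is equivalent to $|p|^2$ up to spectral constants and preserves the form~\eqref{lyafunc}). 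The remaining cross-terms are absorbed by Young's inequality together with the linear growth~\eqref{assump2eq} of $\nabla U$, provided the hierarchy $b^2 \ll ac$ is in force, giving $\mathcal{L}V \leq -a_1 V + b_1$. Passing to $V^l$ via
\begin{equation*}
\mathcal{L}V^l = l V^{l-1}\mathcal{L}V + l(l-1) V^{l-2} (\nabla_p V)^T \Gamma (\nabla_p V),
\end{equation*}
using $|\nabla_p V|^2 = |bq + cp|^2 \leq CV$ and one more application of Young's inequality, one obtains~\eqref{lyapunov} with $a,b,c$ fixed once and for all and $a_l,b_l$ depending on $l$.

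For the minorisation, the diffusion vector fields $\sqrt{\Gamma}\,\nabla_p$ together with their brackets with the drift span $\mathbb{R}^{2n}$ pointwise, so H\"ormander's theorem provides a smooth transition density for every $t>0$. A Stroock-Varadhan support-theorem argument, exploiting that any admissible momentum trajectory is directly reachable through the $\nabla_p$ directions, shows this density is strictly positive on $\mathbb{R}^{2n}\times\mathbb{R}^{2n}$. Continuity then upgrades this to the required uniform lower bound $P_{t_0}(z,\cdot)\geq \eta\,\nu(\cdot)$ on any compact set, in particular on the sub-level sets $\{V\leq R\}$, which are compact by~\eqref{Uquad}.

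The main obstacle I expect is the bookkeeping in the drift calculation: a single triple $(a,b,c)$ must make the negative contributions from $-b\langle q,\nabla U\rangle$ and $-cp^T M^{-1}\Gamma p$ strictly dominate every cross-term involving the non-commuting matrices $M$ and $\Gamma$ after Young's inequality, and it must do so uniformly in $l$ so that the constants $a,b,c$ in $\mathcal{K}_l$ remain $l$-independent as required. This is delicate but standard for kinetic Lyapunov constructions, ultimately reducing to a finite-dimensional quadratic-form manipulation.
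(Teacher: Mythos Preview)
Your proposal is correct and follows precisely the Lyapunov-drift plus minorisation route underlying the paper's result; the paper itself does not reprove this theorem but simply invokes Theorem~3 of \cite{e19120647}, where exactly this Meyn--Tweedie/Harris machinery (Lyapunov condition on $\mathcal{K}_l$, smoothing and irreducibility via hypoellipticity) is carried out in a slightly more general setting with position-dependent friction. Your computation of $\mathcal{L}V$ and the subsequent hierarchy $b^2\ll ac$ are the standard choices, and your caveat about replacing $\tfrac{c}{2}|p|^2$ by $\tfrac{c}{2}p^\top M^{-1}p$ when $M\neq I_n$ is exactly the modification needed to make the dissipation term $-cp^\top M^{-1}\Gamma p$ cleanly negative-definite.
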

The proof is from \cite{e19120647}, in which the setting is more general than~\eqref{langevin00} in that the friction matrix is dependent on $q$ and the drift is not necessarily conservative, i.e. the forcing term is not the gradient of a scalar function and the fluctuation-dissipation theorem (see equation (6.2) in \cite{MR3288096}) does not hold, 
but of course, it applies in particular to our setting.
\begin{remark}\label{all}
Inequality~\eqref{convlaw} holds for all initial $z\in\mathbb{R}^{2n}$, as opposed to almost all $z$, given any \textit{bounded} measurable $\varphi$. This is a consequence of combining~\eqref{convlaw} together with the strong Feller property given by Theorem 4.2 in \cite{MR3256873}.
\end{remark}
\begin{proof}
The measure $\tilde{\pi}$ is invariant due to~\eqref{inv}. 
For the rest of the statements, see Theorem 3 in \cite{e19120647}.
\end{proof}
The following corollary holds by taking $\varphi$ as indicator functions and Remark~\ref{all}.
\begin{corollary}\label{tvcor}
Under Assumptions~\ref{smu} and~\ref{assump1}, for all initial $z\in\mathbb{R}^{2n}$, the transition probability $p_t^z$ of~\eqref{langevin00}, given by $p_t^z(A) = \mathbb{P}(z_t^z\in A)$, satisfies
\begin{equation*}
\| p_t^z-\tilde{\pi} \|_{\textrm{TV}} \rightarrow 0 \qquad \textrm{as } t\rightarrow \infty
\end{equation*}
where $\| \cdot \|_{\textrm{TV}}$ denotes the total variation norm.
\end{corollary}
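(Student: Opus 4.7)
The plan is to simply unpack Theorem~\ref{thmc} with $\varphi$ taken to be an indicator function, exploiting that Remark~\ref{all} removes the ``Lebesgue almost all initial $z$'' caveat whenever the test function is bounded. Recall that the total variation distance between two probability measures admits the representation
\begin{equation*}
\| p_t^z - \tilde{\pi} \|_{\textrm{TV}} = \sup_{A \in \mathcal{B}(\mathbb{R}^{2n})} \big| p_t^z(A) - \tilde{\pi}(A) \big|,
\end{equation*}
so it suffices to bound the right-hand side uniformly in $A$ by a quantity that vanishes as $t\to\infty$.

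First, I would fix any $l\in\mathbb{N}$ (the choice $l=1$ is enough) and for each measurable $A\subset\mathbb{R}^{2n}$ apply~\eqref{convlaw} to the bounded measurable function $\varphi = \mathds{1}_A$. By Remark~\ref{all} the resulting inequality is valid for \emph{every} $z\in\mathbb{R}^{2n}$, giving
\begin{equation*}
\big| p_t^z(A) - \tilde{\pi}(A) \big|
= \big| \mathbb{E}[\mathds{1}_A(z_t^z)] - \tilde{\pi}(\mathds{1}_A) \big|
\leq C_l\, e^{-t\kappa_l}\, \mathcal{K}_l(z) \left\| \frac{\mathds{1}_A - \tilde{\pi}(A)}{\mathcal{K}_l} \right\|_{L^\infty}.
\end{equation*}
Since $\mathcal{K}_l \geq 1$ everywhere by~\eqref{lyafunc} and $|\mathds{1}_A - \tilde{\pi}(A)| \leq 1$, the weighted sup-norm on the right is bounded by $1$ independently of $A$.

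Consequently $|p_t^z(A) - \tilde{\pi}(A)| \leq C_l e^{-t\kappa_l}\mathcal{K}_l(z)$ for every measurable $A$, and taking the supremum over $A$ yields
\begin{equation*}
\| p_t^z - \tilde{\pi} \|_{\textrm{TV}} \leq C_l\, e^{-t\kappa_l}\, \mathcal{K}_l(z),
\end{equation*}
which tends to $0$ as $t\to\infty$ for every fixed $z\in\mathbb{R}^{2n}$, since $\mathcal{K}_l(z)$ is finite. There is essentially no obstacle here: the content is entirely in Theorem~\ref{thmc} and Remark~\ref{all}, and the only subtlety worth flagging is that one must invoke the strong Feller property (via Remark~\ref{all}) rather than~\eqref{convlaw} alone, because otherwise the estimate would hold only for Lebesgue almost every $z$ and the claimed pointwise statement ``for all initial $z$'' would fail.
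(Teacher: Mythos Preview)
Your proof is correct and follows exactly the approach the paper indicates: the paper's own proof is the single sentence ``holds by taking $\varphi$ as indicator functions and Remark~\ref{all},'' and you have simply unpacked that sentence in full detail, including the observation that $\mathcal{K}_l\geq 1$ gives the uniform-in-$A$ bound on the weighted sup-norm.
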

The solution to the Poisson equation is given next following the direction of \cite{MR3069369}.
\begin{theorem}\label{poissonsolve}
Under Assumptions~\ref{smu} and~\ref{assump1}, if $f\in L_0^2(\tilde{\pi})$ satisfies $\frac{f}{\mathcal{K}_l}\in L^\infty$ for some $l\in\mathbb{N}$, then there exists a unique solution $\phi\in L_0^2(\tilde{\pi})$ to the Poisson equation~\eqref{poisson0}. 
Moreover, the solution is given by
\begin{equation}\label{poissol}
\phi = \int_0^\infty P_t(f) dt.
\end{equation}
\end{theorem}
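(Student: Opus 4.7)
The plan is to verify that the candidate $\phi := \int_0^\infty P_t(f)\,dt$ is well-defined in $L^2(\tilde{\pi})$, satisfies $-L\phi = f$, and is the unique zero-mean solution. The main input is the semigroup decay estimate \eqref{convlaw} from Theorem~\ref{thmc}; the main obstacle is passing the operator $L$ through the improper integral, for which closedness of $L$ on $L^2(\tilde{\pi})$ will be used.

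First, since $f \in L_0^2(\tilde{\pi})$ with $f/\mathcal{K}_l \in L^\infty$, applying \eqref{convlaw} with $\varphi = f$ yields
\begin{equation*}
|P_t(f)(z)| \leq C_l e^{-\kappa_l t}\, \mathcal{K}_l(z)\, \|f/\mathcal{K}_l\|_{L^\infty}
\end{equation*}
for Lebesgue a.e.\ $z \in \mathbb{R}^{2n}$ and every $t \geq 0$. The right-hand side is integrable in $t$ over $[0,\infty)$, so I would define $\phi(z) := \int_0^\infty P_t(f)(z)\,dt$ pointwise, obtaining $|\phi(z)| \leq (C_l/\kappa_l)\,\mathcal{K}_l(z)\,\|f/\mathcal{K}_l\|_{L^\infty}$. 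Since $\mathcal{K}_l^2 = \mathcal{K}_{2l}$ is $\tilde{\pi}$-integrable by \eqref{fin}, this gives $\phi \in L^2(\tilde{\pi})$, and $\tilde{\pi}(\phi) = 0$ follows by Fubini from $\tilde{\pi}(P_t f) = \tilde{\pi}(f) = 0$.

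Next, for $T > 0$ set $\phi_T := \int_0^T P_t(f)\,dt$. Dominated convergence with the same envelope yields $\phi_T \to \phi$ in $L^2(\tilde{\pi})$. For $f \in L^2(\tilde{\pi})$ one has $P_t(f) \in \mathrm{Dom}(L)$ and the Bochner integral $\phi_T$ lies in $\mathrm{Dom}(L)$ with
\begin{equation*}
L\phi_T \;=\; \int_0^T L P_t(f)\,dt \;=\; \int_0^T \partial_t P_t(f)\,dt \;=\; P_T(f) - f
\end{equation*}
as an identity in $L^2(\tilde{\pi})$ (standard semigroup calculus). The bound above, together with dominated convergence, shows $P_T(f) \to 0$ in $L^2(\tilde{\pi})$, so $L\phi_T \to -f$ in $L^2(\tilde{\pi})$. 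Since $L$ is a closed operator on $L^2(\tilde{\pi})$ as the generator of a strongly continuous semigroup, we conclude $\phi \in \mathrm{Dom}(L)$ and $-L\phi = f$.

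Finally, uniqueness follows from ergodicity: if $\psi_1, \psi_2 \in L_0^2(\tilde{\pi})$ both solve the Poisson equation, then $\psi := \psi_1 - \psi_2$ satisfies $L\psi = 0$, hence $P_t \psi = \psi$ for all $t \geq 0$. By Corollary~\ref{tvcor}, $P_t\psi \to \tilde{\pi}(\psi) = 0$ either in $L^2$ or at least weakly, forcing $\psi = 0$. The subtle step throughout is the bookkeeping justifying $L\phi_T = P_T(f) - f$ and the exchange of $L$ with the Bochner integral; everything else reduces to the exponential bound from Theorem~\ref{thmc} combined with \eqref{fin}.
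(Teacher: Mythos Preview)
Your proof is correct and follows essentially the same route as the paper: truncate to $\phi_T=\int_0^T P_t f\,dt$, establish $L\phi_T=P_Tf-f$, and pass to the limit using closedness of $L$ together with \eqref{convlaw} and \eqref{fin} (for index $2l$, via $\mathcal{K}_l^2=\mathcal{K}_{2l}$). Two small remarks on the step you flag as subtle. First, the intermediate claim $P_t(f)\in\mathrm{Dom}(L)$ for arbitrary $f\in L^2(\tilde\pi)$ is not needed and is not guaranteed for general $C_0$-semigroups; the paper instead computes $Lg_T$ directly from the definition \eqref{l2gen}, using Fubini and strong continuity (Proposition~\ref{strongcont}) to obtain $\lim_{s\to0}\frac{1}{s}\big(\int_s^{T+s}-\int_0^T\big)P_uf\,du=P_Tf-f$, which is precisely the standard identity you invoke. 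Second, your uniqueness argument is a welcome addition (the paper's proof does not spell it out), but Corollary~\ref{tvcor} gives only total-variation convergence of the transition probabilities, which does not immediately yield $P_t\psi\to0$ for general $\psi\in L^2$; a cleaner route is to observe that $P_t\psi=\psi$ makes $\psi$ an invariant function and then use ergodicity (uniqueness of $\tilde\pi$ from Theorem~\ref{thmc}) directly, or to apply the mean ergodic theorem in $L^2(\tilde\pi)$.
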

\begin{proof}
For $T>0$, let
\begin{equation*}
g_T := \int_0^T P_t(f) dt.
\end{equation*}
Note that $g_T\in L^2(\tilde{\pi})$ for $T\in\mathbb{R}_+ \cup \{ \infty \}$ and by Theorem~\ref{thmc}
\begin{equation}\label{gTl2}
g_T\rightarrow \int_0^\infty P_t(f) dt
\end{equation}
in $L^2(\tilde{\pi})$ as $T\rightarrow\infty$, specifically~\eqref{convlaw} with $\varphi=f$ and using~\eqref{fin} for $2l$ in place of $l$. Applying $L$, it holds that
\begin{equation*}
Lg_T = \lim_{s\rightarrow 0} \frac{P_s(g_T) - g_T}{s} = \lim_{s\rightarrow 0} \frac{1}{s} \bigg(\!\int_s^{T+s}-\int_0^T\bigg) P_u(f) du = P_T(f) - f,
\end{equation*}
where the exchange in the order of integration is justified by Fubini,~\eqref{convlaw} and the last equality follows by the strong continuity of $(P_t)_{t\geq 0}$ given by Proposition~\ref{strongcont} in Section~\ref{proofs}. Inequalities~\eqref{convlaw} and~\eqref{fin} (with $2l$ in place of $l$) also give
\begin{equation}\label{PTl2}
P_T(f) \rightarrow 0 \qquad \textrm{in } L^2(\tilde{\pi})
\end{equation}
as $T\rightarrow\infty$, so that since $L$ is a closed operator, equations~\eqref{poisson0} and~\eqref{poissol} hold. In addition, $\int\phi d\tilde{\pi} = 0$ follows from the invariance of $\tilde{\pi}$, Theorem~\ref{thmc} and Fubini's theorem.
\end{proof}
We proceed to state the central limit theorem for the solution to~\eqref{langevin00}.
\begin{theorem}\label{theclt}
Under Assumptions~\ref{smu} and~\ref{assump1}, if $f\in L^2(\tilde{\pi})$ satisfies $\frac{f}{\mathcal{K}_l}\in L^\infty$ for some $l\in\mathbb{N}$, the random variable $\frac{1}{\sqrt{t}}\int_0^t (f(z_s) - \pi(f)) ds$ converges in distribution to $\mathcal{N}(0,\sigma_f^2)$ as $t\rightarrow \infty$ for any initial distribution, where
\begin{equation}\label{cltform}
\sigma_f^2 = 2\int \phi (f-\pi(f)) d\tilde{\pi}
\end{equation}
and $\phi\in L_0^2(\tilde{\pi})$ is the solution to~\eqref{poisson0}.
\end{theorem}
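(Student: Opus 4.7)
The plan is to implement the standard Kipnis--Varadhan martingale decomposition using the Poisson solution $\phi \in L_0^2(\tilde\pi)$ furnished by Theorem~\ref{poissonsolve}. Writing $\tilde f := f - \pi(f)$, Dynkin's formula (equivalently It\^o's rule, once the required smoothness of $\phi$ is available) yields
$$\int_0^t \tilde f(z_s)\,ds \;=\; \phi(z_0) - \phi(z_t) + M_t^\phi, \qquad M_t^\phi \;:=\; \int_0^t \nabla\!_p\phi(z_s)^\top \sqrt{2\Gamma}\,dW_s,$$
where $M_t^\phi$ is a martingale with predictable quadratic variation $\langle M^\phi\rangle_t = 2\int_0^t \nabla\!_p\phi(z_s)^\top \Gamma \nabla\!_p\phi(z_s)\,ds$ as can be read off the diffusion part of $\mathcal L$. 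Dividing by $\sqrt t$ reduces the theorem to proving (a) $t^{-1/2}(\phi(z_0)-\phi(z_t))\to 0$ in probability and (b) $M_t^\phi/\sqrt t$ converges in law to a centred Gaussian with variance $\sigma_f^2$, the conclusion then following by Slutsky.

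For (a), under a stationary start $z_0\sim\tilde\pi$ the bound is immediate from Chebyshev and $\phi\in L^2(\tilde\pi)$; for a general initial distribution, I would fix some large $s>0$, apply the strong Markov property at time $s$, and use Corollary~\ref{tvcor} together with the Lyapunov inequality~\eqref{lyapunov} to bound $\mathbb E[\phi(z_t)^2]$ uniformly in $t\geq s$, exploiting the pointwise growth bound on $|\phi|$ inherited from~\eqref{poissol} and~\eqref{convlaw}. For (b), the martingale CLT (e.g.\ Rebolledo) reduces matters to verifying
$$\frac{1}{t}\langle M^\phi\rangle_t \;\xrightarrow{\mathbb P}\; 2\int \nabla\!_p\phi^\top \Gamma \nabla\!_p\phi\,d\tilde\pi$$
and a Lindeberg-type negligibility of jumps (automatic since $M^\phi$ is continuous). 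The convergence of the quadratic variation is the ergodic theorem applied to the $L^1(\tilde\pi)$ observable $\nabla\!_p\phi^\top \Gamma \nabla\!_p\phi$ and holds for arbitrary initial distribution because $\tilde\pi$ is the unique invariant measure (Theorem~\ref{thmc}). The variance is then put in the form~\eqref{cltform} by integration by parts against $\tilde\pi$: the Hamiltonian part $p^\top M^{-1}\nabla\!_q - \nabla U^\top \nabla\!_p$ of $\mathcal L$ is antisymmetric in $L^2(\tilde\pi)$, while $-p^\top M^{-1}\Gamma\nabla\!_p + \nabla\!_p^\top \Gamma\nabla\!_p$ is symmetric with Dirichlet form $\int \nabla\!_p\phi^\top\Gamma\nabla\!_p\phi\,d\tilde\pi$, so $-\int \phi\,\mathcal L\phi\,d\tilde\pi = \int \nabla\!_p\phi^\top\Gamma\nabla\!_p\phi\,d\tilde\pi$, giving $\sigma_f^2 = 2\int \phi\tilde f\,d\tilde\pi$ as asserted.

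The main technical obstacle I anticipate is the justification of It\^o's formula for $\phi$, since Theorem~\ref{poissonsolve} constructs $\phi$ only via the integral representation~\eqref{poissol} rather than as a classical pointwise solution. Two natural routes are available: either invoke hypoelliptic regularity (the generator~\eqref{infgen} satisfies H\"ormander's bracket condition, so $\phi$ inherits $C^\infty$ smoothness from~\eqref{poisson0}), or approximate $\phi$ by the truncations $g_T = \int_0^T P_t f\,dt$ for which $L g_T = P_T f - f$ is already known, apply It\^o to each $g_T$, and pass to the limit in the associated martingale problem using the $L^2(\tilde\pi)$ convergences~\eqref{gTl2}--\eqref{PTl2}. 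The second delicate point, as noted in step~(a), is the treatment of non-stationary initial laws in controlling $\phi(z_t)/\sqrt t$ and in the ergodic averaging of $\langle M^\phi\rangle_t/t$; both reduce to moment bounds that follow from the Lyapunov inequality~\eqref{lyapunov} and the decay estimate~\eqref{convlaw}.
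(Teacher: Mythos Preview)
Your argument is correct in outline and is essentially the Kipnis--Varadhan martingale decomposition unpacked. The paper, however, takes a much shorter route: it simply observes that Corollary~\ref{tvcor} (total-variation ergodicity) and Theorem~\ref{poissonsolve} (existence of $\phi\in L_0^2(\tilde\pi)$ solving $-L\phi=\tilde f$) verify the hypotheses of a ready-made CLT, namely Theorem~2.6 in Bhattacharya~\cite{MR663900} (see also Theorem~3.1 in~\cite{MR3069369}), and cites that result. No It\^o formula, no explicit martingale, no Dirichlet-form computation appears.

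The practical difference is that the cited theorem works at the level of the abstract $L^2(\tilde\pi)$ generator~\eqref{l2gen}: the identity $\sigma_f^2=2\langle\phi,\tilde f\rangle_{\tilde\pi}=-2\langle\phi,L\phi\rangle_{\tilde\pi}$ falls out directly, and the martingale is the Dynkin martingale $\phi(z_t)-\phi(z_0)-\int_0^t L\phi(z_s)\,ds$, which is well defined for any $\phi\in\mathcal D(L)$ without any pointwise regularity. Your route, by contrast, commits to the concrete It\^o decomposition and therefore needs (i) enough smoothness of $\phi$ to apply It\^o, and (ii) $\nabla\!_p\phi\in L^2(\tilde\pi)$ so that the quadratic variation has a finite ergodic limit and the integration-by-parts identity $-\int\phi\,\mathcal L\phi\,d\tilde\pi=\int\nabla\!_p\phi^\top\Gamma\nabla\!_p\phi\,d\tilde\pi$ is legitimate. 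In the paper these are established only \emph{after} Theorem~\ref{theclt}, in Proposition~\ref{distprop} and Lemma~\ref{nabphi} (and the former gives classical regularity only under the extra hypothesis $f\in C^\infty$). So your self-contained proof is valid but logically heavier: it either pulls forward those later results or must carry out the $g_T$-approximation you sketch in full; the paper sidesteps all of this by delegating to~\cite{MR663900}.
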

\begin{proof}
By Corollary~\ref{tvcor} and Theorem~\ref{poissonsolve}, the result follows by Theorem 2.6 in \cite{MR663900}. See also Theorem 3.1 in \cite{MR3069369}.
\end{proof}

\section{Directional derivative of $\sigma^2$}\label{fdsec}
In this section, we give a number of natural preliminary results that pave the path for the main result in 
Theorem~\ref{funcder}, in which a formula for the derivative~\eqref{funcder00} of $\sigma^2$ with respect to $\Gamma$ is provided. The proofs of Proposition~\ref{distprop}, Lemma~\ref{nabphi} and Theorem~\ref{funcder} are deferred to Section~\ref{proofs}. 

\subsection{Preliminary results and the main formula}\label{premsec}
In order to establish the formula for the directional derivative, heavy use of the differential operator form~\eqref{infgen} for the generator is made. Proposition~\ref{distprop} establishes that $\phi$ solves the Poisson equation also as a partial differential equation, which makes use of the Feynman-Kac representation formula for the solution to the Kolmogorov (backward) equation.

\begin{prop}\label{distprop}
Under Assumptions~\ref{smu} and~\ref{assump1}, if $f\in L_0^2(\tilde{\pi})$ satisfies $\frac{f}{\mathcal{K}_l}\in L^\infty$ for some $l\in\mathbb{N}$, the solution $\phi$ given by~\eqref{poissol} solves $-\mathcal{L}\phi = f$ in the distributional sense for $\mathcal{L}$ given by~\eqref{infgen}, hence classically if in addition $f\in C^\infty$.
\end{prop}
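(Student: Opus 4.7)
My approach is to test against $\psi \in C_c^\infty(\mathbb{R}^{2n})$ and reduce the distributional identity to an $L^2(\tilde{\pi})$-pairing identity. Since the density of $\tilde{\pi}$ with respect to Lebesgue measure is smooth and strictly positive, ``$-\mathcal{L}\phi = f$ in $\mathcal{D}'(\mathbb{R}^{2n})$'' is equivalent to
\[
\int \phi\, \mathcal{L}^*\psi\, d\tilde{\pi} = -\int f\psi\, d\tilde{\pi}\qquad \forall\, \psi\in C_c^\infty(\mathbb{R}^{2n}),
\]
where $\mathcal{L}^*$ denotes the formal $L^2(\tilde{\pi})$-adjoint of $\mathcal{L}$ (same dissipative part as in~\eqref{infgen}, Hamiltonian drift reversed in sign). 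I would then prove this identity by substituting the representation $\phi=\int_0^\infty P_t f\,dt$ from Theorem~\ref{poissonsolve} into the left-hand side.

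My plan, for fixed $\psi\in C_c^\infty$, is to: swap the order of integration, integrate by parts in space to move $\mathcal{L}^*$ from $\psi$ onto $P_t f$, invoke the backward Kolmogorov equation to identify $\mathcal{L}P_t f$ with $\partial_t P_t f$, and then integrate in time. The $t$--$z$ Fubini exchange will be controlled by $\int_0^\infty \|P_t f\|_{L^2(\tilde{\pi})}\,dt < \infty$, where $L^2$-contractivity of $P_t$ handles small $t$ and the exponential decay $\|P_t f\|_{L^2(\tilde{\pi})} \le C e^{-\kappa t}$ at large $t$ follows from~\eqref{convlaw} taken with $2l$ in place of $l$ together with~\eqref{fin}. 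The spatial integration by parts will rely on hypoellipticity of $\mathcal{L}$: H\"ormander's bracket condition holds trivially for~\eqref{langevin00} because a single commutator of the drift with $\nabla\!_p$ recovers $\nabla\!_q$, so $P_t f \in C^\infty(\mathbb{R}^{2n})$ for every $t>0$ and, by the Feynman--Kac representation, satisfies $\partial_t P_t f = \mathcal{L} P_t f$ pointwise as the classical solution of the backward Kolmogorov equation. Because $\psi$ has compact support,
\[
\int P_t f\cdot \mathcal{L}^*\psi\, d\tilde{\pi} \;=\; \int (\mathcal{L} P_t f)\psi\, d\tilde{\pi} \;=\; \int (\partial_t P_t f)\psi\, d\tilde{\pi}
\]
with no boundary contributions. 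Integrating in $t$, using Fubini again and the fundamental theorem of calculus, together with $P_T f \to 0$ in $L^2(\tilde{\pi})$ from~\eqref{PTl2} and $P_0 f = f$, will leave $\int \psi[\,0 - f\,]\, d\tilde{\pi} = -\int f\psi\, d\tilde{\pi}$, as required.

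For the classical statement when $f \in C^\infty$, I would appeal once more to the hypoellipticity of $\mathcal{L}$: the identity $-\mathcal{L}\phi = f$ holds in $\mathcal{D}'$ with a smooth right-hand side and a hypoelliptic $\mathcal{L}$, so H\"ormander's regularity theorem forces $\phi \in C^\infty(\mathbb{R}^{2n})$, and then the equation holds classically at every point.

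The hard part will be making the spatial integration by parts and the double Fubini fully rigorous: both rely on hypoelliptic smoothing of $P_t f$ at each $t>0$ together with the compact support of $\psi$ to discard boundary terms, and on quantitative control of $\partial_t P_t f$ in time. The integrable singularity at $t=0^+$ must be absorbed by pairing $P_t f$ with $\mathcal{L}^*\psi \in L^2(\tilde{\pi})$ and using contractivity, while the $t \to \infty$ tail is handled by the exponential decay from~\eqref{convlaw}.
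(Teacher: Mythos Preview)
Your approach is essentially the same as the paper's: both reduce the distributional statement to showing $\langle \phi,\mathcal{L}^*\psi\rangle_{\tilde{\pi}}=-\langle f,\psi\rangle_{\tilde{\pi}}$ for $\psi\in C_c^\infty$, plug in the representation $\phi=\int_0^\infty P_t f\,dt$, use the backward Kolmogorov equation to turn the spatial pairing into a time derivative, and then integrate in $t$ with strong continuity at $t=0^+$ and exponential decay at $t=\infty$. The only substantive difference is how the backward equation is justified. The paper inserts two approximations---a time truncation to $[1/T,T]$ and a replacement of $f$ by $f_k\in C_c^\infty$---so that Friedman's Feynman--Kac theorem applies directly to $f_k$ and gives $\partial_t P_t f_k=\mathcal{L}P_t f_k$ classically; the limits $k\to\infty$, $T\to\infty$ are then taken at the level of the $L^2(\tilde{\pi})$ pairing with $\psi$. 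You instead claim that hypoelliptic smoothing alone makes $P_t f$ a classical solution of the backward equation for every $t>0$. That conclusion is correct, but your justification (``by the Feynman--Kac representation'') skips the real work: for merely measurable $f$ with polynomial growth one must either (i) show $P_t f$ is a \emph{distributional} solution of $(\partial_t-\mathcal{L})u=0$ on $(0,\infty)\times\mathbb{R}^{2n}$ and then upgrade via parabolic H\"ormander, or (ii) invoke quantitative bounds on the transition density and its $z$-derivatives to differentiate $P_t f(z)=\int f(\zeta)p(t,z,\zeta)\,d\zeta$ under the integral. The paper's $C_c^\infty$-approximation is precisely the device that sidesteps this subtlety, so your argument would benefit from either adding that approximation or making one of (i)/(ii) explicit.
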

In order for the integral in a formula like~\eqref{funcder00} to be finite, control on the derivatives in $p$ is required. This will also be used in the proof of Theorem~\ref{funcder} and it is given by the following lemma.
\begin{lemma}\label{nabphi}
Under Assumptions~\ref{smu} and~\ref{assump1}, if $f\in L_0^2(\tilde{\pi})$ satisfies $\frac{f}{\mathcal{K}_l}\in L^\infty$ for some $l\in\mathbb{N}$, the weak derivative in $p$ of the solution $\phi$ to $-\mathcal{L}\phi = f$ satisfies
\begin{equation*}
\int \abs*{\nabla\!_p \phi}^2 d\tilde{\pi} < \infty.
\end{equation*}
\end{lemma}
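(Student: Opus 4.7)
The proof is a classical energy estimate exploiting the symmetric/antisymmetric splitting
\[
\mathcal{L} = \mathcal{L}_{\mathrm{Ham}} + \mathcal{L}_{\mathrm{OU}}, \qquad \mathcal{L}_{\mathrm{Ham}} := p^\top M^{-1}\nabla_q - \nabla U(q)^\top \nabla_p, \qquad \mathcal{L}_{\mathrm{OU}} := -p^\top M^{-1}\Gamma \nabla_p + \nabla_p^\top \Gamma \nabla_p
\]
of the generator in $L^2(\tilde{\pi})$. A short integration by parts using $\nabla_q\tilde{\pi} = -\nabla U\,\tilde{\pi}$ and $\nabla_p\tilde{\pi} = -M^{-1}p\,\tilde{\pi}$ shows that $\mathcal{L}_{\mathrm{Ham}}$ is antisymmetric and $\mathcal{L}_{\mathrm{OU}}$ is symmetric, with associated Dirichlet form
\[
-\int u\,\mathcal{L}_{\mathrm{OU}} u\,d\tilde{\pi} = \int (\nabla_p u)^\top \Gamma \nabla_p u\,d\tilde{\pi}
\]
for sufficiently regular $u$. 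Pairing the distributional identity $-\mathcal{L}\phi = f$ from Proposition~\ref{distprop} against $\phi$ would then give, formally,
\[
\int (\nabla_p\phi)^\top \Gamma \nabla_p\phi\,d\tilde{\pi} \;=\; \langle \phi, f\rangle_{\tilde{\pi}} \;\leq\; \|\phi\|_{L^2(\tilde{\pi})}\|f\|_{L^2(\tilde{\pi})},
\]
and the conclusion would follow since $\Gamma - \lambda_{\min}(\Gamma) I_n$ is positive semidefinite with $\lambda_{\min}(\Gamma)>0$. That $\phi \in L^2(\tilde{\pi})$ is immediate from~\eqref{poissol} and~\eqref{convlaw}: the integrand satisfies $|P_tf(z)| \leq C_l e^{-\kappa_l t}\mathcal{K}_l(z)\|f/\mathcal{K}_l\|_\infty$, so $|\phi|\leq C\mathcal{K}_l$, and~\eqref{fin} (with $2l$ in place of $l$) closes this.

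\textbf{Rigorous justification by approximation.} The nontrivial step is making the Dirichlet form identity precise, since $\nabla_p\phi$ is a priori only a distribution. I would approximate $f$ by $f_k\in C_c^\infty(\mathbb{R}^{2n})$ with $f_k\to f$ in $L^2(\tilde{\pi})$ and $\|f_k/\mathcal{K}_l\|_\infty$ bounded uniformly in $k$. Setting $\phi_k := \int_0^\infty P_t f_k\,dt$, one has $-\mathcal{L}\phi_k = f_k$ classically (smoothness from hypoellipticity of $\mathcal{L}$ since $\partial_p$ together with the bracket $[M^{-1}p\cdot\nabla_q,\partial_p]$ span), and $|\phi_k|+|\nabla\phi_k|\leq C\mathcal{K}_l$ uniformly in $k$. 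For each $k$ the identity is obtained by introducing a smooth cut-off $\chi_R$ supported in $\{|q|^2+|p|^2\leq R^2\}$, integrating by parts in $\int \chi_R \phi_k(-\mathcal{L}\phi_k)\,d\tilde{\pi}$, and sending $R\to\infty$. The delicate terms are the commutators $[\mathcal{L},\chi_R]\phi_k$; their vanishing in $L^1(\tilde{\pi})$ uses the Gaussian decay of $\tilde{\pi}$ in $p$, the growth bound~\eqref{assump2eq} on $\nabla U$, the quadratic lower bound~\eqref{Uquad} on $U$, and the $\mathcal{K}_l$-growth of $\phi_k$. This cut-off step is the main technical obstacle.

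\textbf{Passage to the limit.} By dominated convergence applied inside~\eqref{poissol} using~\eqref{convlaw}, $f_k\to f$ in $L^2(\tilde{\pi})$ implies $\phi_k\to\phi$ in $L^2(\tilde{\pi})$. The per-$k$ identity
\[
\int (\nabla_p\phi_k)^\top \Gamma \nabla_p\phi_k\,d\tilde{\pi} = \langle \phi_k, f_k\rangle_{\tilde{\pi}}
\]
then yields a uniform-in-$k$ bound on $\nabla_p\phi_k$ in $L^2(\tilde{\pi})$. Extracting a weak $L^2(\tilde{\pi})$ limit $g$ and using $\phi_k\to\phi$ in $L^2(\tilde{\pi})$, the standard test-function argument---integrating $\int (\nabla_p\phi_k)\psi\,d\tilde{\pi}$ by parts against $\psi\in C_c^\infty$ using $\nabla_p\tilde{\pi} = -M^{-1}p\,\tilde{\pi}$, and passing to the limit---identifies $g$ with the distributional $p$-derivative of $\phi$. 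Weak lower semicontinuity of $u\mapsto \int|u|^2\,d\tilde{\pi}$ then yields $\int |\nabla_p\phi|^2 d\tilde{\pi} < \infty$ as required.
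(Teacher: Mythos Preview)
Your overall architecture---energy identity from the symmetric/antisymmetric splitting, approximation by compactly supported $f_k$, cut-off, passage to the limit---is exactly the paper's. But there is a genuine gap in the cut-off step.

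You assert $|\phi_k|+|\nabla\phi_k|\leq C\mathcal{K}_l$ uniformly in $k$. The bound on $|\phi_k|$ follows from~\eqref{convlaw} as you say, but nothing in the paper's framework gives a pointwise bound on $|\nabla\phi_k|$: hypoellipticity gives smoothness, not quantitative gradient control, and~\eqref{convlaw} bounds $P_t$, not $\nabla P_t$. Now look at what the commutator actually contains. Since $\mathcal{L}$ is second order in $p$, $[\mathcal{L},\chi_R]\phi_k$ includes the cross term $2(\nabla_p\chi_R)^\top\Gamma\nabla_p\phi_k$, which involves the very quantity $\nabla_p\phi_k$ you are trying to bound. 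Without the gradient estimate, the argument is circular; the ingredients you list (decay of $\tilde{\pi}$, growth of $\nabla U$, $\mathcal{K}_l$-growth of $\phi_k$) do not control that term.

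The paper resolves this by integrating the cross term by parts \emph{once more}: writing $\phi_k\nabla_p\chi_R^\top\Gamma\nabla_p\phi_k=\tfrac12\nabla_p\chi_R^\top\Gamma\nabla_p(\phi_k^2)$ and pushing the derivative onto the cut-off gives $-\tfrac12\int\phi_k^2(-M^{-1}p+\nabla_p)^\top\Gamma\nabla_p\chi_R\,d\tilde{\pi}$, which only involves $\phi_k^2$ and second derivatives of the cut-off. The paper packages this (together with the analogous treatment of the Hamiltonian term) using a tailored cut-off $\eta_r(z)=\nu_r(\ln(1+|z|^2))$, Lemma~\ref{etaalgo}, designed so that both $|\mathcal{L}\eta_r|$ and $(1+|z|)|\nabla\eta_r|$ are uniformly $O(1/r)$; this makes the error terms $\leq \tfrac{C}{r}\|\phi_k\|_{L^2(\tilde{\pi})}^2$ without any gradient information on $\phi_k$. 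Once this is in place, the paper also obtains the Cauchy property of $\nabla_p\phi_k$ directly (rather than via weak limits), which yields strong $L^2(\tilde{\pi})$ convergence to $\nabla_p\phi$.
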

The following preliminary result is about the solution $\phi$ under a momentum reversal. It turns out that this is the solution to the Poisson equation associated to the formal $L^2(\tilde{\pi})$-adjoint $\mathcal{L}^*$ of $\mathcal{L}$ which appears in the proof of Theorem~\ref{funcder}. 
\begin{lemma}\label{tildelem}
Let Assumptions~\ref{smu} and~\ref{assump1} hold and let $f\in C^\infty$ and $\tilde{\phi}\in L_0^2(\tilde{\pi})\cap C^\infty$ be given by~\eqref{phitilde}, 
where $\phi$ is a classical solution to $-\mathcal{L}\phi=f$. Then $\tilde{\phi}$ is a classical solution to the equation
\begin{equation}\label{adjeq}
-\mathcal{L}^* \tilde{\phi} = f,
\end{equation}
where $\mathcal{L}^*$ is the formal $L^2(\tilde{\pi})$-adjoint of $\mathcal{L}$ given by 
\begin{equation*}
\mathcal{L}^* = -p^\top M^{-1} \nabla_q + \nabla U(q)^\top \nabla\!_p - p^\top M^{-1} \Gamma \nabla\!_p + \nabla\!_p^\top \Gamma\nabla\!_p.
\end{equation*}
\end{lemma}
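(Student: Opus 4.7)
The plan is to proceed in two essentially computational steps: first verifying that the operator $\mathcal{L}^*$ displayed in the statement really is the formal $L^2(\tilde\pi)$-adjoint of $\mathcal{L}$, and then checking that the momentum-reversed function $\tilde\phi$ satisfies the adjoint Poisson equation by a direct chain-rule calculation.

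For the first step, I would decompose $\mathcal{L}$ into its Hamiltonian part $\mathcal{L}_H := p^\top M^{-1}\nabla_q - \nabla U(q)^\top \nabla_p$ and its Ornstein–Uhlenbeck part $\mathcal{L}_{OU} := -p^\top M^{-1}\Gamma\nabla_p + \nabla_p^\top \Gamma\nabla_p$. Integrating by parts against the density $\tilde\pi \propto e^{-U(q)-p^\top M^{-1}p/2}$, the two first-order terms in $\mathcal{L}_H$ each produce a boundary term that vanishes on compactly supported test functions and an interior term whose weight cancels because the combination $p^\top M^{-1}\nabla_q U - \nabla U(q)^\top M^{-1}p$ is zero; the net effect is that $\mathcal{L}_H$ is anti-symmetric with respect to $\tilde\pi$. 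For $\mathcal{L}_{OU}$, viewing it as the generator of an OU process in $p$ with invariant density $\propto e^{-p^\top M^{-1}p/2}$ (and noting the $q$-dependence is frozen), the standard identity shows that it is symmetric. Adding the two parts gives exactly the expression for $\mathcal{L}^*$ in the statement.

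For the second step, set $R(q,p) = (q,-p)$ so that $\tilde\phi = \phi\circ R$. Using the chain rule,
\begin{equation*}
\nabla_q \tilde\phi(q,p) = (\nabla_q\phi)(q,-p), \qquad \nabla_p \tilde\phi(q,p) = -(\nabla_p\phi)(q,-p), \qquad \nabla_p^\top \Gamma \nabla_p \tilde\phi(q,p) = (\nabla_p^\top \Gamma\nabla_p\phi)(q,-p),
\end{equation*}
where the last identity is obtained because the two sign changes from differentiating twice with respect to $p$ cancel. Substituting these into the displayed formula for $\mathcal{L}^*\tilde\phi(q,p)$ and comparing term by term with $(\mathcal{L}\phi)(q,-p)$ (which uses the momentum $-p$ in its first-order factors), one sees that every corresponding coefficient matches, so $\mathcal{L}^*\tilde\phi(q,p) = (\mathcal{L}\phi)(q,-p)$. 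Since $\phi$ is a classical solution of $-\mathcal{L}\phi = f$ and in our setting the observable $f$ depends only on $q$ (hence is invariant under $R$), evaluating at $(q,-p)$ and negating gives $-\mathcal{L}^*\tilde\phi(q,p) = f(q,-p) = f(q,p)$, which is exactly~\eqref{adjeq}. The membership $\tilde\phi\in L^2_0(\tilde\pi)\cap C^\infty$ is inherited directly from $\phi$ because $\tilde\pi$ is invariant under $R$.

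No serious obstacle is expected: the argument is a bookkeeping exercise in integration by parts plus a change-of-variables under $R$. The only point that deserves care is the sign tracking in the three identities above, and the tacit but essential use of the fact that the observable $f$ is a function of $q$ alone (which is the standing convention for the whole paper); if one wished to state the lemma for general $f\in C^\infty$, the right-hand side of~\eqref{adjeq} would need to be replaced by $f\circ R$.
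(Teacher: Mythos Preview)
Your proposal is correct and follows essentially the same route as the paper: the paper's proof is the one-line assertion that $\mathcal{L}^*\tilde\phi = \widetilde{\mathcal{L}\phi}$ follows by a straightforward calculation, which is precisely your second step. Your first step (verifying that the displayed $\mathcal{L}^*$ really is the formal adjoint) and your closing remark about needing $f$ to depend only on $q$ are both welcome additions that the paper leaves implicit.
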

\begin{proof}
The equation $\mathcal{L}^*\tilde{\phi} = \widetilde{\mathcal{L} \phi}$ follows by a straightforward calculation.
\end{proof}
If $f$ is not smooth, the equation~\eqref{adjeq} still holds in the distributional sense, since for $g\in C_c^\infty$ and keeping the notation~\eqref{phitilde} for the momentum reversal on arbitrary functions,
\begin{equation*}
\int f g d\tilde{\pi} = \int f\tilde{g} d\tilde{\pi} = -\int \phi \mathcal{L}^* \tilde{g} d\tilde{\pi} = -\int \phi \widetilde{\mathcal{L}g} d\tilde{\pi} = -\int \tilde{\phi} \mathcal{L} g d\tilde{\pi}.
\end{equation*}

The main formula of this section for the directional derivative of the asymptotic variance is given next. 
The directional derivative of $E:\mathbb{S}_{++}^n\rightarrow\mathbb{R}$ at $\Gamma\in\mathbb{S}_{++}^n$ in the direction $\delta\Gamma\in\mathbb{S}_{++}^n$ is denoted by $
dE(\Gamma).\delta\Gamma = \lim_{\epsilon\rightarrow 0}\frac{1}{\epsilon}(E(\Gamma + \epsilon\delta\Gamma) - E(\Gamma))$ 
whenever the limit exists. 
\begin{theorem}\label{funcder}
Under Assumptions~\ref{smu} and~\ref{assump1}, if $f = f(q) \in L_0^2(\pi)$ is continuous, satisfies $\frac{f}{\mathcal{K}_l}\in L^\infty$ for some $l\in\mathbb{N}$ and there exists $\epsilon'>0$ such that $\Gamma,\Gamma + \epsilon\delta\Gamma\in \mathbb{S}_{++}^n$ for $0<\epsilon \leq \epsilon'$, then the directional derivative of the asymptotic variance $\sigma^2$ at $\Gamma$ in the direction $\delta\Gamma$ has the form
\begin{equation}\label{funcdereq}
d\sigma^2.\delta\Gamma = -2 \int (\nabla\!_p\phi)^\top \delta\Gamma \nabla\!_p \tilde{\phi} d\tilde{\pi},
\end{equation}
where $\phi$ is the solution~\eqref{poissol} to the Poisson equation for the dynamics~\eqref{langevin00} at $\Gamma$ and $\tilde{\phi}$ is given by~\eqref{phitilde}.
\end{theorem}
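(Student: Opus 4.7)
The strategy rests on three observations: (i) the invariant measure $\tilde{\pi}$ is independent of $\Gamma$, so all $\Gamma$-dependence of $\sigma^2 = 2\langle\phi_\Gamma, f\rangle_{\tilde{\pi}}$ is carried by $\phi_\Gamma = (-L_\Gamma)^{-1}f$; (ii) Lemma~\ref{tildelem} identifies $\tilde{\phi}$ as the solution to the dual Poisson equation $-\mathcal{L}^*\tilde{\phi} = f$; and (iii) the generator depends linearly on $\Gamma$ with explicit perturbation
\[
\delta L := -p^\top M^{-1}\delta\Gamma\,\nabla_p + \nabla_p^\top \delta\Gamma\,\nabla_p.
\]
The plan is therefore to perturb $\Gamma$, apply the resolvent identity, pass to the adjoint problem via Lemma~\ref{tildelem}, and integrate by parts in $p$.

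First I would set $L_\epsilon := L_{\Gamma + \epsilon\delta\Gamma} = L + \epsilon\,\delta L$. Since $\Gamma + \epsilon\delta\Gamma \in \mathbb{S}_{++}^n$ for $\epsilon \in (0,\epsilon']$, Theorems~\ref{thmc} and~\ref{poissonsolve} apply to the perturbed dynamics and yield a unique $\phi_\epsilon \in L^2_0(\tilde{\pi})$ with $-L_\epsilon\phi_\epsilon = f$; the Lyapunov constants in~\eqref{convlaw}--\eqref{lyapunov} can be taken uniform in $\epsilon$ on a small neighbourhood of $0$, since the drift and diffusion coefficients depend smoothly on $\Gamma$ while $\tilde{\pi}$ itself is unchanged. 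The standard resolvent identity then gives
\[
\phi_\epsilon - \phi = \epsilon\,(-L_\epsilon)^{-1}(\delta L)\phi,
\]
and hence
\[
\frac{\sigma^2(\Gamma+\epsilon\delta\Gamma) - \sigma^2(\Gamma)}{\epsilon} = 2\,\bigl\langle (-L_\epsilon)^{-1}(\delta L)\phi,\, f\bigr\rangle_{\tilde{\pi}}.
\]
Passing to the limit $\epsilon \downarrow 0$ via the uniform resolvent bound, then transferring the resolvent to the right factor via the $L^2(\tilde{\pi})$-adjoint, I can invoke Lemma~\ref{tildelem} (interpreted distributionally; uniqueness in $L^2_0(\tilde{\pi})$ identifies $(-L^*)^{-1}f$ with $\tilde{\phi}$) to obtain
\[
d\sigma^2.\delta\Gamma = 2\,\bigl\langle (\delta L)\phi,\,\tilde{\phi}\bigr\rangle_{\tilde{\pi}}.
\]

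To finish, I would expand $(\delta L)\phi$ and integrate by parts in $p$ using $\nabla_p \log\tilde{\pi} = -M^{-1}p$:
\[
\int \tilde{\phi}\,\nabla_p^\top\delta\Gamma\,\nabla_p\phi\,d\tilde{\pi} = -\int (\nabla_p\tilde{\phi})^\top \delta\Gamma\,\nabla_p\phi\,d\tilde{\pi} + \int \tilde{\phi}\,p^\top M^{-1}\delta\Gamma\,\nabla_p\phi\,d\tilde{\pi}.
\]
Because $\delta\Gamma$ and $M^{-1}$ are symmetric, the second term on the right-hand side cancels exactly against the first-order contribution $-\int \tilde{\phi}\,p^\top M^{-1}\delta\Gamma\,\nabla_p\phi\,d\tilde{\pi}$ coming from the drift part of $\delta L$. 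Using the symmetry of $\delta\Gamma$ to swap $\phi$ and $\tilde{\phi}$ in the surviving term gives $-\int (\nabla_p\phi)^\top \delta\Gamma\,\nabla_p\tilde{\phi}\,d\tilde{\pi}$, which is~\eqref{funcdereq}.

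The hard part will be the rigorous justification of these manipulations when $f$ is only assumed continuous: in that case $\phi$ is merely a distributional solution by Proposition~\ref{distprop}, with $\nabla_p\phi \in L^2(\tilde{\pi})$ from Lemma~\ref{nabphi}. The integration by parts must therefore be performed against smooth compactly supported test functions or via a mollification of $f$, and one has to verify that the linear-in-$p$ term $p^\top M^{-1}\delta\Gamma\,\nabla_p\phi$ is integrable against $\tilde{\phi}\,d\tilde{\pi}$; this should follow from polynomial moment bounds on $\phi$ and $\tilde{\phi}$, which are available from the representation~\eqref{poissol} together with the Lyapunov estimate~\eqref{fin} applied with a sufficiently large $l$. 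The continuity $\epsilon \mapsto (-L_\epsilon)^{-1}g$ in $L^2(\tilde{\pi})$ needed for the limit likewise follows from the uniform resolvent bounds inherited from Theorem~\ref{thmc}, with the mollification argument allowing the passage from the smooth-$f$ case (where Lemma~\ref{tildelem} applies verbatim) to the general continuous $f$ via a density/stability argument for both $\phi$ and $\sigma^2$.
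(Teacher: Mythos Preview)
Your formal scheme --- resolvent identity, pass to the adjoint via Lemma~\ref{tildelem}, integrate by parts in $p$ --- is precisely the skeleton of the paper's argument, and your final integration-by-parts computation is correct. The gap is in the rigorous justification of the resolvent step. The only Poisson-equation theory available here is Theorem~\ref{poissonsolve}, which inverts $-L_\epsilon$ on right-hand sides $g$ satisfying the \emph{weighted-$L^\infty$} condition $g/\mathcal{K}_l\in L^\infty$; the paper never establishes an $L^2_0(\tilde{\pi})$ spectral gap, so the ``uniform resolvent bounds inherited from Theorem~\ref{thmc}'' you invoke are not $L^2$ operator bounds on $(-L_\epsilon)^{-1}$. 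To write $\phi_\epsilon-\phi=\epsilon(-L_\epsilon)^{-1}(\delta L)\phi$ you would therefore need $(\delta L)\phi/\mathcal{K}_l\in L^\infty$, but $(\delta L)\phi$ contains the second-order term $\nabla_p^\top\delta\Gamma\nabla_p\phi$, and neither Lemma~\ref{nabphi} (which gives only $\nabla_p\phi\in L^2(\tilde{\pi})$) nor the representation~\eqref{poissol} yields pointwise control on $\nabla_p^2\phi$ --- even after mollifying $f$ so that $\phi$ is smooth.

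The paper circumvents this by never forming $(-L_\epsilon)^{-1}(\delta L)\phi$. It mollifies $f\to f_i\in C_c^\infty$, inserts the compactly supported cutoff $\eta_k$ of Lemma~\ref{etaalgo}, and rewrites $\int\eta_k(\phi_{i,\epsilon}-\phi_i)f_i\,d\tilde{\pi}$ by substituting $f_i=-\mathcal{L}_\epsilon^*\tilde{\phi}_{i,\epsilon}$ and integrating by parts; after one such step only \emph{first} $p$-derivatives of $\phi_i$ and $\tilde{\phi}_{i,\epsilon}$ survive, and these are controlled by Lemma~\ref{nabphi}. Sending $k\to\infty$ and then $i\to\infty$ yields $\int\delta\phi_\epsilon f\,d\tilde{\pi}=-\epsilon\int(\nabla_p\phi)^\top\delta\Gamma\nabla_p(\tilde{\phi}+\delta\tilde{\phi}_\epsilon)\,d\tilde{\pi}$. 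The limit $\epsilon\to0$ is then handled not by resolvent continuity but by a direct energy estimate: testing the equation for $\delta\tilde{\phi}_\epsilon$ against itself (again with the cutoff $\eta_k$) and using the coercivity of the symmetric part $\nabla_p^\top(\Gamma+\epsilon\delta\Gamma)\nabla_p$ gives $\|\nabla_p\delta\tilde{\phi}_\epsilon\|_{L^2(\tilde{\pi})}^2=O(\epsilon)$, which is exactly what is needed to drop the $\delta\tilde{\phi}_\epsilon$ term.
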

As mentioned in the introduction, from~\eqref{funcdereq}, the direction~\eqref{DeltaGam} 
guarantees a decrease in asymptotic variance; similarly the scalar change in $\Gamma$ given by~\eqref{DeltaGam} where the outer product is replaced by a dot product guarantees a decrease in~$\sigma^2$.

\subsection{A formula using a tangent process}\label{tangentproc}
Equation~\eqref{funcdereq} has a more useful form. The first variation process of~\eqref{langevin00} is used here to calculate~\eqref{nabphi0}; this will be the main methodology used in the numerical sections. This alternative formula given in Theorem~\ref{newformprop} provides a way to avoid using a finite difference Monte Carlo estimate of the derivative of an expectation. For simplicity, we set $M = I_n$ here. The first variation process associated to~\eqref{langevin00}, denoted by $(D_pq_t,D_pp_t)\in\mathbb{R}^{n\times 2n}$ for $t\geq 0$, is defined as the matrix-valued solution to
\begin{equation} \label{Dq1}
\partial_t\begin{pmatrix} D_p q_t \\
D_p p_t \end{pmatrix} = \begin{pmatrix} 0 & I_n \\
-D^2 U(q_t) & -\Gamma \end{pmatrix}
\begin{pmatrix}
D_p q_t \\
D_p p_t
\end{pmatrix}
\end{equation}
with the initial condition $D_p q_0 = 0$, $D_p p_0 = I_n$. 
By Theorem 39 of Chapter V in \cite{MR2020294}, the partial derivatives of $(q_t,p_t)$ with respect to the initial values in $p$ indeed satisfy~\eqref{Dq1} and $(D_p q_t,D_p p_t)$ is continuous with respect to those initial values. Note that there exists a unique solution to~\eqref{Dq1} by Theorem 38 in the same chapter of \cite{MR2020294}. We omit the notational dependence of $(q_t,p_t)$ on its initial condition $(q_0,p_0) = (q,p) = z$ whenever convenient in the following.
\begin{theorem}\label{newformprop}
Let Assumptions~\ref{smu} and~\ref{assump1} hold. If in addition,
\begin{itemize}
\item there exist $U_0 >0$ and $Q\in \mathbb{S}_{++}^n$ such that for all $q\in\mathbb{R}^n$, $v\in\mathbb{R}^n$,
\begin{equation*}
v^\top D^2 U(q)v \geq U_0 \abs*{v}^2,\quad
D^2 U(q) = Q + D(q),
\end{equation*}
where $D:\mathbb{R}^n\rightarrow\mathbb{R}^{n\times n}$ is small enough everywhere, in particular,
\begin{equation}\label{bdquad}
\abs*{D(q)} \leq \hat{\lambda} := \min\bigg(\frac{\lambda_m}{2}, \frac{\lambda_m U_0^2}{8\lambda_M^2}, \frac{\lambda_m U_0}{16},\frac{U_0}{8}\sqrt{\sigma_{\min}(Q)}\bigg),
\end{equation}
where $\lambda_m,\lambda_M>0$ are respectively the smallest and largest eigenvalue of $\Gamma$ and $\sigma_{\min}(Q)$ denotes the smallest eigenvalue of $Q$, is assumed;
\item $f = f(q) \in L_0^2(\pi)$ is everywhere differentiable and 
satisfies $\frac{\abs*{f}+\abs*{\nabla f}}{\mathcal{K}_l}\in L^\infty$ for some $l\in\mathbb{N}$,
\end{itemize}
then the weak derivative $\nabla\!_p\phi$ has the form
\begin{equation}\label{newform}
\nabla\!_p \phi(q,p) = \int_0^\infty \mathbb{E}[\nabla f(q_t)^\top D_p q_t] dt,
\end{equation}
where $q_t$ solves~\eqref{langevin00} with initial condition $(q_0,p_0) = (q,p)$ and $D_p q_t$ solves~\eqref{Dq1}, the latter satisfying
\begin{equation}\label{dqdec}
\abs*{D_p q_t}^2+\abs*{D_p p_t}^2 \leq C'e^{-Ct}
\end{equation}
for some constants $C,C'>0$ independent of $(q_0,p_0)$ and $\omega\in\Omega$.
\end{theorem}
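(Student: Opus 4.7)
The statement has two parts: the pointwise exponential decay~\eqref{dqdec} of the first variation process, and the representation~\eqref{newform} for the weak derivative $\nabla_p\phi$. The plan is to establish the decay by a deterministic Lyapunov argument applied pathwise to~\eqref{Dq1}, and then use it to differentiate under the integral and expectation signs in the representation $\phi(z)=\int_0^\infty \mathbb{E}[f(q_t^z)]\,dt$ from Theorem~\ref{poissonsolve}.

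For~\eqref{dqdec}, I would fix $v\in\mathbb{R}^n$ and set $x_t:=D_pq_t\,v$, $y_t:=D_pp_t\,v$, so that $\dot x_t=y_t$ and $\dot y_t=-D^2U(q_t)x_t-\Gamma y_t$ pathwise. I would try the quadratic Lyapunov function $V(x,y)=x^\top Qx+2\varepsilon\,x^\top y+|y|^2$, which for sufficiently small $\varepsilon>0$ is equivalent to $|x|^2+|y|^2$ (using $Q\geq \sigma_{\min}(Q)\,I$). Differentiating along the tangent ODE and writing $D^2U(q_t)=Q+D(q_t)$, the cross term $2x^\top Qy-2y^\top Qx$ cancels by symmetry of $Q$ and one is left with an expression whose leading negative terms are $-2\varepsilon\,x_t^\top Qx_t$ and $-2\,y_t^\top\Gamma y_t$, together with error terms of the form $2\varepsilon|y_t|^2$, $-2\varepsilon\,x_t^\top\Gamma y_t$, $-2\varepsilon\,x_t^\top D(q_t)x_t$ and $-2\,y_t^\top D(q_t)x_t$. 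Applying Young/Cauchy--Schwarz to the mixed terms and using $|D(q_t)|\leq \hat\lambda$ together with the spectral bounds $\lambda_m I\leq\Gamma\leq\lambda_M I$, I would choose $\varepsilon$ small so that every error term is absorbed and obtain $\dot V_t\leq -c\,V_t$ with $c>0$ independent of $v$, $(q_0,p_0)$ and $\omega$. Gr\"onwall then gives $|x_t|^2+|y_t|^2\leq C'e^{-ct}(|x_0|^2+|y_0|^2)$, and taking the supremum over $|v|\leq 1$ with $(D_pq_0,D_pp_0)=(0,I_n)$ yields~\eqref{dqdec}. The particular form of $\hat\lambda$ in~\eqref{bdquad} encodes precisely the thresholds under which each of these error terms can be absorbed.

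For~\eqref{newform}, Theorem~\ref{poissonsolve} gives $\phi(z)=\int_0^\infty\mathbb{E}[f(q_t^z)]\,dt$. Inside the expectation, Theorem 39 of Chapter V in~\cite{MR2020294} yields almost-sure differentiability in the initial data with derivative $D_pq_t$; combined with the polynomial growth $|\nabla f|\leq C\mathcal{K}_l$, the moment bounds on $q_t$ from Assumption~\ref{assump1} via~\eqref{lyapunov}, and continuous dependence of $(q_t,D_pq_t)$ on $(q,p)$, a dominated convergence argument gives $\nabla_p\mathbb{E}[f(q_t^{q,p})]=\mathbb{E}[\nabla f(q_t)^\top D_pq_t]$. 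To exchange $\nabla_p$ with the $t$-integral, Cauchy--Schwarz yields $|\mathbb{E}[\nabla f(q_t)^\top D_pq_t]|\leq (\mathbb{E}|\nabla f(q_t)|^2)^{1/2}(\mathbb{E}|D_pq_t|^2)^{1/2}\leq C\,e^{-ct/2}$, locally uniformly in $(q,p)$: the first factor is controlled by moments of $q_t$ and the second by~\eqref{dqdec}. This dominating function justifies differentiating~\eqref{poissol} in $p$ under the $t$-integral and produces a continuous classical derivative coinciding with~\eqref{newform}; continuity identifies it with the weak derivative from Lemma~\ref{nabphi}.

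The main obstacle is the Lyapunov step. The matrices $Q$ and $\Gamma$ are not assumed to commute, the perturbation $D(q_t)$ is random and time-varying, and the various error terms must all be absorbed simultaneously while preserving the coercivity of $V$ itself. This is where the asymmetric bound~\eqref{bdquad} becomes essential---each of the four arguments of the minimum corresponds to an absorption constraint (the $x^\top D x$ term absorbed by $x^\top Qx$, the $y^\top D x$ term absorbed by a combination of $y^\top\Gamma y$ and $x^\top Qx$, the cross term $\varepsilon x^\top\Gamma y$ absorbed by $y^\top\Gamma y$, and a coercivity condition tying $\varepsilon$ to $\sqrt{\sigma_{\min}(Q)}$)---so the quantitative bookkeeping needed to match these thresholds at once is where the calculation becomes delicate; everything after the decay estimate is largely standard dominated-convergence manipulation.
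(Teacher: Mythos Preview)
Your proposal is correct and matches the paper's approach closely. The Lyapunov function $V(x,y)=x^\top Qx+2\varepsilon\,x^\top y+|y|^2$ is exactly the quadratic form $\begin{pmatrix}Q & bI_n\\ bI_n & I_n\end{pmatrix}$ the paper uses (with $\varepsilon=b$), and your absorption bookkeeping mirrors the paper's choice $b=\min(\frac{\lambda_m U_0}{2\lambda_M^2},\frac{\lambda_m}{4},\frac{1}{2}\sqrt{\sigma_{\min}(Q)})$ and the resulting bound $\dot V\leq -CV$.

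The only minor difference is in the second half: the paper establishes~\eqref{newform} in the weak sense directly, by testing $\int_0^T\mathbb{E}[f(q_t^z)]\,dt$ against $\nabla g$ for $g\in C_c^\infty$, applying Fubini, integrating by parts pathwise in $z$, and then sending $T\to\infty$ via dominated convergence. You instead exchange $\nabla_p$ with $\mathbb{E}$ and with $\int_0^\infty dt$ separately to get a classical derivative, then identify it with the weak one. Both routes rely on the same ingredients (the pathwise chain rule from Protter, the decay~\eqref{dqdec}, and the growth bound $|\nabla f|\leq C\mathcal{K}_l$), so this is a cosmetic rather than substantive difference; the paper's version is slightly more economical since it avoids the intermediate step of producing a local-uniform-in-$p$ dominating function.
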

The assumptions on $U$ are made in order to ensure that the process $(D_p q_t,D_p p_t)$ converges to zero exponentially quickly in order for the integral in~\eqref{newform} to be finite. Specifically, $U$ is assumed to be close to some particular quadratic function $q^\top Q q$, cf. \cite{MR2731396}. 
\begin{remark}
Exponential decay of the first variation process is not required, only some uniform (in initial $(q_0,p_0)$) integrability in time of $D_pq_t$ together with a boundedness assumption on $\nabla f$. On the other hand, Proposition 1 in \cite{MR4091098} and Proposition 4 in \cite{monmarche2021sure} explores more detailed conditions under which contractivity holds and does not hold.
\end{remark}
\begin{proof}
Let $b>0$ be the constant
\begin{equation*}
b = \min\bigg( \frac{\lambda_m U_0}{2\lambda_M^2}, \frac{\lambda_m}{4},\frac{1}{2}\sqrt{\sigma_{\min}(Q)}\bigg)
\end{equation*}
so that $\hat{\lambda}$ reduces to 
\begin{equation*}
\hat{\lambda} = \min\bigg(\frac{\lambda_m}{2},b\frac{U_0}{4}\bigg)
\end{equation*}
and we have the following bound
\begin{align}
&\frac{1}{2}\partial_t \bigg[e_i^\top\begin{pmatrix} D_p q_t \\
D_p p_t \end{pmatrix}^\top \begin{pmatrix} Q & b I_n \\
b I_n & I_n \end{pmatrix}\begin{pmatrix} D_p q_t \\
D_p p_t \end{pmatrix}e_i\bigg]\nonumber\\
&= e_i^\top D_p q_t^\top Q D_p p_t e_i + b\abs*{D_p p_t e_i}^2\nonumber\\
&\quad - e_i^\top(bD_p q_t + D_p p_t)^\top (D^2U(q)D_p q_t + \Gamma D_p p_t)e_i\nonumber\\
&= -be_i^\top D_p q_t^\top D^2 U(q_t) D_p q_t e_i + e_i^\top D_p q_t^\top ( - b\Gamma - D(q_t) ) D_p p_t e_i\nonumber\\
&\quad - e_i^\top D_p p_t^\top (\Gamma - b I_n) D_p p_t e_i\nonumber\\
&\leq \bigg(- b U_0 + \frac{b U_0}{2} + \frac{\hat{\lambda}}{2} \bigg) \abs*{D_p q_t e_i}^2 + \bigg( -\lambda_m + b + \frac{b\lambda_M^2}{2U_0} + \frac{\hat{\lambda}}{2} \bigg)\abs*{D_p p_t e_i}^2\nonumber\\
&\leq -\frac{bU_0}{4}\abs{D_p q_t e_i}^2 - \frac{\lambda_m}{4} \abs*{D_p p_t e_i}^2\nonumber\\
&\leq -C e_i^\top \begin{pmatrix} D_p q_t \\
D_p p_t \end{pmatrix}^\top \begin{pmatrix} Q & b I_n \\
b I_n & I_n \end{pmatrix}\begin{pmatrix} D_p q_t \\
D_p p_t \end{pmatrix} e_i\label{dqdec}
\end{align}
for some generic constant $C>0$ independent of the initial values $(q_0,p_0)$ and $\omega\in\Omega$. Consequently, using the (weighted) boundedness assumption on $\abs*{\nabla f}$ and for each index $i$,
\begin{align}
\abs*{(\nabla f(q_t)^\top D_p q_t)_i } &\leq C'e^{-Ct} \abs*{\nabla f(q_t)}\nonumber\\
&\leq  C'e^{-Ct}(\abs*{\nabla f(q_t)} - \pi(\abs*{\nabla f}))+ C'e^{-Ct}\label{nabfdq}
\end{align}
for a generic $C'>0$ independent of $(q_0,p_0)$ and $\omega\in\Omega$. 
Due to~\eqref{nabfdq} together with Fubini's theorem, it holds for $T>0$ and a test function $g\in C_c^\infty(\mathbb{R}^{2n})$ that
\begin{align*}
\int\int_0^T\mathbb{E}[f(q_t^z)]dt \nabla g(z) dz &= \int_0^T\mathbb{E}[\int f(q_t^z)\nabla g(z)dz ]dt\\
&= -\int_0^T\mathbb{E}[\int\nabla f(q_t^z)^\top D_p q_t^z g(z) dz]dt \\
&= -\int\int_0^T \mathbb{E}[\nabla f(q_t^z)^\top D_p q_t^z] dt  g(z) dz.
\end{align*}
Using Theorem~\ref{thmc},~\eqref{nabfdq} again and dominated convergence to take $T\rightarrow\infty$ on both sides concludes the proof.

\end{proof}

At any $t$, equation~\eqref{newform} can be used in a practical way in order to estimate the gradient direction~\eqref{DeltaGam}. Specifically, the following estimator can be used. Given $(q,p)\in\mathbb{R}^{2n}$, which we think of as solutions $(q_t,p_t)$ to~\eqref{langevin00} hence approximately distributed according to $\tilde{\pi}$, let
\begin{equation}\label{expresdel}
\delta\Gamma = \int_0^T\nabla f(q_t^{(q,p)})^\top D_pq_t^{(q,p)} dt \otimes \int_0^T\nabla f(q_t^{(q,-p)})^\top D_pq_t^{(q,-p)} dt
\end{equation}
for any large enough $T>0$, where $(q_t^{(q,p)},p_t^{(q,p)})$, $(q_t^{(q,-p)},p_t^{(q,-p)})$ solve~\eqref{langevin00} with independent realisations of $W_t$ and $D_p$ has been used to denote the derivative with respect to the initial $p$ as above. 
The next result is that this estimator has finite variance. Note that for $(q,p)$ distributed away from stationarity, the estimator cannot be expected to be an unbiased estimator of~\eqref{DeltaGam}.

\begin{theorem}
Let the assumptions of Theorem~\ref{newformprop} hold. For Lebesgue almost-all $(q,p)\in\mathbb{R}^{2n}$, each entry of $\delta\Gamma$ defined in~\eqref{expresdel} has finite variance.
\end{theorem}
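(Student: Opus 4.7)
The plan is to reduce the problem to a second-moment bound on each of the two time integrals in \eqref{expresdel} by exploiting independence, and then to use the pathwise exponential decay of $D_pq_t$ already established in the proof of Theorem~\ref{newformprop}.

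First, observe that because the processes $(q_t^{(q,p)},p_t^{(q,p)})$ and $(q_t^{(q,-p)},p_t^{(q,-p)})$ are driven by independent copies of the Wiener process, the two $\mathbb{R}^n$-valued random vectors
\[
X := \int_0^T \nabla f(q_t^{(q,p)})^\top D_pq_t^{(q,p)}\,dt,\qquad Y := \int_0^T \nabla f(q_t^{(q,-p)})^\top D_pq_t^{(q,-p)}\,dt
\]
are independent. Consequently, for any $i,j$, $\mathrm{Var}(\delta\Gamma_{ij}) = \mathrm{Var}(X_iY_j) \leq \mathbb{E}[X_i^2]\,\mathbb{E}[Y_j^2]$, so it is enough to show $\mathbb{E}[X_i^2]<\infty$ (the bound for $Y_j$ is identical).

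The second step would use the deterministic pointwise bound $|D_pq_t(\omega)|^2+|D_pp_t(\omega)|^2 \leq C'e^{-Ct}$ from~\eqref{dqdec}, which holds uniformly in $\omega\in\Omega$ and in the initial data. Combined with $|\nabla f(q_t)^\top D_pq_t| \leq |\nabla f(q_t)|\,|D_pq_t|$ and the Cauchy--Schwarz inequality in time (with weight $e^{-Ct/2}$), this gives, for each index $i$,
\[
|X_i|^2 \leq C'' \int_0^T e^{-Ct/2}|\nabla f(q_t^{(q,p)})|^2\,dt
\]
for a constant $C''>0$ depending only on $C,C',T$. Taking expectation and using Fubini reduces the task to showing that $\mathbb{E}\bigl[|\nabla f(q_t^{(q,p)})|^2\bigr]$ grows at most subexponentially in $t$ for Lebesgue a.e.\ $(q,p)$.

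Finally, I would use the hypothesis $|\nabla f|/\mathcal{K}_l \in L^\infty$ together with $\mathcal{K}_l^2 = \mathcal{K}_{2l}$ to dominate $|\nabla f(q_t)|^2$ by a constant multiple of $\mathcal{K}_{2l}(q_t,p_t)$. The Foster--Lyapunov inequality~\eqref{lyapunov} applied at level $2l$ then gives, via a standard Dynkin argument combined with a localising sequence of stopping times $\tau_n=\inf\{t\geq 0:|z_t|\geq n\}$ and Gronwall,
\[
\mathbb{E}\bigl[\mathcal{K}_{2l}(z_t^{(q,p)})\bigr] \leq e^{-a_{2l}t}\mathcal{K}_{2l}(q,p) + \frac{b_{2l}}{a_{2l}},
\]
which is bounded uniformly in $t\geq 0$ whenever $\mathcal{K}_{2l}(q,p)<\infty$, i.e.\ for every $(q,p)\in\mathbb{R}^{2n}$. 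Inserting this uniform bound into the previous display produces a finite constant depending on $(q,p)$, which completes the proof.

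The only mild technicality I anticipate is justifying the Foster--Lyapunov moment bound for the unbounded test function $\mathcal{K}_{2l}$; this is standard given~\eqref{lyapunov} and the fact that~\eqref{langevin00} admits a well-defined strong solution (Theorem~\ref{dent}), since one can apply Dynkin's formula on $[0,t\wedge\tau_n]$ and let $n\to\infty$ using Fatou. Alternatively, one can invoke~\eqref{convlaw} with $\varphi=\mathcal{K}_{2l}$ at a higher Lyapunov level $l'\geq 2l$, since $(\mathcal{K}_{2l}-\tilde{\pi}(\mathcal{K}_{2l}))/\mathcal{K}_{l'}$ is bounded and $\tilde{\pi}(\mathcal{K}_{2l})<\infty$ by~\eqref{fin}, which is where the Lebesgue almost-everywhere qualifier in the statement naturally enters.
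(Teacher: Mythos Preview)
Your proposal is correct and follows essentially the same approach as the paper: reduce to second moments of each time integral via independence, invoke the pathwise decay bound \eqref{dqdec} on $D_pq_t$, and control $\mathbb{E}[|\nabla f(q_t)|^2]$ using the growth hypothesis $|\nabla f|/\mathcal{K}_l\in L^\infty$ together with the Lyapunov/semigroup bounds of Theorem~\ref{thmc}. The paper's proof is terser---it writes the pointwise bound $|(\nabla f(q_t)^\top D_pq_t)_i|^2\leq C'^2e^{-2Ct}(|\nabla f(q_t)|^2-\pi(|\nabla f|^2))+C'^2e^{-2Ct}\pi(|\nabla f|^2)$ and then appeals directly to \eqref{convlaw} and Fubini---whereas you spell out the Cauchy--Schwarz step in time and the Foster--Lyapunov moment bound explicitly, but the substance is the same.
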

\begin{proof}
It suffices to show that~\eqref{expresdel} has finite second moment, for which it suffices to show that each element in the vector of time integrals
\begin{equation*}
\int_0^T\nabla f(q_t^{(q,p)})^\top D_p q_t^{(q,p)} ds
\end{equation*}
has finite second moments by independence. For each index $i$, using~\eqref{dqdec},
\begin{align}
\abs*{(\nabla f(q_t)^\top D_p q_t)_i }^2 &\leq C'^2e^{-2Ct} \abs*{\nabla f(q_t)}^2\nonumber\\
&\leq  C'^2e^{-2Ct}(\abs*{\nabla f(q_t)}^2 - \pi(\abs*{\nabla f}^2))+ C'^2e^{-2Ct}\pi(\abs*{\nabla f}^2),\nonumber
\end{align}
so that using the (weighted) boundedness assumption on $\abs*{\nabla f}$ together with Theorem~\ref{thmc} and Fubini's theorem, the proof concludes.
\end{proof}

\section{Quadratic cases}\label{anacas}
Throughout this section, the target measure $\pi$ is assumed to be Gaussian, when mean zero this is given by $\pi\propto \exp(-\frac{1}{2}q^\top\Sigma^{-1}q)$ for $\Sigma\in\mathbb{S}_{++}^n$, in other words, the potential is quadratic, $U(q) = \frac{1}{2}q^\top\Sigma^{-1}q$. For polynomial observables, we look for solutions to the Poisson equation by using a polynomial ansatz and comparing coefficients in order to obtain an explicit expression for the asymptotic variance. The results provide benchmarks to test the performance of the algorithms that arise from using the gradient in Theorem~\ref{funcder} as well as intuition for how $\Gamma$ can be improved in concrete cases. We will consider the following cases.
\begin{enumerate}
\item Quadratic $f = \frac{1}{2}q^\top U_0 q$, given commutativity between $U_0$ and $\Sigma$ (Proposition~\ref{comprop}), also $f = \frac{1}{2} U_0 q^2 + l q$ in one dimension (Proposition~\ref{quadprop1});
\item Odd polynomial $f$, where the asymptotic variance will be shown to decrease to zero as $\Gamma \rightarrow 0$ (Proposition~\ref{linprop}, Corollary~\ref{lincor} and Proposition~\ref{oddprop});
\item Quartic $f$ in one dimension, in which the situation is similar to quadratic $f$ (Proposition~\ref{quarprop});
\end{enumerate}
We proceed with stating in more detail the general situation of this section. Let $\Sigma\in\mathbb{S}_{++}^n$, $U_0\in\mathbb{S}_{++}^n$ and $l\in\mathbb{R}^n$. The Gaussian invariant measure $\tilde{\pi}$ and the observable $f:\mathbb{R}^{2n}\rightarrow\mathbb{R}$ are given by
\begin{equation}\label{quadob}
\tilde{\pi}\propto 
\exp\bigg(\!-\frac{1}{2}q\cdot \Sigma^{-1}q - \frac{1}{2}p\cdot M^{-1} p\bigg),\quad f(q) = \frac{1}{2}q\cdot U_0q + l\cdot q
\end{equation}
and 
the value $\pi(f)$ becomes
\begin{equation}\label{goal}
\pi(f) = \int f d\pi = \int \frac{1}{2}q\cdot U_0q d\pi = \frac{1}{2} U_0:\Sigma.
\end{equation}
The infinitesimal generator $\mathcal{L}$ becomes in this case
\begin{align}
\mathcal{L} &= 
\begin{pmatrix}
0 & M^{-1}\\
-\Sigma^{-1} & -\Gamma M^{-1}
\end{pmatrix}
\begin{pmatrix}
q\\
p
\end{pmatrix}\cdot
\nabla + \nabla_{\! p}\cdot\Gamma\nabla_{\! p}  \nonumber \\
&= M^{-1}p \cdot \nabla_q - \Sigma^{-1}q \cdot \nabla_p - \Gamma  M^{-1}p\cdot \nabla_p + \nabla_p\cdot\Gamma\nabla_p. \label{generator}
\end{align}
Consider the natural candidate solution $\phi$ to the Poisson equation~\eqref{poisson0} given by
\begin{equation}\label{ansatz}
\phi(q,p) = \frac{1}{2}q\cdot Gq + q\cdot Ep + \frac{1}{2}p\cdot Hp + g\cdot q + h\cdot p - \frac{1}{2}(G:\Sigma + H:M).
\end{equation}
for some constant matrices $G,E,H\in\mathbb{R}^{n\times n}$ and vectors $g,h\in\mathbb{R}^n$.
\begin{lemma}
Given $f$ in~\eqref{quadob}, $\pi(f)$ in~\eqref{goal} and $\mathcal{L}$ of the form~\eqref{generator}, $\phi$ given by~\eqref{ansatz} is a solution to the Poisson equation~\eqref{poisson0} if and only if
\begin{align}
\Sigma^{-1}q\cdot(E^\top q + h) - \Gamma:H_S - \frac{1}{2}q\cdot U_0 q - l\cdot q + \frac{1}{2}U_0 :\Sigma &= 0,\label{zery}\\
- M^{-1} (G_Sq + g) + H_S\Sigma^{-1}q + M^{-1}\Gamma(E^\top q + h) &= 0,\label{firy}\\
-E^\top M^{-1} + H_S\Gamma M^{-1} &= A_1,\label{secy}
\end{align}
for some antisymmetric $A_1\in\mathbb{R}^{n\times n}$.
\end{lemma}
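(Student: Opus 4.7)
The plan is to substitute the ansatz~\eqref{ansatz} into the Poisson equation $-\mathcal{L}\phi = f - \pi(f)$ and use the fact that the resulting polynomial identity in $(q,p)$ must hold for all values of the argument. Since $\mathcal{L}$ from~\eqref{generator} maps the quadratic ansatz into a polynomial of total degree at most two, while $f-\pi(f) = \tfrac{1}{2}q\cdot U_0 q + l\cdot q - \tfrac{1}{2}U_0:\Sigma$ depends only on $q$, grouping both sides by their degree in $p$ produces three independent identities from the coefficients of $p^0$, $p^1$ and $p^2$, which I will match to~\eqref{zery},~\eqref{firy} and~\eqref{secy} respectively.

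The first step is a direct differentiation of~\eqref{ansatz}: $\nabla_q\phi = G_S q + Ep + g$, $\nabla_p\phi = E^\top q + H_S p + h$, and $\nabla_p^\top \Gamma \nabla_p\phi = \Gamma:H_S$. The constant $-\tfrac{1}{2}(G:\Sigma + H:M)$ in~\eqref{ansatz} is annihilated by every derivative and is present solely to enforce $\tilde\pi(\phi)=0$. Plugging these into~\eqref{generator} and using the elementary identities $M^{-1}p\cdot v = p\cdot M^{-1}v$, $\Sigma^{-1}q\cdot H_S p = p\cdot H_S \Sigma^{-1}q$ and $\Gamma M^{-1} p\cdot v = p\cdot M^{-1}\Gamma v$ to migrate every $p$-factor onto one side of each inner product makes the three degree classes explicit.

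The $p$-independent class produces exactly~\eqref{zery}. The class linear in $p$ has the form $p \cdot R(q) = 0$ for all $(p,q)$, where $R(q)$ is the bracketed vector appearing in~\eqref{firy}; since this must hold for every $p$, one concludes $R(q) \equiv 0$, which is~\eqref{firy}. The class quadratic in $p$ takes the form $p^\top B p = 0$ with $B = -M^{-1}E + M^{-1}\Gamma H_S$, which holds for all $p$ if and only if the symmetric part of $B$ vanishes, equivalently $B$ is antisymmetric. Taking the transpose of this antisymmetry statement and labelling the resulting antisymmetric matrix $A_1$ delivers~\eqref{secy}. The only bookkeeping care required, and hence the sole obstacle I foresee, is tracking symmetric versus antisymmetric parts: only $G_S$ and $H_S$ enter $\phi$ through the quadratic forms, but products such as $M^{-1}\Gamma H_S$ are generically non-symmetric even though each factor is symmetric, which is precisely why the degree-$2$ identity constrains only the symmetric part of $B$ and leaves the residual antisymmetric freedom parameterised by $A_1$.
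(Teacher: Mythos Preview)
Your proof is correct and follows essentially the same approach as the paper: substitute the ansatz into $-\mathcal{L}\phi = f - \pi(f)$, compute the gradients explicitly, and compare coefficients by degree in $p$ to obtain the three conditions. The paper's version is terser, but your more explicit handling of the symmetric/antisymmetric bookkeeping in the quadratic-in-$p$ term (and the observation that taking the transpose of the antisymmetric $B$ yields exactly~\eqref{secy}) is a faithful unpacking of what the paper leaves implicit.
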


\begin{proof}
Substituting~\eqref{generator},~\eqref{ansatz} and~\eqref{quadob} into the Poisson equation~\eqref{poisson0}, one obtains
\begin{align*}
&-\begin{pmatrix}
0 & M^{-1}\\
-\Sigma^{-1} & -\Gamma M^{-1}
\end{pmatrix}
\begin{pmatrix}
q\\
p
\end{pmatrix}\cdot
\begin{pmatrix}
G_Sq + Ep + g\\
E^\top q + H_S p + h
\end{pmatrix} -\Gamma:H_S\\
&\qquad= \frac{1}{2}q\cdot U_0 q + l\cdot q -\frac{1}{2}U_0 :\Sigma.
\end{align*}
Comparing the constant, first order and second order coefficients in $p$ give respectively the sufficient conditions~\eqref{zery},~\eqref{firy} and~\eqref{secy} as stated.
\end{proof}

\subsection{Quadratic observable}
Similar calculations in this situation have appeared previously in Proposition 1 in \cite{MR3483241}, where explicit expressions analogous to $G$, $E$, $H$ and for $\sigma^2$ are given. 
For our purposes of finding an optimal $\Gamma$, instead of taking these explicit expressions, we keep unknown antisymmetric matrices (such as $A_1$) as they appear and eventually use commutativity between $\Sigma$ and $U_0$ to show that the antisymmetric matrices are zero. We continue from~\eqref{zery},~\eqref{firy} and~\eqref{secy} with finding explicit expressions for the coefficients $G$, $E$, $H$ of $\phi$. 
\begin{lemma}\label{licit}
Given $f$ in~\eqref{quadob}, $\pi(f)$ in~\eqref{goal}, $\mathcal{L}$ of the form~\eqref{generator}
, $\phi$ given by~\eqref{ansatz} is a solution to the Poisson equation~\eqref{poisson0} with~\eqref{generator} if and only if
\begin{align}
G_S &= \frac{1}{2}M(\Sigma U_0  - \Sigma A_2-2A_1M)\Gamma^{-1}\Sigma^{-1} + \frac{1}{2} \Gamma(U_0  \Sigma - A_2\Sigma ),\label{gexp}\\
E &= \frac{1}{2}U_0 \Sigma + \frac{1}{2}A_2\Sigma,\label{eexp}\\
H_S &= \frac{1}{2}(\Sigma U_0  - \Sigma A_2 - 2A_1M)\Gamma^{-1},\label{hexp}\\
h &= \Sigma l \qquad\textrm{and}\qquad
g = \Gamma\Sigma l.\label{pqexp}
\end{align}
for some antisymmetric matrices $A_1,A_2$.
\end{lemma}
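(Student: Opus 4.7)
The plan is to substitute the ansatz \eqref{ansatz} into the Poisson equation $-\mathcal{L}\phi = f - \pi(f)$, using the forms of $\mathcal{L}$ and $f-\pi(f)$ from \eqref{generator} and \eqref{goal}, and to match coefficients in the monomials of $(q,p)$. Since the previous lemma already gave the three conditions \eqref{zery}--\eqref{secy}, which correspond respectively to the constant, linear, and quadratic orders in $p$, it suffices to decompose these further with respect to the order in $q$ and solve the resulting matrix equations for $E$, $H_S$, $G_S$, $g$, $h$. The logic proceeds in the order \eqref{eexp}, $h$, \eqref{hexp}, $g$, \eqref{gexp}, using each previously found quantity in the next step.

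First I would work with \eqref{zery}. Matching the quadratic-in-$q$ terms gives $\Sigma^{-1}E^\top + E\Sigma^{-1} = U_0$ (using $q\cdot Aq = q\cdot A_S q$), equivalently $E\Sigma^{-1} - U_0/2$ antisymmetric, so $E\Sigma^{-1} = (U_0 + A_2)/2$ for some antisymmetric $A_2$; this yields \eqref{eexp}, and hence $E^\top = (\Sigma U_0 - \Sigma A_2)/2$ (using $\Sigma^\top = \Sigma$, $A_2^\top = -A_2$). Matching the linear-in-$q$ terms gives $\Sigma^{-1}h = l$, whence $h = \Sigma l$. The constant part of \eqref{zery}, namely $\Gamma : H_S = U_0 : \Sigma/2$, will be verified at the end as a consistency check.

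Next, from \eqref{secy}, multiplying by $M$ on the right gives $H_S\Gamma = E^\top + A_1 M$ for some antisymmetric $A_1$, and right-multiplying by $\Gamma^{-1}$ together with the expression for $E^\top$ produces \eqref{hexp} (absorbing the sign of $A_1$ into the antisymmetric parameter). The consistency check for the constant-in-$q$ part of \eqref{zery} is now immediate:
\begin{equation*}
\Gamma : H_S = \tfrac{1}{2}\mathrm{tr}(\Sigma U_0 - \Sigma A_2 - 2A_1 M) = \tfrac{1}{2}\mathrm{tr}(\Sigma U_0) = \tfrac{1}{2}U_0 : \Sigma,
\end{equation*}
because the trace of the product of a symmetric and an antisymmetric matrix vanishes. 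Finally, \eqref{firy} splits into a constant-in-$q$ part $-M^{-1}g + M^{-1}\Gamma h = 0$, giving $g = \Gamma h = \Gamma\Sigma l$, and a linear-in-$q$ part $M^{-1}G_S = H_S\Sigma^{-1} + M^{-1}\Gamma E^\top$, i.e.\ $G_S = MH_S\Sigma^{-1} + \Gamma E^\top$. Substituting the expressions for $H_S$ and $E^\top$ produces \eqref{gexp}.

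The main obstacle is simply bookkeeping: keeping track of the symmetric/antisymmetric decomposition of quadratic forms in $q$ and $p$, of transposes of products involving $\Sigma$, $U_0$, $A_1$, $A_2$, and of the fact that the antisymmetric parameters $A_1,A_2$ reflect the genuine non-uniqueness of $G$ and $H$ in the ansatz (only $G_S$, $H_S$ appear in $\phi$), subject to the implicit compatibility conditions required for $G_S$ and $H_S$ to be symmetric matrices. The reverse (``if'') direction follows directly by plugging the stated formulas \eqref{gexp}--\eqref{pqexp} back into \eqref{zery}--\eqref{secy}, which are equivalent to $-\mathcal{L}\phi = f - \pi(f)$.
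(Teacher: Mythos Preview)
Your proposal is correct and follows essentially the same approach as the paper: both start from the three conditions \eqref{zery}--\eqref{secy} of the preceding lemma, split each further by order in $q$, and solve in the order $E$, $h$, $H_S$, $g$, $G_S$, with the trace condition $\Gamma:H_S = \tfrac12 U_0:\Sigma$ verified a posteriori. Your treatment is in fact slightly cleaner in two places: you explicitly note that the sign of $A_1$ is absorbed into the free antisymmetric parameter (the paper leaves this implicit), and your consistency check uses the observation that $\mathrm{tr}(SA)=0$ for $S$ symmetric and $A$ antisymmetric, whereas the paper writes out the same computation in indices.
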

\begin{proof}
Comparing coefficients in $q$ in equation~\eqref{zery} gives
\begin{align}
2\Gamma : H_S &= U_0 :\Sigma\label{6}\\
h^\top \Sigma^{-1} &= l^\top \label{7}\\
2E\Sigma^{-1} &= U_0  + A_2\label{8}
\end{align}
and the same for condition~\eqref{firy} gives
\begin{align}
M^{-1}G_S &= H_S\Sigma^{-1} + M^{-1}\Gamma E^\top,\label{9}\\
M^{-1}g &= M^{-1}\Gamma h.\label{10}
\end{align}
Condition~\eqref{8} yields~\eqref{eexp}. Together with~\eqref{secy}, this gives~\eqref{hexp}. 
From the expression~\eqref{hexp} and by symmetry of $U_0 $, condition~\eqref{6} is in turn satisfied:
\begin{align*}
2\Gamma : H_S &= \Gamma : ((\Sigma U_0  - \Sigma A_2 - 2A_1M)\Gamma^{-1})\\
&= \sum_{i,j,k,l} \Gamma_{ji}(\Sigma_{ik} (U_0)_{kl} - \Sigma_{ik}(A_2)_{kl} - (A_1)_{ik}M_{kl}) (\Gamma^{-1})_{lj}\\
&=  \sum_{i,k} (U_0)_{ki} \Sigma_{ki} = U_0 :\Sigma,
\end{align*}
where symmetry of $\Sigma$ and $M$ have been used. Substituting~\eqref{eexp} and~\eqref{hexp} into equation~\eqref{9} then gives~\eqref{gexp}. Equations~\eqref{7} and~\eqref{10} give the equations~\eqref{pqexp} for $g$ and $h$.
\end{proof}

The asymptotic variance from Theorem~\ref{theclt} can be given by a formula in terms of $\Sigma,U_0$ and the coefficients of $\phi$. Before substituting the expressions from Lemma~\ref{licit} into the formula, we give the formula itself, which is adapted from the proof of Proposition 1 in \cite{MR3483241}.
\begin{lemma}\label{av}
If the solution $\phi$ to the Poisson equation~\eqref{poisson0} for $f$ given by~\eqref{quadob}, $\pi(f)$ given by~\eqref{goal} and $\mathcal{L}$ given by~\eqref{generator} is of the form~\eqref{ansatz}, the asymptotic variance $\sigma^2$ given by~\eqref{cltform} has the expression
\begin{equation}\label{asympfor}
2\langle \phi, f-\pi(f) \rangle_{\tilde{\pi}} = \textrm{Tr}(G_S\Sigma U_0  \Sigma)+ 2g\cdot \Sigma l.
\end{equation}
\end{lemma}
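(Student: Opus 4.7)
The plan is to compute $\langle \phi, f-\pi(f)\rangle_{\tilde{\pi}}$ by brute expansion of \eqref{ansatz} and \eqref{quadob}, exploiting that $\tilde{\pi}$ factorises as $\mathcal{N}(0,\Sigma)\otimes \mathcal{N}(0,M)$ and that every odd centred-Gaussian moment vanishes. First I would record
\begin{equation*}
f-\pi(f) = \tfrac{1}{2}\bigl(q\cdot U_0 q - U_0:\Sigma\bigr) + l\cdot q,
\end{equation*}
and split $\phi$ into its six natural summands: the pure quadratics $\tfrac{1}{2}q\cdot G q = \tfrac{1}{2}q\cdot G_S q$ and $\tfrac{1}{2}p\cdot H p = \tfrac{1}{2}p\cdot H_S p$ (since $q\cdot Gq$ only sees the symmetric part), the bilinear $q\cdot E p$, the linears $g\cdot q$ and $h\cdot p$, and the constant $-\tfrac{1}{2}(G:\Sigma + H:M)$.

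The next step is parity bookkeeping on the resulting $6\times 2$ table of pairwise integrals. The constant piece of $\phi$ pairs with $f-\pi(f)$, which has $\tilde{\pi}$-mean zero, contributing $0$. The summand $h\cdot p$ is odd in $p$ while $f-\pi(f)$ depends only on $q$, so it vanishes by independence. The bilinear $q\cdot Ep$ is odd in $p$ and also vanishes against both pieces of $f-\pi(f)$. The quadratic-in-$p$ term $\tfrac{1}{2}p\cdot H_S p$ pairs with $l\cdot q$ as a product of a $p$-factor with an odd $q$-factor and with $\tfrac{1}{2}(q\cdot U_0 q - U_0:\Sigma)$ as a product of a $p$-factor with a centred $q$-factor; both give $0$. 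Finally, $g\cdot q$ is odd in $q$ so annihilates the centred quadratic-in-$q$ piece. Only two pairings survive.

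The first surviving pairing is $\langle g\cdot q,\, l\cdot q\rangle_{\tilde{\pi}} = g\cdot \Sigma l$, using $\mathbb{E}[q_i q_j]=\Sigma_{ij}$. The second is $\bigl\langle \tfrac{1}{2}q\cdot G_S q,\, \tfrac{1}{2}(q\cdot U_0 q - U_0:\Sigma)\bigr\rangle_{\tilde{\pi}}$, for which I would invoke Isserlis' formula
\begin{equation*}
\mathbb{E}[q_i q_j q_k q_l] = \Sigma_{ij}\Sigma_{kl} + \Sigma_{ik}\Sigma_{jl} + \Sigma_{il}\Sigma_{jk}
\end{equation*}
to obtain $\mathbb{E}[(q\cdot G_S q)(q\cdot U_0 q)] = (G_S:\Sigma)(U_0:\Sigma) + 2\,\textrm{Tr}(G_S \Sigma U_0 \Sigma)$; the two off-diagonal Wick contractions each fold to $\textrm{Tr}(G_S\Sigma U_0\Sigma)$ by symmetry of $G_S$, $U_0$ and $\Sigma$. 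The mean-correction term $-\tfrac{1}{2}U_0:\Sigma$ subtracts exactly $(G_S:\Sigma)(U_0:\Sigma)$, leaving $\tfrac{1}{2}\textrm{Tr}(G_S\Sigma U_0\Sigma)$ after the overall $\tfrac{1}{4}$ prefactor. Summing the two survivors and doubling yields \eqref{asympfor}.

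There is no substantive obstacle here; the only care required is the identification $q\cdot Gq = q\cdot G_S q$ (so that $G$ enters the final formula only through $G_S$, consistent with the statement) and tracking the combinatorial factor of $2$ in front of the trace term produced by the two non-trivial Wick pairings.
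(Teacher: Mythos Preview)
Your proposal is correct and follows essentially the same approach as the paper: both rely on Isserlis' formula for the fourth-order Gaussian moment and on odd-moment/independence cancellations. The only cosmetic difference is that the paper packages everything in block notation $\bar{G},\bar{U}_0,\bar{\Sigma},\bar{g},\bar{l}$ over $z=(q,p)\in\mathbb{R}^{2n}$ and computes $\textrm{Tr}(\bar{G}\bar{\Sigma}\bar{U}_0\bar{\Sigma})+2\bar{g}\cdot\bar{\Sigma}\bar{l}$ in one shot, whereas you carry out the same cancellations term by term; the substance is identical.
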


\begin{proof}
Denote
\begin{align*}
\bar{G} = \begin{pmatrix}
G_S & E\\
E^\top & H_S
\end{pmatrix},\quad
\bar{U}_0 = \begin{pmatrix}
U_0  & 0\\
0 & 0
\end{pmatrix}, \quad
\bar{\Sigma} = \begin{pmatrix}
\Sigma & 0\\
0 & M
\end{pmatrix},\quad 
\bar{g} = \begin{pmatrix}
g\\
h
\end{pmatrix},\quad
\bar{l} = \begin{pmatrix}
l\\
0
\end{pmatrix}.
\end{align*}
Each of $\phi$ and $f-\pi(f)$ are given by
\begin{align*}
\phi(z) &= \frac{1}{2}z\cdot \bar{G}z - \bar{g}\cdot z-\frac{1}{2}\bar{G}:\bar{\Sigma}\\
f(z)-\pi(f) &= \frac{1}{2}z\cdot \bar{U}_0 z - \bar{l}\cdot z-\frac{1}{2}\bar{U}_0:\bar{\Sigma}
\end{align*}
for $z=(q,p)\in\mathbb{R}^{2n}$. Substituting into $\sigma^2=2 \langle \phi, f-\pi(f) \rangle_{\tilde{\pi}}$ gives 
\begin{align*}
&2 \int \phi (f-\pi(f))d\tilde{\pi}\\
&\qquad= \frac{1}{2}\int (z\cdot \bar{G}z) (z\cdot \bar{U}_0 z) d\tilde{\pi} - \frac{1}{2}\int( z\cdot \bar{G}z) \bar{U}_0:\bar{\Sigma} d\tilde{\pi} + 2\int (\bar{g}\cdot z )(\bar{l}\cdot z) d\tilde{\pi}\\
&\qquad\quad - \frac{1}{2}\int \bar{G}:\bar{\Sigma} ( z\cdot \bar{U}_0 z)d\tilde{\pi} + \frac{1}{2}(\bar{G}:\bar{\Sigma})(\bar{U_0 }:\bar{\Sigma}),
\end{align*}
where
\begin{align*}
\int (z\cdot \bar{G}z) (z\cdot \bar{U}_0 z) d\tilde{\pi}
&= \sum_{i,j,u,v} \bar{G}_{ij}(\bar{U}_0)_{uv}\int z_i z_j z_u z_v d\tilde{\pi}\\
&= \sum_{i,j,u,v} \bar{G}_{ij}(\bar{U}_0)_{uv} \Big( \bar{\Sigma}_{ij}\bar{\Sigma}_{uv} + \bar{\Sigma}_{iu}\bar{\Sigma}_{jv} + \bar{\Sigma}_{iv}\bar{\Sigma}_{ju} \Big)\\
&= (\bar{G}:\bar{\Sigma})(\bar{U}_0:\bar{\Sigma}) + 2\textrm{Tr}(\bar{G}\bar{\Sigma}\bar{U}_0\bar{\Sigma}).
\end{align*}
As a result,
\begin{align*}
2\int \phi(f-\pi(f)) d\tilde{\pi} &= \frac{1}{2}(\bar{G}:\bar{\Sigma})(\bar{U}_0:\bar{\Sigma}) + \textrm{Tr}(\bar{G}\bar{\Sigma}\bar{U}_0\bar{\Sigma}) - \frac{1}{2}(\bar{G}:\bar{\Sigma})(\bar{U}_0:\bar{\Sigma}) \\
&\quad + 2\int( \bar{g}\cdot z)( \bar{l}\cdot z)d\tilde{\pi}\\[0.5em]
&= \textrm{Tr}(\bar{G}\bar{\Sigma}\bar{U}_0\bar{\Sigma}) + 2 \bar{g}\cdot \bar{\Sigma} \bar{l}.
\end{align*}

\end{proof}

From the expressions~\eqref{gexp} and~\eqref{hexp} for $G_S$ and $H_S$ respectively, it is not straightforward to check that there exist antisymmetric $A_1$ and $A_2$ such that the right hand sides are indeed symmetric at this point, which is necessary for the ansatz~\eqref{ansatz} for $\phi$ to be a valid solution. On the other hand, if $\Sigma$, $U_0$, $\Gamma$, $M$ all commute, then the right hand sides of~\eqref{gexp} and~\eqref{hexp} are symmetric for $A_1=A_2=0$ and the coefficients $G$ and $H$ become explicit, which allows taking derivatives of $\sigma^2$ with respect to the entries of $\Gamma$. Moreover, the explicit coefficients allow optimisation of $M$, which is given by the following proposition. 
\begin{prop}\label{mprop}
Let $\Sigma U_0 = U_0 \Sigma$, $f$ be as in~\eqref{quadob}, $\pi(f)$ be as in~\eqref{goal}, $\mathcal{L}$ be of the form~\eqref{generator} 
and $\phi$ be the solution to the Poisson equation~\eqref{poisson0}. The following holds.
\begin{equation*}
\lim_{M\rightarrow 0}\int \phi (f-\pi(f))d\tilde{\pi} = \inf_{M\in\mathbb{S}_{++}^n}\int \phi (f-\pi(f))d\tilde{\pi},
\end{equation*}
where the limit on the left hand side is in the sense of $M = mI_n$, $m\rightarrow 0^+$.
\end{prop}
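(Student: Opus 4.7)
The plan is to use Lemmas~\ref{licit} and~\ref{av} to express $\sigma^2(M)$ explicitly, isolate the $M$-dependence, and show it is nonnegative and vanishes as $m \to 0^+$ along $M = m I_n$.

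By Lemma~\ref{av}, $\sigma^2(M) = \textrm{Tr}(G_S \Sigma U_0 \Sigma) + 2 l^\top \Sigma \Gamma \Sigma l$, and the second term is $M$-independent by~\eqref{pqexp}. Expression~\eqref{gexp} splits $G_S$ as
\begin{equation*}
G_S = \tfrac{1}{2}\Gamma(U_0 \Sigma - A_2 \Sigma) + \tfrac{1}{2}M(\Sigma U_0 - \Sigma A_2 - 2 A_1 M)\Gamma^{-1}\Sigma^{-1},
\end{equation*}
the antisymmetric matrices $A_1, A_2$ being pinned down implicitly by the symmetry requirements on $G_S$ and $H_S$. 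Using the commutativity $\Sigma U_0 = U_0 \Sigma$, I would work in the simultaneous eigenbasis of $\Sigma$ and $U_0$ (which is preserved by $M = m I_n$) and solve these constraints as a linear system for the entries of $A_1, A_2$, obtaining solutions that depend smoothly on $m$ with bounded limits $A_1^{(0)}, A_2^{(0)}$ as $m \to 0^+$.

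Taking $M = m I_n$ and passing $m \to 0^+$, the second summand of $G_S$ contributes $O(m)$ to the trace and vanishes, yielding a limit that I would identify with the asymptotic variance $\sigma^2_{OD}$ of the preconditioned overdamped Langevin dynamics $dq = -\Gamma^{-1}\Sigma^{-1}q\,dt + \sqrt{2\Gamma^{-1}}\,dW$ by directly solving its Poisson equation with a quadratic ansatz. A one-dimensional sanity check, in which the antisymmetric matrices vanish automatically, produces
$\sigma^2(m) = \tfrac{m U_0^2 \Sigma^2}{2\Gamma} + \tfrac{\Gamma U_0^2 \Sigma^3}{2} + 2\Gamma\Sigma^2 l^2$,
which is manifestly linearly increasing in $m$ with infimum $\tfrac{\Gamma U_0^2 \Sigma^3}{2} + 2\Gamma\Sigma^2 l^2$ at $m \to 0^+$, matching the overdamped variance obtained by a direct computation.

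The main obstacle is the infimum claim: $\sigma^2(M) \geq \sigma^2_{OD}$ for every $M \in \mathbb{S}_{++}^n$, not just scalar multiples of the identity. When $\Gamma$ also commutes with $\Sigma$ and $U_0$, the problem decouples into $n$ independent one-dimensional instances and the inequality follows mode-by-mode from the 1D formula. In the general case where $\Gamma$ need not commute with $\Sigma U_0$, I would aim to show that the $M$-dependent part of $\textrm{Tr}(G_S \Sigma U_0 \Sigma)$ is a positive semidefinite quadratic form in $M$, either by direct algebraic manipulation using the symmetry constraints on $A_1, A_2$, or alternatively via the integrated-correlation representation $\sigma^2 = 2\int_0^\infty \textrm{Cov}(f(z_0), f(z_t))\,dt$ combined with the Lyapunov equation for the stationary covariance of the Gaussian Ornstein--Uhlenbeck process $(q_t,p_t)$, from which the momentum's fluctuation contribution should appear as a manifestly nonnegative residue.
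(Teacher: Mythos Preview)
You are overcomplicating the argument and missing the key simplification the paper exploits. The paragraph immediately preceding the proposition explains that the discussion is taking place under the standing assumption that $\Sigma$, $U_0$, $\Gamma$, $M$ all commute (the proposition's hypothesis $\Sigma U_0 = U_0\Sigma$ is stated loosely). Under this commutativity you do not need to solve a linear system for $A_1,A_2$: simply take $A_1=A_2=0$. Then the right-hand sides of~\eqref{gexp} and~\eqref{hexp} are manifestly symmetric, so by Lemma~\ref{licit} the ansatz~\eqref{ansatz} with
\[
G=\tfrac{1}{2}M\Sigma U_0\Gamma^{-1}\Sigma^{-1}+\tfrac{1}{2}\Gamma U_0\Sigma,\quad H=\tfrac{1}{2}\Sigma U_0\Gamma^{-1},\quad E=\tfrac{1}{2}U_0\Sigma,\quad g=\Gamma\Sigma l,\quad h=\Sigma l
\]
is the Poisson solution, and Lemma~\ref{av} gives
\[
2\langle\phi,f-\pi(f)\rangle_{\tilde{\pi}}=\tfrac{1}{2}\textrm{Tr}\bigl(M\,\Sigma U_0\Gamma^{-1}U_0\Sigma\bigr)+\tfrac{1}{2}\textrm{Tr}\bigl(\Gamma U_0\Sigma^2 U_0\Sigma\bigr)+2l^\top\Sigma\Gamma\Sigma l.
\]
Only the first term depends on $M$, and it has the form $\tfrac{1}{2}\textrm{Tr}(MA)$ with $A=\Sigma U_0\Gamma^{-1}U_0\Sigma\in\mathbb{S}_{++}^n$.

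This dissolves what you call the ``main obstacle''. For \emph{any} $M\in\mathbb{S}_{++}^n$ one has $\textrm{Tr}(MA)=A:M>0$ (trace of the product of two positive definite matrices is positive), so the $M$-dependent term is strictly positive for every admissible $M$ and tends to $0$ along $M=mI_n$, $m\to 0^+$. Hence the infimum over all of $\mathbb{S}_{++}^n$ is exactly the $M$-independent remainder, which equals the stated limit. No decoupling into one-dimensional modes, no overdamped comparison, no Lyapunov or integrated-correlation machinery is needed. Your proposal as written is a sketch rather than a proof, and the route it outlines (solving for $A_1,A_2$ as functions of $m$, then arguing positivity of a quadratic form in $M$) is substantially harder than the one-line trace argument the paper uses.
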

\begin{proof}
Let 
\begin{subequations}\label{phicos}
\begin{align}
G &= \frac{1}{2}M\Sigma U_0\Gamma^{-1}\Sigma^{-1} + \frac{1}{2}\Gamma U_0 \Sigma, \quad E = \frac{1}{2}U_0 \Sigma,\quad H= \frac{1}{2}\Sigma U_0 \Gamma^{-1},\\
g &= \Gamma \Sigma l,\quad h = \Sigma l
\end{align}
\end{subequations}
so that by Lemma~\ref{licit}, $\phi$ given by~\eqref{ansatz} is the solution to the Poisson equation~\eqref{poisson0} and inserting $G,g$ into~\eqref{asympfor} gives
\begin{equation}\label{subasym}
2\langle \phi ,f -\pi(f) \rangle_{\tilde{\pi}} = \frac{1}{2}\textrm{Tr}(M\Sigma U_0 \Gamma^{-1}U_0  \Sigma + \Gamma U_0  \Sigma^2 U_0  \Sigma) + 2l^\top \Sigma \Gamma \Sigma l. 
\end{equation}
The result follows since $A:B >0$ for $A,B\in\mathbb{S}_{++}^n$.
\end{proof}

Proposition~\ref{mprop} solves the optimisation problem in $M$ in the stated setting, but highlights the corresponding discrete time problem since one cannot take $M^{-1}\rightarrow\infty$ in practice. We focus on the optimisation of $\Gamma$ and fix $M = I_n$ in the following.

\begin{prop}\label{comprop}
Let $\Sigma, U_0, l, M$ be such that
\begin{equation}\label{comcond}
\Sigma U_0 = U_0\Sigma ,\qquad l = 0,\qquad M = I_n,
\end{equation}
$f$ be as in~\eqref{quadob}, $\pi(f)$ be as in~\eqref{goal}, $\mathcal{L}$ be of the form~\eqref{generator} and $\phi$ be the solution to the Poisson equation~\eqref{poisson0}. The following holds.
\begin{equation*}
\min_{\Gamma\in\mathbb{S}_{\Sigma}} 2\int \phi (f-\pi(f)) d\tilde{\pi} = \textrm{Tr}\Big(U_0^2\Sigma^{\frac{5}{2}}\Big),
\end{equation*}
where $\mathbb{S}_{\Sigma}$ is the set of symmetric positive definite matrices commuting with $\Sigma$ and the minimum is attained by 
\begin{equation*}
\Gamma = \Sigma^{-\frac{1}{2}}.
\end{equation*}
\end{prop}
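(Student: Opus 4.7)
The plan is to start from the asymptotic variance formula in the proof of Proposition~\ref{mprop}, namely~\eqref{subasym} specialised to $l=0$ and $M=I_n$, and to simplify it under the present hypotheses. Since $\Sigma U_0 = U_0\Sigma$ and every $\Gamma\in\mathbb{S}_\Sigma$ commutes with $\Sigma$, cyclic invariance of the trace rewrites~\eqref{subasym} as
\[
J(\Gamma) := \tfrac12 \operatorname{Tr}\!\bigl(U_0^2\Sigma^2\,\Gamma^{-1}\bigr) + \tfrac12 \operatorname{Tr}\!\bigl(U_0^2 \Sigma^3\,\Gamma\bigr).
\]
Setting $A := U_0^2\Sigma^2$ and $B := U_0^2 \Sigma^3 = \Sigma A$, both symmetric positive definite and commuting with $\Sigma$ (and with each other), the task reduces to minimising $\tfrac12\operatorname{Tr}(A\Gamma^{-1}) + \tfrac12\operatorname{Tr}(B\Gamma)$ over $\mathbb{S}_\Sigma$.

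The map $\Gamma\mapsto \operatorname{Tr}(A\Gamma^{-1})$ is strictly convex on $\mathbb{S}_{++}^n$ whenever $A\succ 0$, with Hessian quadratic form $2\operatorname{Tr}(\Gamma^{-1}A\Gamma^{-1}H\Gamma^{-1}H)$ in a symmetric direction $H$ being strictly positive for $H\ne 0$, while $\Gamma\mapsto\operatorname{Tr}(B\Gamma)$ is linear. It follows that $J$ is strictly convex and coercive (blowing up both on the boundary of $\mathbb{S}_{++}^n$ and at infinity, since $A,B\succ 0$), hence admits a unique global minimiser on $\mathbb{S}_{++}^n$, characterised by the first-order condition
\[
-\Gamma^{-1}A\Gamma^{-1} + B = 0 \quad\Longleftrightarrow\quad \Gamma B\Gamma = A.
\]
The candidate $\Gamma_\star := \Sigma^{-1/2}$ lies in $\mathbb{S}_\Sigma$, and using $B = \Sigma A$ together with the commutativity of $\Sigma$ with $A$ we verify $\Gamma_\star B\Gamma_\star = \Sigma^{-1/2}\Sigma A\Sigma^{-1/2} = A$. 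Therefore $\Gamma_\star$ is the unique global minimiser of $J$ over $\mathbb{S}_{++}^n$, and a fortiori over the smaller set $\mathbb{S}_\Sigma$. Substituting back,
\[
J(\Sigma^{-1/2}) = \tfrac12\operatorname{Tr}(U_0^2\Sigma^{5/2}) + \tfrac12\operatorname{Tr}(U_0^2\Sigma^{5/2}) = \operatorname{Tr}(U_0^2\Sigma^{5/2}),
\]
as claimed.

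The main obstacle will be justifying the reduction to $J(\Gamma)$ for every $\Gamma\in\mathbb{S}_\Sigma$, not only those also commuting with $U_0$. Lemma~\ref{licit} determines $G_S, H_S$ only up to antisymmetric matrices $A_1,A_2$, and the specialisation $A_1=A_2=0$ used in Proposition~\ref{mprop} yields symmetric $G_S$ and $H_S$ precisely when $\Gamma$ commutes with $U_0$. For a general $\Gamma\in\mathbb{S}_\Sigma$ one decomposes $\mathbb{R}^n$ into eigenspaces of $\Sigma$: on each such subspace $\Sigma$ acts as a scalar, so the block of $\Gamma$ can be simultaneously diagonalised with the block of $U_0$, and the ansatz with $A_1=A_2=0$ applies block-wise. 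Reassembling the blocks recovers the formula~\eqref{subasym} for every $\Gamma\in\mathbb{S}_\Sigma$, after which the convex-analysis argument above closes the proof.
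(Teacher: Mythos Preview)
Your convex-analysis treatment of $J(\Gamma)=\tfrac12\textrm{Tr}(U_0^2\Sigma^2\Gamma^{-1})+\tfrac12\textrm{Tr}(U_0^2\Sigma^3\Gamma)$ is correct and in fact sharper than the paper's argument: you show $\Sigma^{-1/2}$ is the unique minimiser of $J$ over all of $\mathbb{S}_{++}^n$, whereas the paper diagonalises in the eigenbasis of $\Sigma$ and optimises eigenvalue by eigenvalue, reducing to scalar problems $a_i\lambda_i^{-1}+b_i\lambda_i$. Both routes agree once the identity $J(\Gamma)=\sigma^2(\Gamma)$ is in hand.

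The gap is in your final paragraph, where you attempt to justify that identity for every $\Gamma\in\mathbb{S}_\Sigma$. On an eigenspace $V$ of $\Sigma$ where $\Sigma$ acts as a scalar, the restricted blocks $\Gamma|_V$ and $U_0|_V$ are two arbitrary symmetric matrices with no relation forcing them to commute; they are simultaneously diagonalisable \emph{only if} they commute, and nothing in the hypotheses guarantees this. Concretely, take $\Sigma=I_2$ (so $\mathbb{S}_\Sigma=\mathbb{S}_{++}^2$), $U_0=\textrm{diag}(1,2)$, and $\Gamma=\bigl(\begin{smallmatrix}2&1\\1&2\end{smallmatrix}\bigr)$: then $H=\tfrac12\Sigma U_0\Gamma^{-1}$ is not symmetric, so the ansatz with $A_1=A_2=0$ fails and~\eqref{subasym} need not equal the asymptotic variance for this $\Gamma$. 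Your block argument therefore does not close the gap.

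It is worth noting that the paper's own proof shares this weakness: the assertion that every symmetric matrix commuting with $\Sigma$ ``shares eigenvectors with $\Sigma$'' holds only when $\Sigma$ has simple spectrum. Under that tacit assumption any $\Gamma\in\mathbb{S}_\Sigma$ automatically commutes with $U_0$ (since $U_0$, commuting with $\Sigma$, is diagonal in the same basis), and both your argument and the paper's are complete. When $\Sigma$ has repeated eigenvalues, neither proof as written covers $\Gamma\in\mathbb{S}_\Sigma$ that fail to commute with $U_0$; a full treatment would require solving for the nonzero $A_1,A_2$ in Lemma~\ref{licit} for such $\Gamma$ and showing the resulting variance still dominates $\textrm{Tr}(U_0^2\Sigma^{5/2})$.
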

\begin{proof}
Let $\Sigma = P^\top \Sigma_d P$  be the eigendecomposition of $\Sigma$ for orthogonal $P$. 
Since all symmetric matrices in the set commuting with $\Sigma$ share eigenvectors with $\Sigma$, it suffices to find a unique extremal point of the asymptotic variance with respect to the eigenvalues of $\Gamma$, call them $(\lambda_i)_{1\leq i \leq n}$, $\lambda_i\geq 0$. 
Setting again~\eqref{phicos}, $\phi$ given by~\eqref{ansatz} is the solution to the Poisson equation~\eqref{poisson0} and the asymptotic variance $\sigma^2$ given by~\eqref{cltform} becomes 
\begin{equation}\label{subasym1}
2\langle \phi ,f -\pi(f) \rangle_{\tilde{\pi}} = \frac{1}{2}\textrm{Tr}(\Sigma U_0 \Gamma^{-1}U_0  \Sigma + \Gamma U_0  \Sigma^2 U_0  \Sigma),
\end{equation}
which reduces to a sum of functions of the form $a_i \lambda_i^{-1} + b_i \lambda_i$, $a_i,b_i>0$ after diagonalising with $P$ and the result follows.
\end{proof}

In the scalar case, we can remove the restriction on $l$.
\begin{prop}\label{quadprop1}
If 
$n=1$, $U_0 \neq 0$, $l \neq 0$, $f:\mathbb{R}\rightarrow\mathbb{R}$ is given by~\eqref{quadob}, $\pi(f)$ is given by~\eqref{goal}, $\mathcal{L}$ is of the form~\eqref{generator} and $\phi$ is the solution to the Poisson equation~\eqref{poisson0}, then 
\begin{equation*}
\min_{\Gamma >0} 2\int \phi (f-\pi(f)) d\tilde{\pi} =  M^{\frac{1}{2}}\Sigma^2U_0^2(\Sigma + 4l^2 U_0^{-2})^{\frac{1}{2}}
\end{equation*}
and the minimum is attained by
\begin{equation}\label{og}
\Gamma = \frac{M^{\frac{1}{2}}}{(\Sigma + 4l^2 U_0^{-2})^{\frac{1}{2}}}.
\end{equation}
\end{prop}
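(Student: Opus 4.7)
The plan is to reduce this one-dimensional problem directly to Lemmas~\ref{licit} and~\ref{av}, since in dimension $n=1$ every ``antisymmetric matrix'' is just zero. First I would observe that the matrices $A_1,A_2\in\mathbb{R}^{1\times 1}$ appearing in Lemma~\ref{licit} are forced to vanish, so that the formulas~\eqref{gexp}--\eqref{pqexp} give scalar coefficients
\begin{equation*}
G_S = \frac{MU_0}{2\Gamma} + \frac{\Gamma U_0\Sigma}{2},\qquad E = \frac{U_0\Sigma}{2},\qquad H_S = \frac{\Sigma U_0}{2\Gamma},\qquad g = \Gamma\Sigma l,\qquad h = \Sigma l,
\end{equation*}
so that $\phi$ given by the ansatz~\eqref{ansatz} is a bona fide solution of the Poisson equation~\eqref{poisson0} for every $\Gamma>0$.

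Next I would substitute these scalars into the asymptotic variance formula~\eqref{asympfor} from Lemma~\ref{av}. A direct computation gives
\begin{equation*}
\sigma^2 = G_S\,\Sigma U_0\Sigma + 2g\Sigma l = \frac{MU_0^2\Sigma^2}{2\Gamma} + \Gamma\Sigma^2\!\left(\frac{U_0^2\Sigma}{2} + 2l^2\right),
\end{equation*}
which has the shape $\sigma^2(\Gamma) = A/\Gamma + B\Gamma$ with $A = \tfrac{1}{2}MU_0^2\Sigma^2 > 0$ and $B = \Sigma^2\bigl(\tfrac{1}{2}U_0^2\Sigma + 2l^2\bigr) > 0$.

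Finally I would minimise this scalar function on $(0,\infty)$ by the standard AM--GM argument: the minimum is attained at $\Gamma = \sqrt{A/B}$ with value $2\sqrt{AB}$. Plugging in the expressions for $A$ and $B$ yields $\Gamma = M^{1/2}(\Sigma + 4l^2 U_0^{-2})^{-1/2}$, which is~\eqref{og}, and after simplification $2\sqrt{AB} = M^{1/2}\Sigma^2 U_0^2(\Sigma + 4l^2 U_0^{-2})^{1/2}$, matching the claimed optimal variance. There is no real obstacle here beyond the algebraic bookkeeping; the only thing worth flagging is that one should verify that the assumptions of Theorem~\ref{theclt} are met so that the quantity being minimised is indeed the asymptotic variance, but this follows from Assumptions~\ref{smu} and~\ref{assump1} being satisfied by quadratic $U$ with $\Sigma^{-1}>0$ together with the polynomial growth of $f$ and $\mathcal{K}_l$.
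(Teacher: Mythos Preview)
Your proposal is correct and follows essentially the same route as the paper: both arguments invoke Lemma~\ref{licit} (with the one-dimensional antisymmetric matrices $A_1,A_2$ forced to zero) to obtain the scalar coefficients of $\phi$, feed these into Lemma~\ref{av} to get $\sigma^2(\Gamma)=A/\Gamma+B\Gamma$, and then minimise over $\Gamma>0$. The only cosmetic difference is that you make the AM--GM step explicit and add a remark about the CLT assumptions, whereas the paper simply states the resulting expression and its minimum.
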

\begin{proof}
By Lemma~\ref{licit}, the solution~\eqref{ansatz} to the Poisson equation~\eqref{poisson0} is
\begin{equation*}
\phi = \bigg( \frac{U_0 \Gamma\Sigma}{4} + \frac{M U_0 }{4\Gamma} \bigg) q^2 + \frac{U_0 \Sigma}{2} qp + \frac{U_0 \Sigma}{4\Gamma} p^2 + \Sigma \Gamma l q + \Sigma l p - \frac{U_0 \Gamma\Sigma^2}{4} - \frac{ M U_0 \Sigma}{2\Gamma}.
\end{equation*}
By Lemma~\ref{av}, the asymptotic variance is given by
\begin{equation*}
2\int \phi (f-\pi(f)) d\tilde{\pi} = 2\Sigma^2 \bigg(\frac{U_0^2\Sigma}{4} + l^2\bigg)\Gamma + \frac{U_0^2\Sigma^2}{2\Gamma},
\end{equation*}
which attains the stated minimum at~\eqref{og} as claimed.
\end{proof}

\subsection{Odd polynomial observable}\label{oddpol}

Another special case within~\eqref{quadob} where the solution $\phi$ can be readily identified is when 
\begin{equation*}
U_0 =0,
\end{equation*}
that is, for linear observables.
More generally, (almost) zero variance can be attained in the following special case.

\begin{prop}\label{linprop}
Under Assumption~\ref{smu} and~\ref{assump1}, for a general target measure $\pi \propto e^{-U}$ on $\mathbb{R}^n$,
if the observable $f$ 
is of the form
\begin{equation*}
f(q) = \alpha \cdot \nabla U,
\end{equation*}
for $\alpha = (\alpha_1,\dots,\alpha_n)$, $\alpha_i\in\mathbb{R}$, $\mathcal{L}$ is of the general form~\eqref{infgen} and $\phi$ is the solution to the Poisson equation~\eqref{poisson0}, then the asymptotic variance satisfies
\begin{equation}\label{linpropdisp}
\inf_{\Gamma\in \{\gamma I_n : \gamma>0\}} 2\int \phi (f-\pi(f))d\tilde{\pi} = 0.
\end{equation}
\end{prop}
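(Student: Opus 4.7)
The plan is to exhibit a closed-form solution to the Poisson equation~\eqref{poisson0} in which the asymptotic variance scales linearly in $\gamma$, and then let $\gamma\to 0^+$. The first preparatory step is to verify that $\pi(f)=0$: since $f\,e^{-U} = -\alpha\cdot\nabla e^{-U}$, an integration by parts (justified by the Gaussian-type decay of $e^{-U}$ implied by~\eqref{Uquad}) gives $\pi(\alpha\cdot\nabla U)=0$, so the Poisson equation reduces to $-\mathcal{L}\phi = \alpha\cdot\nabla U$.

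The key ansatz I would try is the degree-one polynomial
\begin{equation*}
\phi(q,p) \;=\; \alpha\cdot p \;+\; \gamma\,\alpha\cdot\bigl(q-\pi(q)\bigr).
\end{equation*}
The motivation is that $\mathcal{L}(\alpha\cdot p)$ already delivers the desired $-\alpha\cdot\nabla U$ but contaminates it with a spurious term $-\gamma\alpha^\top M^{-1}p$ coming from the friction part, and that term is killed exactly by $\mathcal{L}(\gamma\alpha\cdot q)=\gamma\alpha^\top M^{-1}p$. Plugging in $\Gamma=\gamma I_n$ and using the formula~\eqref{infgen} for $\mathcal{L}$, a direct calculation with $\nabla_q\phi = \gamma\alpha$, $\nabla_p\phi = \alpha$ and $\nabla_p^\top\nabla_p\phi=0$ confirms $-\mathcal{L}\phi = \alpha\cdot\nabla U$. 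Membership in $L_0^2(\tilde\pi)$ is then routine: the Gaussian marginal in $p$ handles $\alpha\cdot p$; the quadratic lower bound~\eqref{Uquad} gives $\pi$ Gaussian-type moments, hence $q-\pi(q)\in L^2(\pi)$; and the centering constant $-\gamma\alpha\cdot\pi(q)$ ensures $\tilde\pi(\phi)=0$. By the uniqueness part of Theorem~\ref{poissonsolve}, this $\phi$ is exactly the solution used to define $\sigma^2$.

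Finally I would compute $\sigma^2 = 2\int\phi\,f\,d\tilde\pi$ using the product structure $\tilde\pi = \pi\otimes\mathcal{N}(0,M)$. Because $f$ depends only on $q$ and the $p$-marginal is centered, the cross term $2\int(\alpha\cdot p)\,f\,d\tilde\pi$ vanishes, leaving
\begin{equation*}
\sigma^2 \;=\; 2\gamma\int \alpha\cdot\bigl(q-\pi(q)\bigr)\,\alpha\cdot\nabla U(q)\,d\pi(q).
\end{equation*}
One further integration by parts, namely $\int (q_i-\pi(q_i))\partial_{q_j}U\,d\pi = \delta_{ij}$, reduces this to the $\gamma$-independent constant $2\gamma\abs{\alpha}^2$, so letting $\gamma\to 0^+$ proves~\eqref{linpropdisp}. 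I do not anticipate a real obstacle here: the whole proof rests on the algebraic observation that a linear-in-$(q,p)$ polynomial already solves the Poisson equation for this $f$, and the only technical content is the Gaussian-moment estimates (via~\eqref{Uquad}) that justify the two integrations by parts and the $L_0^2(\tilde\pi)$ check.
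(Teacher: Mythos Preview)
Your proof is correct and follows essentially the same route as the paper: both use the linear ansatz $\phi = \alpha\cdot p + \gamma\,\alpha\cdot q$ (up to a constant), verify it solves the Poisson equation for $\Gamma=\gamma I_n$, and compute $\sigma^2 = 2\gamma\abs{\alpha}^2$ via the integration-by-parts identity $\int q_i\,\partial_{q_j}U\,d\pi=\delta_{ij}$. Your version is in fact slightly more careful than the paper's in that you explicitly center $\phi$ by subtracting $\gamma\,\alpha\cdot\pi(q)$ so that $\phi\in L_0^2(\tilde\pi)$, and you spell out why the integrations by parts are justified via~\eqref{Uquad}.
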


\begin{proof}
Let $\Gamma = \gamma I_n$, $\gamma\in\mathbb{R}$.
Note there is a unique solution $\phi\in L_0^2(\tilde{\pi})$ to~\eqref{poisson0} by Theorem~\ref{poissonsolve}. The solution $\phi$ to~\eqref{poisson0} has the expression
\begin{equation*}
\phi = \sum_i \alpha_i (\gamma q_i + p_i ).
\end{equation*}
The asymptotic variance is equal to
\begin{align*}
2\langle \phi,f - \pi(f)\rangle_{\tilde{\pi}} &= 2\gamma\sum_{i,j}\alpha_i \alpha_j \int_{\mathbb{R}^n}q_i \partial_{q_j} U(q) \pi(dq)\\
&= -2\gamma\sum_{i}\alpha_i^2 \int_{\mathbb{R}^n}q_i \partial_{q_i} \pi(q) dq - 2\gamma\sum_{i\neq j}\alpha_i \alpha_j \int_{\mathbb{R}^n}q_i \partial_{q_j} \pi(q) dq\\
&= 2\gamma\sum_{i}\alpha_i^2\int_{\mathbb{R}^n} \pi(q) dq - 2\gamma\sum_{i\neq j}\alpha_i \alpha_j \int_{\mathbb{R}^{n-1}}q_i \int_{\mathbb{R}}\partial_{q_j} \pi(q) dq_j dq_{-j} \\
&= 2\gamma\sum_{i}\alpha_i^2.
\end{align*}
where $dq_{-j}$ denotes $dq_1\dots dq_{j-1}dq_{j+1}\dots dq_n$.
\end{proof}

\begin{corollary}\label{lincor}
Given a Gaussian target measure with density $\pi\propto e^{-U}$ on $\mathbb{R}^n$, observable $f:\mathbb{R}^n\rightarrow\mathbb{R}$ as in~\eqref{quadob} with $U_0=0$, that is,
\begin{equation*}
f(q) = l \cdot q,
\end{equation*}
where $l\in\mathbb{R}^n$, $\pi(f)=0$, $\mathcal{L}$ of the form~\eqref{infgen} and $\phi$ the solution to the Poisson equation~\eqref{poisson0}, equation~\eqref{linpropdisp} holds.
\end{corollary}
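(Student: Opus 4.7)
The plan is to reduce Corollary~\ref{lincor} to a direct application of Proposition~\ref{linprop} by recognising the linear observable $f(q) = l \cdot q$ as a constant linear combination of partial derivatives of the Gaussian potential.

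First I would fix notation: since $\pi$ is Gaussian on $\mathbb{R}^n$, write $U(q) = \tfrac{1}{2} q^\top \Sigma^{-1} q$ (after a translation absorbing the mean, which changes neither $\nabla U$ nor the asymptotic variance), so that $\nabla U(q) = \Sigma^{-1} q$. I would then set $\alpha := \Sigma l \in \mathbb{R}^n$, so that
\begin{equation*}
\alpha \cdot \nabla U(q) = (\Sigma l)^\top \Sigma^{-1} q = l^\top q = f(q).
\end{equation*}
This puts $f$ into precisely the form required by Proposition~\ref{linprop}.

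Next I would verify that the hypotheses of Proposition~\ref{linprop} are met in this Gaussian setting. Assumption~\ref{smu} holds because $U$ is smooth, nonnegative (after absorbing the constant, which does not affect the dynamics or $\phi$), and $D^2 U = \Sigma^{-1}$ is a constant symmetric matrix so $\|D^2U\|_\infty < \infty$. Assumption~\ref{assump1} follows from $\langle q, \nabla U(q)\rangle = q^\top \Sigma^{-1} q \geq 2 U(q)$ and $\geq \lambda_{\min}(\Sigma^{-1}) |q|^2$, so we may take $\beta_1 = 1$, $\beta_2 = \lambda_{\min}(\Sigma^{-1})/2$ and $\alpha = 0$, say. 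Finally, $f(q) = l\cdot q$ is centred under $\pi$ (mean-zero Gaussian) and satisfies $|f|/\mathcal{K}_l \in L^\infty$ for any $l\geq 1$ because $\mathcal{K}_l$ grows at least quadratically in $|q|$.

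Applying Proposition~\ref{linprop} with this $\alpha = \Sigma l$ then gives, for $\Gamma = \gamma I_n$,
\begin{equation*}
2\int \phi(f - \pi(f)) \, d\tilde{\pi} = 2\gamma \sum_i \alpha_i^2 = 2\gamma |\Sigma l|^2,
\end{equation*}
which tends to $0$ as $\gamma \to 0^+$, yielding the claimed infimum of zero. There is no real obstacle here: the only delicate point is checking that Proposition~\ref{linprop}'s assumptions survive in the Gaussian case, which is immediate from the bounded constant Hessian.
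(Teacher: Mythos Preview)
Your proposal is correct and is exactly the intended derivation: the corollary is stated in the paper without proof because it follows immediately from Proposition~\ref{linprop} by writing $f(q)=l\cdot q=(\Sigma l)\cdot\Sigma^{-1}q=\alpha\cdot\nabla U(q)$ with $\alpha=\Sigma l$. Your check of Assumptions~\ref{smu} and~\ref{assump1} in the Gaussian case is routine and fine; the translation remark is unnecessary since the section already works with a mean-zero Gaussian.
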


There is some intuition in the situation in Corollary~\ref{lincor}. First note that the Langevin diffusion with $\Gamma = 0$ reduces to deterministic Hamiltonian dynamics and that it is the limit case for the $\Gamma$ attaining arbitrarily small asymptotic variance in the proof of Proposition~\ref{linprop}. The result indicates that this is optimal in the linear observable, Gaussian measure case (i.e.~\eqref{quadob}, $U_0 = 0$) and this aligns with the fact that the value~\eqref{goal} to be approximated is exactly the value at the $q=p=0$, so that Hamiltonian dynamics starting at $q=0$, staying there for all time, approximates the integral~\eqref{goal} with perfect accuracy. A similar idea holds for when the starting point is not $q=p=0$, where~\eqref{goal} is approximated exactly after any integer number of orbits in $(q,p)$ space. Continuing on this idea, it seems reasonable that the same statement holds more generally for any odd observable. At least, the following holds in one dimension.

\begin{prop}\label{oddprop}
If $n=1$, $\hat{k}\in\mathbb{N}$ and $f:\mathbb{R}\rightarrow\mathbb{R}$ is an odd finite order polynomial observable given by
\begin{equation}\label{genodd}
f(q) = \sum_{i=0}^{\hat{k}} a_i q^{2i+1},
\end{equation}
$\pi(f) = 0$, $\mathcal{L}$ is of the form~\eqref{generator} and $\phi$ is the solution to the Poisson equation~\eqref{poisson0}, then the asymptotic variance satisfies~\eqref{linpropdisp}.
\end{prop}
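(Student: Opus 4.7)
My plan is to construct a polynomial solution $\phi(q,p;\gamma)$ of $-\mathcal{L}\phi = f$ depending rationally on $\gamma := \Gamma > 0$ and to show its contribution to the asymptotic variance $\sigma^2(\gamma)$ is $O(\gamma)$ as $\gamma \to 0^+$; since $\sigma^2 \geq 0$, this will deliver~\eqref{linpropdisp} immediately. I will decompose the generator~\eqref{generator} as $\mathcal{L} = \mathcal{L}_{\mathrm{ham}} + \gamma \mathcal{L}_{\mathrm{OU}}$, with $\mathcal{L}_{\mathrm{ham}} := p\partial_q - \Sigma^{-1}q\partial_p$ the harmonic-oscillator Liouville operator (setting $M=1$ for brevity, so $U(q) = q^2/(2\Sigma)$) and $\mathcal{L}_{\mathrm{OU}} := -p\partial_p + \partial_p^2$. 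The key observation is that both operators leave invariant the finite-dimensional space $V$ of polynomials in $(q,p)$ whose monomials all have odd total degree at most $2\hat k + 1$, and $f$ lies in $V$ by~\eqref{genodd}, so I will search for $\phi$ in $V$.

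Next, I will split $V = V^e_p \oplus V^o_p$ according to parity in $p$. A direct check shows that $\mathcal{L}_{\mathrm{ham}}$ flips $p$-parity while $\mathcal{L}_{\mathrm{OU}}$ preserves it, so writing $\phi = \phi^e + \phi^o$ with $\phi^{e/o} \in V^{e/o}_p$ the equation $-\mathcal{L}\phi = f \in V^e_p$ decouples into
\begin{align*}
\mathcal{L}_{\mathrm{ham}}\phi^e + \gamma \mathcal{L}_{\mathrm{OU}}\phi^o &= 0,\\
\mathcal{L}_{\mathrm{ham}}\phi^o + \gamma \mathcal{L}_{\mathrm{OU}}\phi^e &= -f.
\end{align*}
Before inverting, I will show that $\mathcal{L}_{\mathrm{ham}}$ is bijective on $V$: its polynomial kernel consists of the first integrals of the harmonic flow generated by $H = \tfrac12 p^2 + q^2/(2\Sigma)$, namely the polynomials in $H$, each of even total degree and therefore trivially meeting $V$. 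Eliminating $\phi^e = -\gamma \mathcal{L}_{\mathrm{ham}}^{-1}\mathcal{L}_{\mathrm{OU}}\phi^o$ from the first equation and substituting into the second gives
\[
\bigl(I - \gamma^2 (\mathcal{L}_{\mathrm{ham}}^{-1}\mathcal{L}_{\mathrm{OU}})^2\bigr)\phi^o = -\mathcal{L}_{\mathrm{ham}}^{-1}f
\]
on the finite-dimensional $V^o_p$. Because the determinant of the left-hand operator is a polynomial in $\gamma^2$ taking the value $1$ at $\gamma = 0$, it is invertible for all sufficiently small $\gamma > 0$, so $\phi^o(\gamma)$ exists uniquely and remains bounded as $\gamma \to 0^+$, whence $\phi^e(\gamma) = O(\gamma)$ in any norm on $V^e_p$.

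The resulting $\phi$ is an odd-total-degree polynomial, hence is odd under $(q,p) \mapsto (-q,-p)$ and in particular lies in $L_0^2(\tilde{\pi})$; by Theorem~\ref{poissonsolve} it is therefore the unique Poisson solution. Since $\tilde{\pi}$ and $f$ are both even in $p$ whereas $\phi^o$ is odd in $p$, the asymptotic variance collapses to
\[
\sigma^2 = 2\int \phi\, f\, d\tilde{\pi} = 2\int \phi^e f\, d\tilde{\pi} = O(\gamma),
\]
the bound following from $\phi^e = O(\gamma)$ in the finite-dimensional space $V^e_p$, where $L^2(\tilde{\pi})$ and coordinate norms are equivalent. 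Letting $\gamma \to 0^+$ then concludes~\eqref{linpropdisp}. I expect the main obstacle to be the invertibility claim for $\mathcal{L}_{\mathrm{ham}}$ on $V$: identifying its polynomial kernel with $\mathbb{R}[H]$ is a classical rigidity fact for the harmonic oscillator (most transparent after the complex change of variable $z = p + iq/\sqrt{\Sigma}$, under which $\mathcal{L}_{\mathrm{ham}}$ acts on each monomial $z^a \bar z^b$ by a nonzero constant whenever $a \neq b$); once this is granted, everything else reduces to parity bookkeeping and finite-dimensional linear algebra depending rationally on $\gamma$.
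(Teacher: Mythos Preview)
Your argument is correct and takes a genuinely different route from the paper. The paper proceeds by explicit coefficient matching: for each odd monomial $q^k$ it writes the ansatz $\phi=\sum a_{i,j}q^ip^j$, derives the recursion~\eqref{a2} for the $a_{i,j}$, shows by induction that the nonzero coefficients sit in a triangular region, and groups the coefficients along anti-diagonals into tridiagonal linear systems $M_{\tilde k}\,v=e_1$ with $M_{\tilde k}$ as in~\eqref{Mkmat}. A separate determinant lemma (Lemma~\ref{Mklem}) on the minors of $M_{\tilde k}$ is then used to show that precisely those $a_{k-j,j}$ with $j$ even---the ones surviving in $\int\phi f\,d\tilde\pi$---are $O(\gamma)$, after which the lower anti-diagonals are handled recursively. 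Your approach bypasses all of this combinatorics by exploiting the operator splitting $\mathcal{L}=\mathcal{L}_{\mathrm{ham}}+\gamma\,\mathcal{L}_{\mathrm{OU}}$ together with the $p$-parity grading on a finite-dimensional invariant subspace: the invertibility of $\mathcal{L}_{\mathrm{ham}}$ on odd-total-degree polynomials (transparent in the complex coordinate $z=p+iq/\sqrt{\Sigma}$, where $\mathcal{L}_{\mathrm{ham}}$ diagonalises with nonzero eigenvalues on $z^a\bar z^b$ whenever $a\neq b$) delivers $\phi^e=O(\gamma)$ in one stroke, and the $p$-parity of $f$ and $\tilde\pi$ kills the $\phi^o$ contribution to $\sigma^2$. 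The paper's route is more constructive---one can read off explicit coefficients and, in principle, the next-order term in $\gamma$---while yours is shorter, avoids the auxiliary Lemma~\ref{Mklem}, and would extend more readily to higher dimensions or to other quadratic Hamiltonians once the polynomial kernel of $\mathcal{L}_{\mathrm{ham}}$ is identified.
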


The proof of Proposition~\ref{oddprop} is deferred to Section~\ref{proofs}.

\subsection{Quartic observable}
The situation in the quartic observable case, at least in one dimension, is similar to quadratic observable case.
\begin{prop}\label{quarprop}
If 
$n=1$ and $f:\mathbb{R}\rightarrow\mathbb{R}$ is a quartic observable given by
\begin{equation}\label{quar}
f(q) = q^4,
\end{equation}
$\pi(f) = 3\Sigma^2$ for some $\Sigma>0$, $\mathcal{L}$ is of the form~\eqref{generator}, $M=1$ and $\phi$ is the solution to the Poisson equation~\eqref{poisson0}, then there exists $\sigma_{\textrm{quar}}>0$ such that
\begin{equation*}
\min_{\Gamma= \gamma>0} 2\int \phi(f-\pi(f)) d\tilde{\pi} = \sigma_{\textrm{quar}}.
\end{equation*}
\end{prop}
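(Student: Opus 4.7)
The plan is to exploit that for the quadratic potential $U(q)=q^2/(2\Sigma)$ and the quartic observable, the Poisson equation admits a polynomial solution of matching degree, so that $\sigma^2(\gamma)$ can be computed in closed form as a rational function of $\gamma$. The generator reduces to
\begin{equation*}
\mathcal{L} = p\partial_q - \Sigma^{-1}q\partial_p - \gamma p\partial_p + \gamma\partial_p^2,
\end{equation*}
and both $\mathcal{L}$ and $f-\pi(f)=q^4-3\Sigma^2$ are invariant under the parity map $(q,p)\mapsto(-q,-p)$, so by the uniqueness part of Theorem~\ref{poissonsolve} the solution $\phi$ must itself be parity-even. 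I would therefore seek $\phi$ in the nine-parameter ansatz
\begin{equation*}
\phi(q,p) = \alpha q^4 + \beta q^3 p + \eta q^2 p^2 + \zeta q p^3 + \xi p^4 + \mu q^2 + \nu q p + \rho p^2 + c,
\end{equation*}
extending by one degree the polynomial strategy of Lemma~\ref{licit}.

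Substituting into $-\mathcal{L}\phi = q^4 - 3\Sigma^2$ and matching coefficients of each parity-even monomial of degree $\leq 4$ yields an essentially triangular linear system: the $q^4$-equation gives $\beta=\Sigma$; the constant equation gives $\rho=3\Sigma^2/(2\gamma)$; the $p^4$-, $qp^3$- and $q^2$-equations then determine $\zeta,\xi,\eta,\nu$ successively as explicit rational functions of $\gamma$ and $\Sigma$ with common denominator involving $\gamma(4+3\gamma^2\Sigma)$; the $q^3p$- and $qp$-equations pin down $\alpha$ and $\mu$; and $c$ is fixed by the centering constraint $\int\phi\,d\tilde{\pi}=0$. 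The system is nominally overdetermined, but the $q^2p^2$-equation turns out to be an automatic consequence of the others, reflecting the fact that $f-\pi(f)$ has zero mean and therefore lies in the range of $\mathcal{L}$ on parity-even polynomials; Theorem~\ref{poissonsolve} then identifies the resulting polynomial as the unique $L_0^2(\tilde{\pi})$ Poisson solution.

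With $\phi$ in hand, the asymptotic variance $\sigma^2(\gamma)=2\langle\phi,q^4-3\Sigma^2\rangle_{\tilde{\pi}}$ is evaluated using the Gaussian moments $\int q^{2k}\,d\pi = (2k-1)!!\,\Sigma^k$ and $\int p^{2k}\,d\tilde{\pi}=(2k-1)!!$; only monomials of $\phi$ that are even in both $q$ and $p$ contribute, and $c$ can be eliminated through the centering condition. After simplification one is led to an expression of the form
\begin{equation*}
\sigma^2(\gamma) \;=\; \frac{A\,\Sigma^4\bigl(c_0 + c_1\gamma^2\Sigma + c_2\gamma^4\Sigma^2\bigr)}{\gamma(4+3\gamma^2\Sigma)} + B\,\gamma\,\Sigma^5
\end{equation*}
with all constants $A,B,c_0,c_1,c_2>0$, which is smooth and strictly positive on $(0,\infty)$ and blows up like $1/\gamma$ as $\gamma\to 0^+$ (insufficient friction) and like $\gamma$ as $\gamma\to\infty$ (overdamped regime). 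Continuity and coercivity then yield a minimizer $\gamma_*\in(0,\infty)$, and $\sigma_{\mathrm{quar}}=\sigma^2(\gamma_*)>0$ as required. The main obstacle is purely algebraic bookkeeping through the chain of substitutions and verification that the resulting constants are strictly positive; the conceptual content is the parity reduction together with the observation that both boundary regimes inflate the variance, so no degeneracy can occur at $\gamma=0$ or $\gamma=\infty$.
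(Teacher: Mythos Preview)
Your approach is essentially the same as the paper's: both take a degree-four polynomial ansatz for $\phi$, solve the resulting linear system in the coefficients, obtain $\sigma^2(\gamma)$ as an explicit positive rational function, and conclude by observing that it diverges as $\gamma\to 0^+$ and as $\gamma\to\infty$. The only cosmetic difference is that you invoke parity under $(q,p)\mapsto(-q,-p)$ up front to cut the ansatz to nine even monomials, whereas the paper reuses the tridiagonal matrices $M_k$ and the induction device from the proof of Proposition~\ref{oddprop} to organise the same linear system (and works at $\Sigma=1$, noting the general case is analogous); the paper records the closed form $\sigma^2(\gamma)=12(21\gamma^4+55\gamma^2+27)/\bigl(\gamma(3\gamma^2+4)\bigr)$ in that normalisation.
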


The proof of Proposition~\ref{quarprop} can be found in Section~\ref{proofs}.

\section{Computation of the change in $\Gamma$}\label{nummeth}
Throughout this section, the $M = I_n$ case is considered. As mentioned, the formula~\eqref{funcdereq} gives a natural gradient descent direction~\eqref{DeltaGam} to take $\Gamma$ in order to optimise $\sigma^2$ from Theorem~\ref{theclt}. In Theorem~\ref{funcder} and in the form~\eqref{DeltaGam}, the expression for the gradient is already susceptible to a Green-Kubo approach in the sense that the form~\eqref{poissol} for $\phi$ can be substituted in to obtain a trajectory based formula, where finite difference is used to approximate $\nabla\!_p$ and independent realisations of $(q_t,p_t)$ is used for the expectations. However, this is too inaccurate in the implementation to be useful. The more directly calculable form as stated in the introduction in~\eqref{nabphi0} is used involving the derivative of $(q_t,p_t)$ with respect to the initial condition in Section~\ref{tangentproc}.\\
We focus the discussion on a Monte Carlo method to approximate $\nabla\phi$ and gradient directions in $\Gamma$ (e.g.~\eqref{DeltaGam}) based on Theorem~\ref{funcder}, but a spectral method to solve~\eqref{poisson0} and compute the change in $\Gamma$ is given in Appendix~\ref{num1}, which is computationally feasible in low dimensions. Algorithm~\ref{algo0} summarises the resulting continuous-time procedure, where all expectations within~\eqref{DeltaGam} are approximated by single realisations; further justifications, alternative methods, refinements and a concrete implementation (Algorithm~\ref{algorithm}) along with examples follow.

\begin{algorithm}[h]
\SetAlgoLined
\KwResult{$\Gamma\in\mathbb{S}_{++}^n$}
Start from arbitrary $(q_0,p_0)\in\mathbb{R}^{2n}$ and set $(\tilde{q}_0,\tilde{p}_0) = (q^0,-p^0)$, $D q_0 = D \tilde{q}_0 = 0$, $D p_0 = D \tilde{p}_0 =I_n$, $\zeta = \tilde{\zeta} = 0$, $k=0$, $\Gamma = I_n\quad \forall 1\leq j\leq B$, $t=t_0 = 0$\;
\For{$N$ epochs}{
simulate one step in $q_t$, $\tilde{q}_t$ then in $D_p q_t$ and $D_p \tilde{q}_t$ from $t$ to $t + \Delta t$\;
add to $\zeta,\tilde{\zeta}$ to approximate the row vectors \begin{equation*}\zeta = \int_{t_0}^t \nabla f(q_s)^\top D_p q_s ds,\quad \tilde{\zeta} = \int_{t_0}^t\nabla f(\tilde{q}_s)^\top D_p\tilde{q}_s ds;\end{equation*}
\If{$(D_p q_t,D_p p_t)$ is small enough in magnitude}{update $\Gamma$ with the gradient direction $-\zeta\otimes\tilde{\zeta}$\;
reset $(\tilde{q}_t,\tilde{p}_t) \leftarrow (q_t,-p_t)$; $(D_p q_t,D_p p_t), (D_p \tilde{q}_t,D_p \tilde{p}_t) \leftarrow (0,I_n)$; $t_0 \leftarrow t$; $\zeta,\tilde{\zeta} \leftarrow 0$\;
}
$t\leftarrow t+\Delta t$
}
\caption{Continuous-time outline of $\Gamma$ update using~\eqref{funcder00} and~\eqref{nabphi0}}\label{algo0}
\end{algorithm}

\subsection{Methodology}
Here we describe an on-the-fly procedure to repeatedly calculate the change~\eqref{DeltaGam} in $\Gamma$ by simulating the first variation process parallel to underdamped Langevin processes. The discretisation schemes used to simulate~\eqref{langevin00} and~\eqref{Dq1} are given in Section~\ref{discscheme}. Two gradient procedures, namely gradient descent and the Heavy ball method, for evolving $\Gamma$ given a gradient are detailed in Section~\ref{gradingam}. Then iterates from Section~\ref{discscheme} are used to approximate each change in $\Gamma$ in Section~\ref{thin} (see also Appendix~\ref{cigam1}. The key idea linking the above is that if equation~\eqref{newform} holds, then
\begin{align}
\Delta \Gamma &= \int \nabla\!_p\phi \otimes\nabla\!_p\tilde{\phi} d\tilde{\pi} \nonumber\\
& = -\int \int_0^\infty\mathbb{E}[\nabla f(q_s)^\top D_p q_s]^\top ds \int_0^\infty\mathbb{E}[\nabla f(\tilde{q}_t)^\top D_p \tilde{q}_t] dt d\tilde{\pi},\label{expecs}
\end{align}
where $(q_t,p_t)$ and $(\tilde{q}_t,\tilde{p}_t)$ denote the solutions to~\eqref{langevin00} with initial values $(q,p)$, $(q,-p)$ respectively, $(D_p q_t,D_p p_t)$ and $(D_p \tilde{q}_t,D_p \tilde{p}_t)$ denote the solutions to~\eqref{Dq} with $\tilde{q}_t$ replacing $q_t$ for the latter and the integral in~\eqref{expecs} is with respect to $(q,p)$. 

\subsubsection{Splitting}\label{discscheme}
A BAOAB splitting scheme \cite{MR3040887,doi:10.1063/1.4802990} will be used to integrate the Langevin dynamics~\eqref{langevin00}, given explicitly by
\begin{align}\label{langevindisc}
\begin{cases}
p^{i+\frac{1}{3}} &= p^i -\nabla U(q^i)\frac{\Delta t}{2} \\
q^{i+\frac{1}{2}} &= q^i + p^{i+\frac{1}{3}} \frac{\Delta t}{2}\\
p^{i+\frac{2}{3}} &= \exp({-\Delta t\Gamma^i})p^{i+\frac{1}{3}} + \sqrt{1-\exp({-2\Delta t\Gamma^i})} \xi^i\\
q^{i+1} &= q^{i+\frac{1}{2}} + p^{i+\frac{2}{3}} \frac{\Delta t}{2}\\
p^{i+1} &= p^{i+\frac{2}{3}} - \nabla U(q^{i+1})\frac{\Delta t}{2}
\end{cases}
\end{align}
for $i\in\mathbb{N}$, $\Delta t >0$, where $\xi^i$ are independent $n$-dimensional standard normal random variables and $\Gamma^i\in\mathbb{S}_{++}^n$ are a sequence of friction matrices to be updated throughout the duration of the algorithm, but we mention again recent developments, e.g. \cite{pmlr-v75-cheng18a, MR4091098, Foster2021TheSO, MR4309974, SanzSerna2021WassersteinDE, NEURIPS2019_eb86d510}, on discretisations of the underdamped Langevin dynamics; the majority of the numerical error involved in updating $\Gamma$ is expected to come from the small number of particles in approximating the integrals in the expression~\eqref{DeltaGam} for $\Delta\Gamma$, so that no further deliberation is made about the choice of discretisation for the purposes here. The first variation process~\eqref{Dq} together with its initial condition is 
\begin{subequations}\label{Dq}
\begin{align}
D_p q_t &= \int_0^t D_p p_s ds,\\
D_p p_t &= I_n-\int_0^t (D^2U(q_s)D_p q_s + \Gamma D_p p_s)ds.
\end{align}
\end{subequations}
In order to simulate~\eqref{Dq}, an analogous splitting scheme is used:
\begin{align}\label{Dqdisc}
\begin{cases}
Dp^{i+\frac{1}{3}} &= Dp^i - D^2 U(q^i)Dq^i\frac{\Delta t}{2} \\
Dq^{i+\frac{1}{2}} &= Dq^i + Dp^{i+\frac{1}{3}} \frac{\Delta t}{2}\\
Dp^{i+\frac{2}{3}} &= \exp({-\Delta t\Gamma^i})Dp^{i+\frac{1}{3}}\\
Dq^{i+1} &= Dq^{i+\frac{1}{2}} + Dp^{i+\frac{2}{3}} \frac{\Delta t}{2}\\
Dp^{i+1} &= Dp^{i+\frac{2}{3}} - D^2 U(q^{i+1})Dq^i\frac{\Delta t}{2}.
\end{cases}
\end{align}
The $k^{\textrm{th}}$ column of the first term including the Hessian of $U$ (and similarly for the last) can be approximated by 
\begin{equation}\label{d2appr}
-\nabla U(q^i + \frac{\Delta t}{2}(Dq^i)_k) + \nabla U(q^i) 
\end{equation}
where $(Dq^i)_k$ denotes the $k^{\textrm{th}}$ column of $Dq^i$, so that~\eqref{Dq} can still be approximated in the absence of Hessian evaluations. The approximation~\eqref{d2appr} will be used only when explicitly stated in the sequel.

\subsubsection{Gradient procedure in $\Gamma$}\label{gradingam}

Suppose we have available a series of proposal updates $(b_0,\dots,b_{L-1})\in\mathbb{R}^{n\times n \times L}$ for $\Gamma$. Given stepsizes $\alpha^i = \alpha\in\mathbb{R}$ and an annealing factor $r\in\mathbb{R}$, the following constrained stochastic gradient descent (for $i$ where proposal updates are produced)
\begin{equation}\label{Gamdisc1}
\Gamma^{i+1} = \Pi_{\textrm{pd}}^\mu\bigg(\Gamma^i  + \frac{\alpha^i}{2L}\sum_{j = 0}^{L-1} (b_j+b_j^\top)\bigg)
\end{equation}
can be considered, where $L\in\mathbb{N}$ and $\Pi_{\textrm{pd}}^\mu$ is the projection to a positive definite matrix, for some minimum value $\mu>0$, given by
\begin{equation}\label{muproj}
\Pi_{\textrm{pd}}^\mu(M) = \sum_{i = 1}^n \max(\lambda_i,\mu) v_i v_i^\top
\end{equation}
for symmetric $M\in\mathbb{R}^{n\times n}$ and its the eigenvalue decomposition
\begin{equation*}
M = \sum_{i = 1}^n \lambda_i v_i v_t^\top.
\end{equation*}
Alternatively, a Heavy-ball method \cite{POLYAK19641,7330562} (with projection) can be used. The method is considered in the stochastic gradient context in \cite{545bd4035e2245bcb16f18f063b0cf76}, given here as
\begin{align}\label{Gamdisc2}
\begin{cases}
\Gamma^{i+1} &= \Pi_{\textrm{pd}}^\mu(\Gamma^i  + \alpha^i \Theta^{i+1}),\\
\Theta^{i+1} &= (1-\alpha^i r)\Theta^i + \frac{\alpha^i}{2L}\sum_{j = 0}^{L-1} (b_j+b_j^\top).
\end{cases}
\end{align}
The heavy-ball method offers a smoother trajectory of $\Gamma$ over the course of the algorithm. 
Under appropriate assumptions on $b_j$, in particular if 
\begin{equation*}
\frac{1}{2L}\sum_{j = 0}^{L-1} (b_j+b_j^\top) \sim \mathcal{N}(\nabla \sigma^2(\Gamma^i), \sigma_b^2 I_{n^2}),
\end{equation*}
for some gradient $\nabla \sigma^2(\Gamma^i_k)$ and variance $\sigma_b^2>0$, then the system~\eqref{Gamdisc2} has the interpretation of an Euler discretisation of a constrained Langevin dynamics, in which case $\frac{r}{\sqrt{\alpha^i}\sigma_b^2}$ is the inverse temperature. By increasing $r$, the analogous invariant distribution `sharpens' around the maximum in its density and in this way reduces the effect of noise at equilibrium; on the other hand, decreasing $r$ reduces the decay in the momentum.

\subsubsection{A thinning approach for $\Delta\Gamma$}\label{thin}
The most straightforward way of approximating the integral in~\eqref{expecs} is to use independent realisations of~\eqref{langevindisc}, as described in Appendix~\ref{cigam1}, but we draw alternatively a thinned sample \cite{MR3698682} from a single trajectory here in order to run only a single parallel set of realisations of~\eqref{langevindisc} and~\eqref{Dqdisc} at a time. 
More specifically, we consider a single realisation of~\eqref{langevindisc} and regularly-spaced points from its trajectory (possibly after a burn-in) as sample points from $\tilde{\pi}$. Starting at each of these sample points and ending at each subsequent one, the process is replicated albeit starting with a momentum reversal and simulated in parallel. In addition, for each of the two processes, a corresponding first variation process~\eqref{Dqdisc} is calculated in parallel. A precise description follows.


Let $K=1$ for simplicity. The $\Gamma$ direction~\eqref{expecs} is approximated by
\begin{align}
-\frac{1}{(L+L^*)}\sum_{l=0}^{L+L^*-1}& \bigg( \sum_{i=1}^T  \frac{\Delta t}{K}\sum_{k = 1}^K \nabla f(q_{(k)}^{i+Tl+B})^\top Dq_{(k)}^{i+Tl+B} \bigg)\otimes\nonumber\\
& \bigg(  \sum_{i=1}^T \frac{\Delta t}{K}\sum_{k = 1}^K \nabla f(\tilde{q}_{(k)}^{i+Tl+B})^\top D\tilde{q}_{(k)}^{i+Tl+B} \bigg),\label{thinap}
\end{align}
where $L\in\mathbb{N}$, $((q_{(k)}^i,p_{(k)}^i))_{i\in\mathbb{N}}$, $((\tilde{q}_{(k)}^i,\tilde{p}_{(k)}^i))_{i\in\mathbb{N}}$ denote solutions to~\eqref{langevindisc}
\begin{itemize}
\item for $i\neq B+Tl-1$, $l\in\mathbb{N}$ if $k\neq 1$ and
\item for all $i$ if $k=1$
\end{itemize}
with initial condition $(0,0)$, noise $\xi^i = \xi_{(k)}^i,\tilde{\xi}_{(k)}^i$ for all $i\in\mathbb{N}$ satisfying $\xi_{(k)}^i = \xi_{(k')}^i = \tilde{\xi}_{(k)}^i = \tilde{\xi}_{(k')}^i$ for all $i < B, 1\leq k\leq K$, $1\leq k' \leq K$, independent otherwise as $i$ and $k$ vary, along with corresponding $(Dq_{(k)}^i,Dp_{(k)}^i)$, $(D\tilde{q}_{(k)}^i,D\tilde{p}_{(k)}^i)$ satisfying~\eqref{Dqdisc} for $i\neq B+Tl-1$, $l\in\mathbb{N}$ (regardless of $k$), and where the $k\neq 1$ processes are `reset' at $i = B+Tl$ corresponding to the values of the $k=1$ chain if the first variation processes have converged to zero, that is,
\begin{subequations}\label{resetv}
\begin{align}
q_{(k)}^{Tl+B} &= q_{(1)}^{Kl+B}, \qquad p_{(k)}^{Tl+B} = p_{(1)}^{Tl+B}, \qquad Dq_{(k)}^{Tl+B} = 0,\qquad Dp_{(k)} ^{Tl+B} = I_n\\
\tilde{q}_{(k)}^{Tl+B} &= q_{(1)}^{Tl+B}, \qquad \tilde{p}_{(k)}^{Tl+B} = -p_{(1)}^{Tl+B}, \qquad D\tilde{q}_{(k)}^{Tl+B} = 0,\qquad D\tilde{p}_{(k)} ^{Tl+B} = I_n
\end{align}
\end{subequations}
for all $1\leq k\leq K$ if for some $D_{\textrm{conv}}>0$,
\begin{subequations}\label{convconds}
\begin{align}
\max_{i,j,k}\abs*{(Dq_{(k)}^{Tl+B})_{ij}} < D_{\textrm{conv}}, &\quad \max_{i,j,k}\abs*{(D\tilde{q}_{(k)}^{Tl+B})_{ij}} < D_{\textrm{conv}},\\
\quad \max_{i,j,k}\abs*{(Dp_{(k)}^{Tl+B})_{ij}} < D_{\textrm{conv}}, &\quad \max_{i,j,k}\abs*{(D\tilde{p}_{(k)}^{Tl+B})_{ij}} < D_{\textrm{conv}}
\end{align}
\end{subequations}
and $L^*\in\mathbb{N}$ is such that the number of elements in $\{ l\in\mathbb{N} : 1\leq l\leq L+L^*\}$ satisfying~\eqref{convconds} is $L$. The approach is summarised in Algorithm~\ref{algorithm}. Of course, the above for generic $K\in\mathbb{N}$ constitutes improving approximations to $\Delta \Gamma$. Note that as $\Gamma$ changes through the prescribed procedure, the asymptotic variance associated to the given observable $f$ is expected to improve, but on the contrary, the estimator~\eqref{thinap} for the continuous-time expression~\eqref{expecs} may well worsen, since the integrand (of the outermost integral) in~\eqref{expecs} is not $f$. Increasing $L$ is expected to solve any resulting issues; extremely small $L$ have been successful in the experiments here.

\begin{algorithm}[h]
\SetAlgoLined
\KwResult{$\Gamma^i, 1\leq i \leq N+1$}
Start from arbitrary $(q^0,p^0)\in\mathbb{R}^{2n}$ and set $D q^0 = D \tilde{q}^0 = 0$, $D p^0 = D \tilde{p}^0 =I_n$, $\zeta = \tilde{\zeta} = 0$, $k=0$, $\Gamma^j = I_n\quad \forall 1\leq j\leq B$\; 
\For{$i = 1:B-1$}{compute $q^{i+1}$ according to~\eqref{langevindisc}\;}
\If{$ i = B$}{set $(\tilde{q}^i,\tilde{p}^i) \leftarrow (q^i,-p^i)$\;}
\For{$i = B:N$}{
compute $q^{i+1}$ and $\tilde{q}^{i+1}$ according to~\eqref{langevindisc}\;
compute $D q^{i+1}$ and $D \tilde{q}^{i+1}$ from~\eqref{Dqdisc} corresponding to $q^{i+1}$ and $\tilde{q}^{i+1}$ respectively\;
compute the row vectors $\quad\begin{aligned}[t]\zeta &\leftarrow \zeta + \nabla f(q^{i+1})^\top Dq^{i+1} \Delta t\\ \tilde{\zeta} &\leftarrow \tilde{\zeta} + \nabla f(\tilde{q}^{i+1})^\top D\tilde{q}^{i+1} \Delta t\end{aligned}$\;
\If{$l:=i-B \in T\mathbb{N}$ and~\eqref{convconds} hold (ignoring appearances of $(k) = (1)$)}{save the matrix $ b_{(\frac{k}{G} - \floor{\frac{k}{G}})G} = -\zeta \otimes \tilde{\zeta}$\;
reset as follows: $\begin{aligned}[t]\zeta &, \tilde{\zeta} \leftarrow 0, & (\tilde{q}^{i+1},\tilde{p}^{i+1}) &\leftarrow (q^{i+1},-p^{i+1})\\
D q^{i+1} &, D\tilde{q}^{i+1} \leftarrow 0,& D p^{i+1} &, D \tilde{p}^{i+1} \leftarrow I_n\end{aligned}$\;
and update the counter $k \leftarrow k+1$\;
}
\eIf{$k \in G\mathbb{N}$}{
compute $\Gamma^{i+1}$ according to~\eqref{Gamdisc2}\;} 
{set $\Gamma^{i+1} = \Gamma^i$\;}
}
\caption{Gradient procedure in $\Gamma$}\label{algorithm}
\end{algorithm}

\begin{remark}\label{simul}
If it is of interest to approximate expectations of $P\in\mathbb{N}$ observables with respect to $\pi$, the quantity $\sum_i^P\sigma_i^2$ for example can be used as an objective function, where $\sigma_i^2$ is the asymptotic variance from the $i^{\textrm{th}}$ observable. In the implementation in Algorithm~\ref{algorithm}, instead of only the vectors $\zeta$, $\tilde{\zeta}$, this amounts to calculating at each iteration the vectors $\zeta^{(i)}$, $\tilde{\zeta}^{(i)}$ corresponding to the $i^{\textrm{th}}$ observable and taking the sum of the resulting update matrices in $\Gamma$ to update $\Gamma$. This calls for no extra evaluations of $\nabla U$ over the single observable case.
\end{remark}


\begin{remark}\textit{(Tangent processes along random directions)}\label{alodir}
We mention the situation where simulating the full first variation processes $(D_p q_t,D_p p_t)$ in $\mathbb{R}^{n\times 2n}$ is prohibitively expensive. A directional tangent process can be used instead of $(D_p q_t, D_p p_t)$. Consider for a unit vector $v\in\mathbb{R}^n$, that is $\abs*{v}=1$, randomly chosen at the beginning of each estimation of $\Delta\Gamma$, the pair of vectors $(D_p q_t v, D_p p_t v)\in\mathbb{R}^{n\times 2}$. Multiplying on the right of (\eqref{Dq} and)~\eqref{Dqdisc} by $v$, one obtains
\begin{align}\label{Dqvdisc}
\begin{cases}
Dpv^{i+\frac{1}{3}} &= Dpv^i - D^2 U(q^i)Dqv^i\frac{\Delta t}{2} \\
Dqv^{i+\frac{1}{2}} &= Dqv^i + Dpv^{i+\frac{1}{3}} \frac{\Delta t}{2}\\
Dpv^{i+\frac{2}{3}} &= \exp({-\Delta t\Gamma^i})Dpv^{i+\frac{1}{3}}\\
Dqv^{i+1} &= Dqv^{i+\frac{1}{2}} + Dpv^{i+\frac{2}{3}} \frac{\Delta t}{2}\\
Dpv^{i+1} &= Dpv^{i+\frac{2}{3}} - D^2 U(q^{i+1})Dqv^i\frac{\Delta t}{2},
\end{cases}
\end{align}
where the first term involving the Hessian of $U$ in~\eqref{Dqvdisc} can be approximated by 
\begin{equation*}
-\nabla U(q^i + \frac{\Delta t}{2}Dqv^i) + \nabla U(q^i) 
\end{equation*}
and similarly for the last such term. In continuous time, the resulting direction in $\Gamma$ is $\int\nabla\phi^\top v \nabla\tilde{\phi}^\top v d\tilde{\pi} v\otimes v$ and from~\eqref{funcdereq} the rate of change in asymptotic variance in this direction is $-2(\int \nabla\phi^\top v \nabla\tilde{\phi}^\top v d\tilde{\pi})^2$. 
However, the resulting gradient procedure in $\Gamma$ turns out to be painstakingly slow to converge in high dimensions in comparison to simulating a full first variation process; as illustration, one can think of the situation where the randomly chosen vector $v$ is taken from the restricted set of standard Euclidean basis vectors, where only one diagonal value in $\Gamma$ is changed at a time. 
For a directional derivative, we also mention \cite{MR2269681, MR2917400}.
\end{remark}

\subsection{Concrete examples}\label{conc}
In Sections~\ref{1dex},~\ref{diffbrid} and~\ref{bayinf}, the Monte Carlo approach is applied on concrete problems. Section~\ref{1dex} contains the simplest one-dimensional Gaussian case where the optimal $\Gamma$ is known and it is shown that the algorithm approximates it quickly. A different Gaussian problem extracted from a diffusion bridge context is explored in Section~\ref{diffbrid}, where the algorithm is shown to approximate a $\Gamma$ matrix that exhibits an even better empirical asymptotic variance than the one given by Proposition~\ref{comprop}. Finally, the algorithm is applied to finding the optimal $\Gamma$ in estimating the posterior mean in a Bayesian inference problem in Section~\ref{bayinf}, where the situation is shown to be similar to Proposition~\ref{lincor}, in the sense that the optimal $\Gamma$ is close to $0$; after and separately from such a finding, the empircal asymptotic variance for a small $\Gamma$ is compared that for $\Gamma = I_n$, with dramatic improvement in both the full gradient and minibatch gradient cases.

\subsubsection{One dimensional quadratic case}\label{1dex}
Here the algorithm given in Section~\ref{thin} is used in the simplest one dimensional
\begin{equation}\label{quadob1d}
U(q) = \frac{V_0}{2}q^2,\qquad f(q) = \frac{1}{2}q^2,
\end{equation}
$V_0>0$, case to find the optimal constant friction. Since commutativity issues disappear in the one-dimensional case, the optimal constant friction is known analytically and is given by Proposition~\ref{comprop} to be $\Gamma = \sqrt{V_0}$, with the asymptotic variance $V_0^{-\frac{5}{2}}$. Moreover, the relationship between the asymptotic variance and $\Gamma$ is explicitly given by equations~\eqref{gexp} and~\eqref{asympfor}, which reduces in this case to
\begin{equation*}
\sigma^2(\Gamma) = \frac{1}{4V_0^2}\bigg(\Gamma^{-1} + \frac{1}{V_0}\Gamma\bigg).
\end{equation*}
The case $V_0=5$ is illustrated in Figure~\ref{asfig0}. In the middle and right plot of Figure~\ref{asfig0}, the procedure in Section~\ref{thin} is used for $5\cdot 10^4$ epochs, with $\Delta t = 0.08$, block-size $T = 125$, $L = 1$ and $D_{\textrm{conv}} = 2\cdot 10^{-4}$. Changing the observable to the linear
\begin{equation}\label{linob1d}
f(q) = q
\end{equation}
gives that the `optimal' (but unreachable in the algorithm due to the constraints) friction is $0$ by Corollary~\ref{lincor}. The right plot in Figure~\ref{asfig0} shows that the procedure arrives at a similar conclusion in the sense that the $\Gamma$ hits and stays at $\mu = 0.2$.

\begin{figure}[h]
  \centering
  \begin{subfigure}[h]{0.31\linewidth}
    \adjincludegraphics[trim={6mm 0 3mm 0},width=\linewidth]{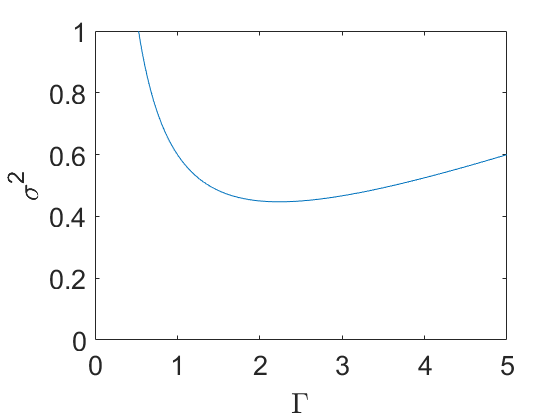}
  \end{subfigure}
  \begin{subfigure}[h]{0.31\linewidth}
    \adjincludegraphics[trim={6mm 0 3mm 0},width=\linewidth]{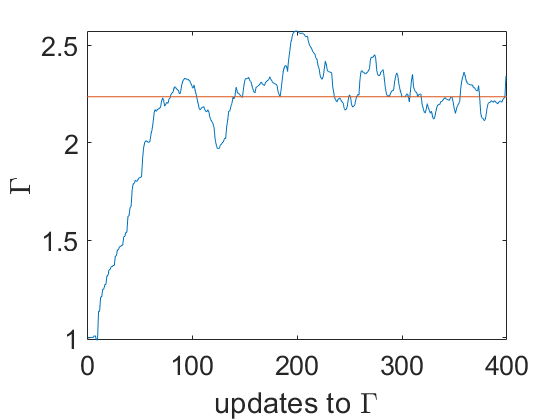}
  \end{subfigure}
  \begin{subfigure}[h]{0.31\linewidth}
    \adjincludegraphics[trim={6mm 0 3mm 0},width=\linewidth]{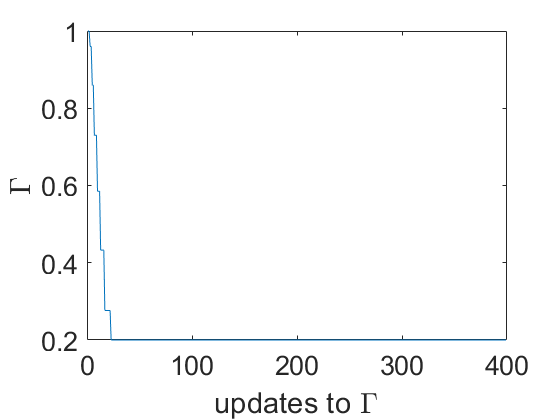}
  \end{subfigure}
  
  \caption{Left: Relationship between asymptotic variance and $\Gamma$ for~\eqref{quadob1d}. Middle and right: Trajectory of $\Gamma$ for~\eqref{quadob1d} and~\eqref{linob1d} respectively by~\eqref{Gamdisc2} with $\alpha^i = 1$, $G=1$, $r = 0.5$ and $\mu = 0.2$. Middle: the red line is the optimal value $\Gamma = \sqrt{5}$ given by Proposition~\ref{comprop}. All plots are for $V_0 = 5$. }
  \label{asfig0}
\end{figure}

\subsubsection{Diffusion bridge sampling}\label{diffbrid}
The algorithm in Section~\ref{thin} is applied in the context of diffusion bridge sampling \cite{MR2358638,MR2188686} (see also for example \cite{MR2588600,MR2269221,MR2807970}), where the SDE
\begin{equation}\label{bridge1}
dx_t = -\nabla V(x_t)dt + \sqrt{2\beta^{-1}}dW_t'
\end{equation}
for a suitable $V:\mathbb{R}^d\rightarrow\mathbb{R}$, $\beta >0$ and $W_t'$ standard Wiener process on $\mathbb{R}^d$, is conditioned on the events 
\begin{equation}\label{bridge2}
x_0 = x_-\quad\textrm{and}\quad x_T = x_+
\end{equation}
for some fixed $T>0,x_0,x_+ \in\mathbb{R}^d$ and the problem setting is to sample from the path space of solutions to~\eqref{bridge1} conditioned on~\eqref{bridge2}. For the derivation of the following formulations, we refer to Section 5 in \cite{MR2358638} and Section 6.1 in \cite{MR2444507}; here we extract a simplified potential $U$ to apply our algorithm on after a brief description.\\
Let
\begin{equation*}
V(x) = \frac{1}{2}\abs*{x}^2,\qquad x_- = x_+ = 0,\qquad \beta = 1, \qquad d=1,\qquad T=1.
\end{equation*}
Using the measure given by Brownian motion conditioned on~\eqref{bridge2} as the reference measure $\mu_0$ on the path space of continuous functions $C([0,1],\mathbb{R})$, the measure $\mu$ associated to~\eqref{bridge1} conditioned on~\eqref{bridge2} satisfies
\begin{equation*}
\frac{d\mu}{d\mu_0}(x) \propto \exp\bigg(-\frac{1}{4}\int_0^T \abs*{x}^2 dt \bigg),
\end{equation*}
where the left hand side denotes the Radon-Nikodym derivative, so that discretising $\mu$ on a grid in $[0,1]$ with grid-size $\delta>0$ gives the approximating measure
\begin{equation*}
\pi(q_1,\dots,q_n) \propto e^{-U(q_1,\dots,q_n)}
\end{equation*}
where $U$ is given by 
\begin{equation*}
U(q) = \frac{1}{2}q^\top \Sigma^{-1} q = \frac{1}{2}q^\top\begin{pmatrix}
\frac{2}{\delta} + \frac{\delta}{4} & -\frac{1}{\delta} & & &\\
-\frac{1}{\delta} & \frac{2}{\delta} + \frac{\delta}{4} & -\frac{1}{\delta} & &\\
&\dots & & &\\
& & -\frac{1}{\delta} & \frac{2}{\delta} + \frac{\delta}{4} & -\frac{1}{\delta} \\
& & & \frac{2}{\delta} + \frac{\delta}{4} & -\frac{1}{\delta}
\end{pmatrix}q.
\end{equation*}
From here the Langevin system~\eqref{langevin00} can be used to sample from $\pi$ and the algorithm given in Section~\ref{thin} is applied. For this purpose, the observable
\begin{equation*}
f(q) = \frac{1}{2}\abs*{q}^2
\end{equation*}
is used together with the parameters $\delta = \frac{1}{21}$, $n= 20$, $K=1$, $L = 5$, $T=60$, $B = 100$ and $D_{\textrm{conv}} = 0.01$. Only the diagonal values of $\Gamma$ are updated and their trajectories are shown in Figure~\ref{diagdiffb}.

\begin{figure}[h]
  \centering
    \includegraphics[width=0.5\linewidth]{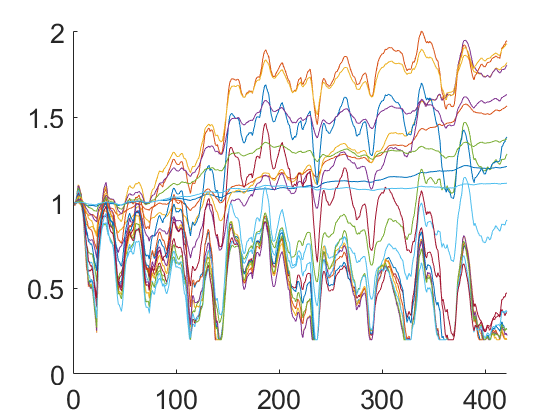}
\caption{Diagonal values of $\Gamma$ over iterations of~\eqref{Gamdisc2} with $\alpha^i = 0.2$, $G = 5$, $r = 1$ and $\mu=0.2$.}\label{diagdiffb}
  \end{figure}
At the end of $300000$ epochs, $\Gamma$ is given by
\begin{align*}
\Gamma_{\textrm{final}} =\textrm{diag}(&  1.2129, 1.5673, 1.8199, 1.8055, 1.2858, 0.9013, 0.3588, 0.2631,\\
& 0.2000, 0.2000, 0.2252, 0.2579, 0.3621, 0.4715, 1.3842, 1.9467,\\
& 1.9289, 1.6326, 1.3730, 1.1153).
\end{align*}
This $\Gamma$ is fixed and used for a standard sampling procedure for the same potential and observable. The asymptotic variance is approximated by grouping the epochs after $B = 100$ burn-in iterations into $N_B = 999$ blocks of $T = 300$ epochs, specifically, 
\begin{equation*}
\sigma_{\textrm{approx}} = \frac{1}{N_B}\sum_{l = 0}^{N_B-1} \bigg[ \frac{1}{\sqrt{T\Delta t}}\sum_{i=1}^T \Delta t\bigg( f(q^{i+Tl+B})  - \frac{1}{N}\sum_{j=1}^{N} \Delta t f(q^{j+B}) \bigg)\bigg]^2
\end{equation*}
and this is compared to the estimate from the same procedure using different values of fixed $\Gamma$ in Table~\ref{table0}. Note that $\Gamma = \Sigma^{-\frac{1}{2}}$ is the optimal $\Gamma$ in the restricted class of matrices commuting with $\Sigma$ given by Proposition~\ref{comprop}, where the asymptotic variance is known to be $\textrm{Tr}(\Sigma^{\frac{5}{2}}) \approx 6.4785$.
\begin{table}[h!]
\begin{center}
\begin{tabular}{ | c | c | c | } 
\hline
 & $\sigma_{\textrm{approx}}$ \\ 
\hline
$\Gamma = I_n$ & $6.9834$\\ 
\hline
$\Gamma = \Sigma^{-\frac{1}{2}}$ & $6.5096$\\
\hline
$\Gamma = \Gamma_{\textrm{final}}$ & $6.1667$\\ 
\hline
\end{tabular}
\caption{Empirical asymptotic variances with $N_B = 999$, $T=300$, $B=100$, $N = 299700$. }
\label{table0}
\end{center}
\end{table}

\subsubsection{Bayesian inference}\label{bayinf}
We adopt the binary regression problem as in \cite{durmus2018highdimensional} on a dataset\footnote{http://archive.ics.uci.edu/ml/datasets/Internet+Advertisements. Note that besides missing values at some datapoints, the dataset comes with many quantitatively duplicate features and also some linear dependence between the vectors made up of a single feature across all datapoints; here features have been removed so that the said vectors remaining are linearly independently. In particular, $n=642$.} with datapoints encoding information about images on a webpage and each labelled with `ad' or `non-ad'. The labels $\{ Y_i \}_{1\leq i\leq p}$, taking values in $\{0,1\}$, of the $p=2359$ datapoints (counting only those without missing values) given in the dataset are modelled as conditionally independent Bernoulli random variables with probability $\{ \rho(\beta^\top X_i) \}_{1\leq i \leq p}$, where $\rho$ is the logistic function given by $\rho(z) = e^{cz}/(1+e^{cz})$ for all $z\in\mathbb{R}$, $c\in\mathbb{R}$ is given by~\eqref{clab1}, $\{ X_i \}_{1\leq i \leq p}$, $\beta$, both taking values in $\mathbb{R}^n$, are respectively vectors of known features from each datapoint and regression parameters to be determined. The parameters $\beta$ are given the prior distribution $\mathcal{N}(0,\Sigma)$, where 
\begin{equation*}
\Sigma^{-1}=\frac{1}{p}\sum_{i=1}^p X_i^\top X_i\in\mathbb{R}^{n\times n},
\end{equation*}
and the density of the posterior distribution of $\beta$ is given up to proportionality by
\begin{equation*}
\pi_{\beta}(\beta|\{ (X_i,Y_i) \}_{1\leq i\leq p}) \propto \exp\bigg(\sum_{i=1}^p \{cY_i\beta^\top X_i - \log(1+e^{c\beta^\top X_i})\} - \frac{1}{2}\beta^\top \Sigma^{-1}\beta\bigg),
\end{equation*}
so that the log-density gradient, in our notation $-\nabla U$, is given by
\begin{equation*}
-\nabla U(\beta) = \sum_{i=1}^p  cX_i (Y_i - (1+e^{-c\beta^\top X_i})^{-1}) -\Sigma^{-1} \beta.
\end{equation*}
The observable vector $f_i(q) = q_i$, $1\leq i \leq n$, corresponding to the posterior mean is used. The coordinate transform $\hat{\beta} = \Sigma^{-\frac{1}{2}}\beta$ is made before applying the symmetric preconditioner $\Sigma^{\frac{1}{2}}$ on the Hamiltonian part of the dynamics so that the dynamics simulated are as in~\eqref{langevin00} with $M=I_n$ and 
\begin{equation}\label{newgrad}
-\nabla U(\hat{\beta}) = \Sigma^{\frac{1}{2}}\sum_{i=1}^p  cX_i (Y_i - (1+e^{-c(\Sigma^{\frac{1}{2}}\hat{\beta})^\top X_i})^{-1}) - \hat{\beta}.
\end{equation}
We use the observable vector
\begin{equation*}
f_i(\hat{\beta}) = \hat{\beta}_i,\quad 1\leq i \leq n
\end{equation*}
and the sum of their corresponding asymptotic variances as the value to optimise with respect to $\Gamma$, but show in Figures~\ref{asfig0} and~\ref{asfig} the estimated asymptotic variances for both sets $f_i(\hat{\beta})$, $f_i(\beta)$ of observables, where the estimation is calculated using the vector on the left of the outer product in~\eqref{thinap} in accordance with $2\int \nabla\phi^\top\Gamma\nabla\phi d\tilde{\pi}$ which follows from the formula~\eqref{cltform} after integrating by parts with truncation. The approximation~\eqref{d2appr} for the term(s) including the Hessian in~\eqref{Dqdisc} has been used to test the method despite the explicit availability of the Hessian. During the execution of Algorithm~\ref{algorithm}, the constant $c$ has been set to
\begin{equation}\label{clab1}
c = \bar{c} := \frac{5}{\max_i(\Sigma^{\frac{1}{2}}\sum_j X_j Y_j)_i}.
\end{equation}
In detail, 30000 epochs are simulated; after 100 burn-in iterations of the Langevin discretisation~\eqref{langevindisc}, 2 parallel simulations of~\eqref{langevindisc} and 2 of the first variation discretisation~\eqref{Dqdisc} are run according to Section~\ref{thin} with time-step $\Delta t = 0.1$, block-size $T=100$, $L=1$ block per update in $\Gamma$, $K=1$ and tolerance $D_{\textrm{conv}} = 0.01$. 

\begin{figure}[h]
  \centering
  \begin{subfigure}[h]{0.31\linewidth}
    \adjincludegraphics[trim={8mm 0 5mm 0},width=\linewidth]{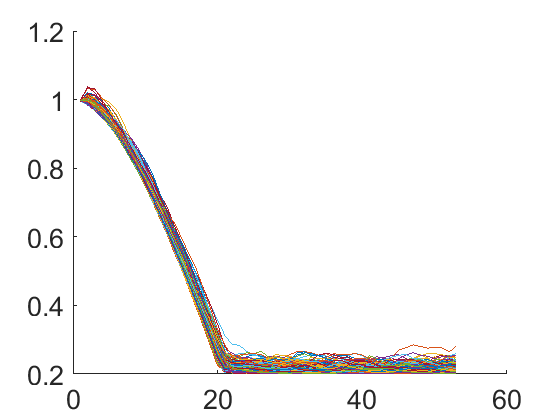}
  \end{subfigure}
  \begin{subfigure}[h]{0.31\linewidth}
    \adjincludegraphics[trim={8mm 0 5mm 0},width=\linewidth]{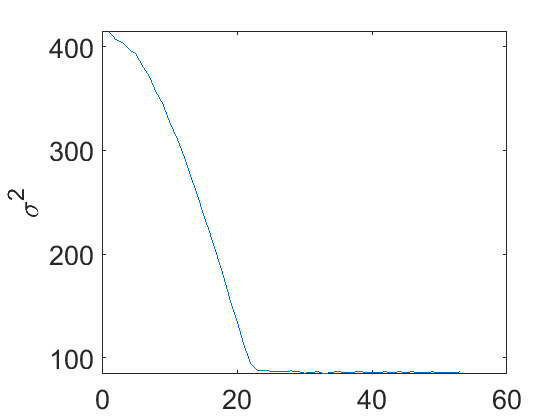}
  \end{subfigure}
   \begin{subfigure}[h]{0.31\linewidth}
    \adjincludegraphics[trim={8mm 0 6mm 0},width=\linewidth]{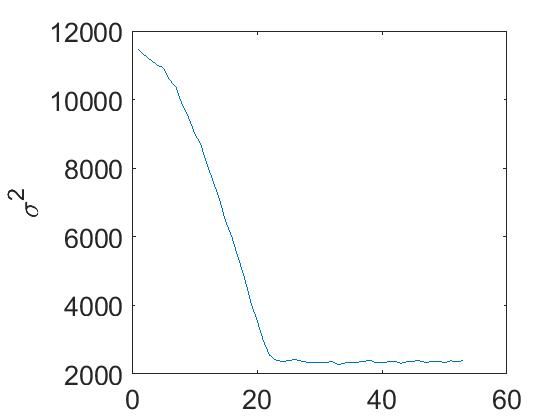}
  \end{subfigure}
  
  \caption{Left: Diagonal values of $\Gamma$ over iterations of~\eqref{Gamdisc2} with $\alpha^i = 0.1$, $G=1$, $r = 1$ and $\mu=0.2$. Note that the mean of the absolute values of all entries of $\Gamma$ at the end of the iterations is $0.0039$. Middle: Sum over $i$ of estimated asymptotic variances for $f_i(\hat{\beta})$; right: for $f_i(\beta)$. }
  \label{asfig0}
\end{figure}

\begin{figure}[h]
  \centering
  \begin{subfigure}[h]{0.31\linewidth}
    \adjincludegraphics[trim={8mm 0 5mm 0},width=\linewidth]{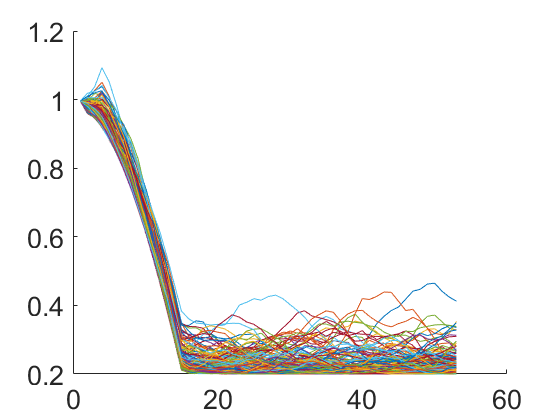}
  \end{subfigure}
  \begin{subfigure}[h]{0.31\linewidth}
    \adjincludegraphics[trim={8mm 0 5mm 0},width=\linewidth]{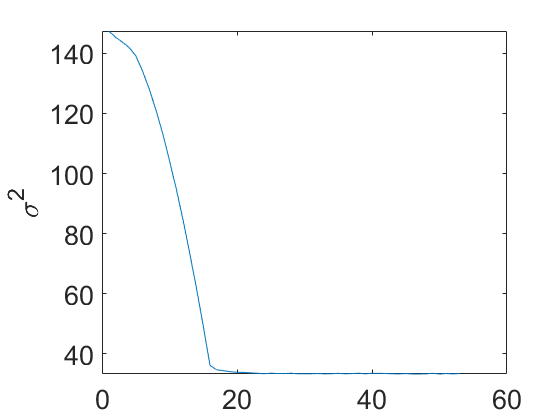}
  \end{subfigure}
   \begin{subfigure}[h]{0.31\linewidth}
    \adjincludegraphics[trim={8mm 0 6mm 0},width=\linewidth]{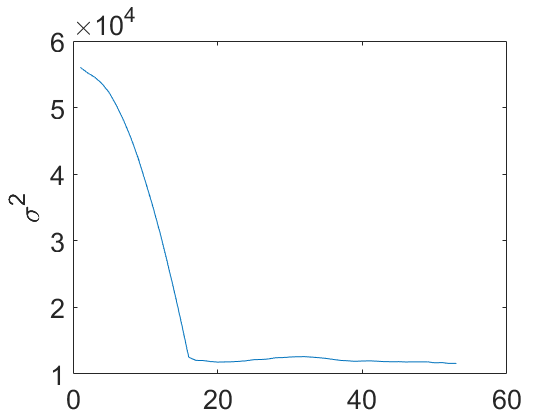}
  \end{subfigure}
  
  \caption{The same as in the caption of Figure~\ref{asfig}, except $r=0.5$ and a different dataset (https://archive.ics.uci.edu/ml/datasets/Musk+(Version+1)) is used where $n = 167$ and $p = 476$. The mean of the absolute values of all entries of $\Gamma$ at the end of the iterations is $0.0210$.}
  \label{asfig}
\end{figure}

In Figures~\ref{asfig0} and~\ref{asfig}, $\Gamma$ starts initially from the identity $I_n$ and descends towards $0.2I_n$ (restricted as in~\eqref{muproj}), as expected for a linear observable and potential close to a quadratic (see Proposition~\ref{oddprop}). We note that in the gradient descent procedure for $\Gamma$, using the minibatch gradient does not change the behaviour shown in Figures~\ref{asfig0} and~\ref{asfig}. In addition, although the trajectory of $\Gamma$ seems to go directly to zero, we expect the optimal $\Gamma$ to be close but away from zero since the potential is close but not exactly quadratic.\\
Next, the value for $\Gamma$ is fixed at various values and used for hyperparameter training on the same problem for the first dataset, using both the full gradient~\eqref{newgrad} and a minibatch\footnote{The control variate stochastic gradient on underdamped dynamics \cite{Chatterji2018OnTT,MR4227705} is not directly considered here but the benefits of an improved $\Gamma$ is expected to carry over to such variations of the stochastic gradient.} version where the sum in~\eqref{newgrad} is replaced by $\frac{p}{10}$ times a sum over
a subset $S$ of $\{1,\dots,p\}$ with $10$ elements randomly drawn without replacement such that $S$ changes once for each $i$ in~\eqref{langevindisc}. In the minibatch gradient case, $c$ is set to a fraction of~\eqref{clab1}, specifically $\bar{c}(\frac{p}{10})^{-1}$. In Tables~\ref{table1} and~\ref{table2}, variances for the posterior mean estimates are shown (similar variance reduction results persist when using the probability of success for features taken from a single datapoint in the dataset). \\
In detail, for each row of Tables~\ref{table1} and~\ref{table2}, $N = 29700$ epochs of~\eqref{langevindisc} are simulated with the same parameters as above. The asymptotic variance for each observable entry is approximated using block averaging (Section 2.3.1.3 in \cite{MR2681239}) by grouping the epochs after $B = 100$ burn-in iterations into $N_B = 99$ blocks of $T = 300$ epochs, that is,
\begin{equation*}
\sigma_{k,\textrm{approx}}^2 = \frac{1}{N_B}\sum_{l = 0}^{N_B-1} \bigg[ \frac{1}{\sqrt{T\Delta t}}\sum_{i=1}^T \Delta t\bigg( f_k(q^{i+Tl+B})  - \frac{1}{N}\sum_{j=1}^{N} f_k(q^{j+B}) \bigg)\bigg]^2
\end{equation*}
and $N_B = 3$ blocks of $T= 9900$ epochs (respectively for each column of Tables~\ref{table1} and~\ref{table2}); the values $0.8667$ and $0.1571$ approach and correspond to values in the middle plot of Figure~\ref{asfig0} after multiplying by $n=642$.  The variances are compared to those using a gradient oracle: unadjusted (overdamped) Langevin dynamics\cite{durmus2018highdimensional} and with an irreversible perturbation\cite{MR3483241}, where the antisymmetric matrix $J$ is given by
\begin{equation*}
J_{i,j} = \begin{cases}
1 & \textrm{if } j - i  = 1\textrm{ or } 1-n,\\
-1 & \textrm{if } i - j = 1\textrm{ or } 1-n,\\
0 & \textrm{otherwise}
\end{cases}
\end{equation*}
for $1\leq i,j\leq n$ and the stepsizes are the same as for underdamped implementations. In addition, the Euclidean distance from intermediate estimates of the posterior mean to a total, combined estimate is shown for each method. Specifically, $d_k := \abs*{\frac{1}{300k}\sum_{i = 1}^{300k} f(q^{i+B}) - \hat{\pi}(f)}$ is plotted against $k$ in Figure~\ref{intermeans}, where $\hat{\pi}(f)$ is the mean (over the methods listed in Tables~\ref{table1} and~\ref{table2}) of the final posterior mean estimates. A weighted mean with unit weights except one half on the $\Gamma=0.2I_n$ and $\Gamma=0.1I_n$ methods also gave similar results, though this is not shown explicitly.
\begin{table}[h!]
\begin{center}
\begin{tabular}{ | c | c | c | } 
\hline
 & block-size $T=300$  & block-size $T=9900$ \\ 
\hline
$\Gamma = I_n$ & (1.2669,0.0320) & (0.8667,0.7190)\\ 
\hline
$\Gamma = 0.2I_n$ & (0.2939,0.0018) & (0.1571,0.0243)\\ 
\hline
$\Gamma = 0.1I_n$ & (0.1739,0.0007) & (0.0890,0.0092)\\
\hline
overdamped & (1.2298,0.0319) & (0.8687,0.8662)\\
\hline
irreversible overdamped & (0.5642,0.0077) & (0.3835,0.1614)\\
\hline
\end{tabular}
\caption{$(\frac{1}{n}\sum_{k = 1}^n \sigma_{k,\textrm{approx}}^2, \frac{1}{n}\sum_{k = 1}^n (\sigma_{k,\textrm{approx}}^2 - \frac{1}{n}\sum_{l = 1}^{n}\sigma_{l,\textrm{approx}}^2)^2)$ - Empirical asymptotic variances, mean and variance over observable entries, where full gradients have been used.}
\label{table1}
\end{center}
\end{table}
\begin{table}[h!]
\begin{center}
\begin{tabular}{ | c | c | c | } 
\hline
 & block-size $T=300$  & block-size $T=9900$ \\ 
\hline
$\Gamma = I_n$ & (1.9575,0.0744) & (1.3338,1.6650)\\
\hline
$\Gamma = 0.2I_n$ & (0.4600,0.0042) & (0.2781,0.0784)\\
\hline
$\Gamma = 0.1I_n$ & (0.2646,0.0016) & (0.1335,0.0208)\\
\hline
overdamped & (1.9137,0.0791) & (1.3065,1.9714)\\
\hline
irreversible overdamped & (0.8764,0.0150) & (0.5778,0.3266)\\
\hline
\end{tabular}
\caption{The same as in Table~\ref{table1}, except for minibatch gradients}
\label{table2}
\end{center}
\end{table}

\begin{figure}[h]
  \centering
  \begin{subfigure}[h]{0.4\linewidth}
    \adjincludegraphics[trim={8mm 0 2mm 0},width=\linewidth]{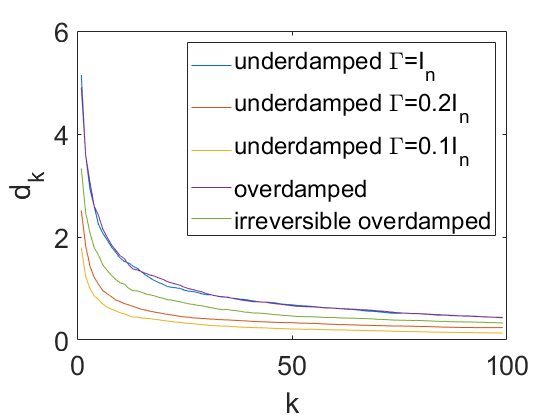}
  \end{subfigure}
  \begin{subfigure}[h]{0.4\linewidth}
    \adjincludegraphics[trim={8mm 0 2mm 0},width=\linewidth]{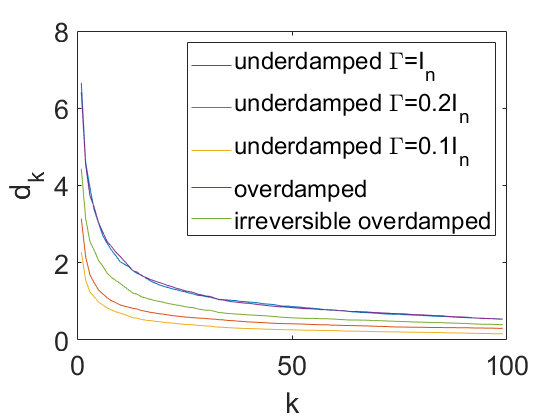}
  \end{subfigure}
  
  \caption{Euclidean distances to a combined posterior mean estimate over time. Left: full gradient. Right: minibatch gradient.}
  \label{intermeans}
\end{figure}

These figures demonstrate improvement of an order of magnitude in observed variances for $\Gamma$ close to that resulting from the gradient procedure over $\Gamma=I_n$. The improvement is also seen when compared to overdamped Langevin dynamics with and without irrreversible perturbation.

\section{Proofs}\label{proofs}

\begin{proof}\textit{(of Proposition~\ref{distprop})}
Take an approximating sequence $(f_k)_{k\in\mathbb{N}}$ such that $f_k\in C_c^\infty(\mathbb{R}^{2n})$, $f_k\rightarrow f$ in $L^2(\tilde{\pi})$. Fix $h\in C_c^\infty(\mathbb{R}^{2n})$ and for the differential operator $\mathcal{L}^*= -p^\top M^{-1}\nabla\!_q + \nabla U(q) ^\top \nabla\!_p - p^\top M^{-1}\Gamma \nabla\!_p + \nabla\!_p^\top \Gamma \nabla\!_p$ consider the expression
\begin{align}
\bigg|\int \mathcal{L}^*h \phi d\tilde{\pi}-\int h f d\tilde{\pi}\bigg| &\leq \abs*{\int \mathcal{L}^*h \phi d\tilde{\pi} - \int \mathcal{L}^*h \int_{\frac{1}{T}}^T P_t(f) dt d\tilde{\pi}}\nonumber\\
&\quad + \abs*{\int h(f-P_{\frac{1}{T}}(f) + P_T(f)) d\tilde{\pi}}\nonumber\\
&\quad + \abs*{\int \mathcal{L}^* h \int_{\frac{1}{T}}^T P_t(f-f_k)dtd\tilde{\pi}}\nonumber\\
&\quad + \abs*{\int h(P_{\frac{1}{T}}(f-f_k)-P_T(f-f_k))d\tilde{\pi}}\nonumber\\
&\quad + \abs*{\int \!\mathcal{L}^*h \int_{\frac{1}{T}}^T \! P_t(f_k) dt d\tilde{\pi}+\!\!\int \! h (P_{\frac{1}{T}}(f_k)-P_T(f_k)) d\tilde{\pi}}.\label{bigap}
\end{align}
For any $\epsilon>0$, $T$ can be chosen so that the first two terms on the right are each bounded by $\frac{\epsilon}{4}$ due to~\eqref{gTl2}, the strongly continuous property of $P_t$ and~\eqref{PTl2}; subsequently $k$ can be chosen so that the third and fourth terms are each bounded by $\frac{\epsilon}{4}$. For the remaining term, using Assumption~\ref{smu} for Theorem 5.13 in the first chapter of \cite{MR1730228} and H\"ormander's theorem \cite{MR222474}, 
\begin{equation}\label{backward}
\partial_t P_t (f_k) = \mathcal{L}P_t (f_k)
\end{equation}
holds classically on $(0,T)\times \mathbb{R}^{2n}$ and
\begin{equation}\label{smoothpt}
P_t(f_k)\in C^\infty ((0,T)\times\mathbb{R}^{2n}).
\end{equation}
By Fubini and equation~\eqref{backward},
\begin{align*}
\int \mathcal{L}^* h \int_{\frac{1}{T}}^T P_t(f_k) dt d\tilde{\pi} &= \int_{\frac{1}{T}}^T \int \mathcal{L}^* h P_t(f_k) d\tilde{\pi} dt \\
&= \int_{\frac{1}{T}}^T \int h\mathcal{L} P_t (f_k) d\tilde{\pi}dt\\
&= \int_{\frac{1}{T}}^T \int h \partial_t P_t (f_k) d\tilde{\pi}dt,
\end{align*}
so that since~\eqref{smoothpt} holds and therefore $P_t(f_k)$ is bounded on $(\frac{1}{T},T)\times B_R$ for any $k\in\mathbb{N}$, $T,R>0$, $B_R$ denoting the Euclidean ball in $\mathbb{R}^{2n}$ of radius $R$, Fubini's theorem can be applied again to obtain
\begin{equation*}
\int \mathcal{L}^* h \int_{\frac{1}{T}}^T P_t(f_k) dt d\tilde{\pi} = \int h (P_T (f_k) - P_{\frac{1}{T}}(f_k)) d\tilde{\pi},
\end{equation*}
which is that the last term in~\eqref{bigap} is equal to zero.

\end{proof}

Before giving the proof for the main formula of the directional derivative of the asymptotic variance, a truncation function is introduced and the membership of $\nabla_p \phi$ in $L^2(\tilde{\pi})$ is shown. The truncation function is constructed to satisfy a property~\eqref{etaprop} related to the generator~\eqref{infgen}; it will be used to robustly integrate by parts when establishing both $\nabla_p \phi \in L^2(\tilde{\pi})$ and the main formula.\\[0.5em]
Firstly, let $\varphi:\mathbb{R}\rightarrow\mathbb{R}$, $\varphi_k:\mathbb{R}\rightarrow\mathbb{R}$ be the standard mollifiers together with $\nu_k:\mathbb{R}\rightarrow\mathbb{R}$ be given by
\begin{equation*}
\varphi(x):=
\begin{cases}
e^{\frac{1}{x^2-1}}\bigg(\int_{-1}^1e^{\frac{1}{y^2-1}}dy\bigg)^{-1} &\qquad\textrm{if }-1<x\leq1\\
0 &\qquad\textrm{otherwise},
\end{cases}
\end{equation*}
\begin{equation}\label{mollialgo}
\varphi_k(x) := \frac{1}{k}\varphi\bigg(\frac{x}{k}\bigg),
\end{equation}
\begin{equation*}
\nu_k:=\varphi_k* \mathds{1}_{(-\infty,k^2]}\leq 1
\end{equation*}
for $k>0$.
\begin{lemma}\label{etaalgo}
Under Assumption~\ref{smu} and for $k>0$, let $\eta_{k}:\mathbb{R}^{2n}\rightarrow\mathbb{R}$ be the smooth functions given by
\begin{equation*}
\eta_{k}(\zeta)=\nu_k( \ln (1+\abs*{\zeta}^2))
\end{equation*}
for all $\zeta\in\mathbb{R}^{2n}$, then the following properties hold:
\begin{enumerate}
\item $\eta_k$ is compactly supported;
\item $\eta_k$ converges to 1 pointwise as $k\rightarrow\infty$;
\item for some constant $C>0$ independent of $k$,
\begin{equation}\label{etaprop}
\abs*{\mathcal{L}\eta_k} + (1+\abs*{\zeta})\abs*{\nabla \eta_k} \leq \frac{C}{k}.
\end{equation}
\end{enumerate}
\end{lemma}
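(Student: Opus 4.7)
The plan is to first dispatch the two easy properties and then focus all the effort on the uniform estimate~\eqref{etaprop}, which is really the point of the construction.

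For \textbf{compact support}, I would observe that $\mathds{1}_{(-\infty,k^2]}$ convolved with the compactly supported $\varphi_k$ gives $\nu_k(x)=0$ whenever $x>k^2+k$, so $\eta_k(\zeta)=0$ as soon as $\ln(1+|\zeta|^2)>k^2+k$, i.e.\ on the complement of the ball of radius $(e^{k^2+k}-1)^{1/2}$. For \textbf{pointwise convergence to $1$}, for any fixed $\zeta$ and $k$ large enough we have $\ln(1+|\zeta|^2)<k^2-k$, and since $\nu_k\equiv 1$ on $(-\infty,k^2-k]$, the claim follows.

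The bulk of the work is the bound~\eqref{etaprop}. Writing $\psi(\zeta):=\ln(1+|\zeta|^2)$, chain rule gives $\nabla\eta_k=\nu_k'(\psi)\nabla\psi$ and
\begin{equation*}
\partial_i\partial_j\eta_k=\nu_k''(\psi)\,\partial_i\psi\,\partial_j\psi+\nu_k'(\psi)\,\partial_i\partial_j\psi.
\end{equation*}
The scaling $\varphi_k(x)=k^{-1}\varphi(x/k)$ yields $\|\varphi_k\|_\infty\le C/k$ and $\|\varphi_k'\|_\infty\le C/k^2$, and by direct computation $\nu_k'(x)=-\varphi_k(x-k^2)$, hence $|\nu_k'|\le C/k$ and $|\nu_k''|\le C/k^2$, uniformly in $x$. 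The derivatives of $\psi$ satisfy the elementary bounds $|\partial_i\psi|\le 2|\zeta|/(1+|\zeta|^2)$ and $|\partial_i\partial_j\psi|\le C/(1+|\zeta|^2)$; in particular I would use $|\nabla\psi|\le C/(1+|\zeta|)$ (splitting cases $|\zeta|\le 1$ and $|\zeta|\ge 1$).

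With these ingredients the estimate is essentially bookkeeping. For the gradient term,
\begin{equation*}
(1+|\zeta|)|\nabla\eta_k|\le |\nu_k'(\psi)|\,(1+|\zeta|)|\nabla\psi|\le \frac{C}{k}.
\end{equation*}
For $\mathcal{L}\eta_k$, each first-order coefficient in~\eqref{infgen} is at most linear in $\zeta$ -- using Assumption~\ref{smu} in the form $|\nabla U(q)|\le K_U(1+|q|)$ for the $\nabla\! U\cdot\nabla_p$ term, and the explicit $p$-dependence for the other transport terms -- so each of them, multiplied against $|\nabla\eta_k|\le C/[k(1+|\zeta|)]$, contributes at most $C/k$. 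The diffusion term $\nabla_p^\top\Gamma\nabla_p\eta_k$ splits into the $\nu_k''$ part, bounded by $(C/k^2)\cdot|\nabla\psi|^2\le C/(k^2(1+|\zeta|)^2)$, and the $\nu_k'$ part, bounded by $(C/k)\cdot C/(1+|\zeta|^2)$; both are $\le C/k$. Summing these pieces yields $|\mathcal{L}\eta_k|\le C/k$ with $C$ depending only on $M$, $\Gamma$, and $K_U$, completing the proof. The only non-routine point is to check that the small factor $1/k$ from $\nu_k'$ is not destroyed by the linear growth of the coefficients of $\mathcal{L}$, which is exactly why the logarithmic cutoff $\psi=\ln(1+|\zeta|^2)$ is used rather than $|\zeta|^2$ itself.
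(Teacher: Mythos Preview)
Your proposal is correct and follows essentially the same approach as the paper: both arguments use the chain rule to write $\mathcal{L}\eta_k$ in terms of $\nu_k'$, $\nu_k''$ and derivatives of $\psi=\ln(1+|\zeta|^2)$, derive $|\nu_k'|\le C/k$ and $|\nu_k''|\le C/k^2$ from $\nu_k'(x)=-\varphi_k(x-k^2)$, and then use the linear growth of $\nabla U$ from Assumption~\ref{smu} together with the $O((1+|\zeta|)^{-1})$ decay of $\nabla\psi$ to absorb the linear coefficients of $\mathcal{L}$. The only cosmetic difference is that the paper packages the first-order terms via the identity $\mathcal{L}\eta_k=\nu_k'(\psi)\,\mathcal{L}\psi+\nu_k''(\psi)\,(\nabla_p\psi)^\top\Gamma\nabla_p\psi$ and checks directly that $\mathcal{L}\psi$ is bounded, whereas you bound each transport term separately; the two computations are equivalent.
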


\begin{proof}
The first two properties easily hold by definition of $\eta_k$. For the third property, denoting 
\begin{align*}
\mathcal{L}\eta_k &= \nu_k'(\ln(1+\abs*{\zeta}^2))\mathcal{L}\ln(1+\abs*{\zeta}^2)\\
&\quad+\nu_k''(\ln(1+\abs*{\zeta}^2))(\nabla\!_p\ln(1+\abs*{\zeta}^2))^\top \Gamma \nabla\!_p \ln(1+\abs*{\zeta}^2).
\end{align*}
It can be seen that $\nu_k'$ and $\nu_k''$ are estimated by terms at most of order $k^{-1}$; to see this, for all $x\in\mathbb{R}$,
\textcolor{black}{\begin{equation*}
\nu_k(x)=\int_{-\infty}^{k^2}\varphi_k(x-y)dy = \int_{x-k^2}^{\infty}\varphi_k(z)dz,
\end{equation*}
so that
\begin{equation*}
0 \geq \nu_k'(x) = -\varphi_k(x-k^2) \geq -k^{-1}\max \varphi
\end{equation*}
and
\begin{equation*}
\abs*{\nu_k''(x)} = \abs*{\varphi_k'(x-k^2)}\leq k^{-2}\max \varphi'.
\end{equation*}}
Therefore there exists a constant $\bar{C}>0$ such that
\begin{equation}\label{absl}
\abs*{\mathcal{L}\eta_k} \leq \bar{C}\Big(k^{-1}\abs*{\mathcal{L}\ln(1+\abs*{\zeta}^2)}+k^{-2}\abs*{(\nabla\ln(1+\abs*{\zeta}^2))^\top \Gamma \nabla\ln(1+\abs*{\zeta}^2)}\Big).
\end{equation}
Moreover, the expression
\begin{align*}
\mathcal{L}\ln(1+\abs*{\zeta}^2)&=\frac{\mathcal{L} \abs*{\zeta}^2}{1+\abs*{\zeta}^2}-\frac{4p^\top \Gamma p}{(1+\abs*{\zeta}^2)^2}\\
&=\frac{p^\top M^{-1}q - \nabla U(q)^\top p - p^\top M^{-1} \Gamma p + \textrm{Tr}\Gamma}{1+\abs*{\zeta}^2}-\frac{4p^\top \Gamma p}{(1+\abs*{\zeta}^2)^2}
\end{align*}
is bounded from above and below by~\eqref{assump2eq} and also the second term in~\eqref{absl} is bounded above by a direct calculation. Similarly, $(1+\abs*{\zeta})\abs*{\nabla \eta_k}$ can be bounded as required by a direct calculation.
\end{proof}

\begin{proof}\textit{(of Lemma~\ref{nabphi})}
Consider the functions $(f_{k,R})_{k,R\in\mathbb{N}}$ given by 
\begin{equation}\label{deffkr}
f_{k,R} := \varphi_{\frac{1}{k}}\ast (f\mathds{1}_{B_R})
\end{equation}
for $ \varphi_{\frac{1}{k}}$ given in~\eqref{mollialgo}, $B_R$ the radius $R$ ball in $\mathbb{R}^{2n}$ centered at $0$ and $\mathds{1}_{B_R}$ its indicator function. For any $k,R\in\mathbb{N}$, $f_{k,R}\in C_c^\infty$; moreover 
\begin{equation}\label{tendsl2}
f_{k_i,R_i}\rightarrow f
\end{equation}
in $L^2(\tilde{\pi})$ as $i\rightarrow\infty$ for any non-decreasing sequences $(k_i)_{i\in\mathbb{N}},(R_i)_{i\in\mathbb{N}}$ such that $k_i,R_i\rightarrow \infty$. By Theorem~\ref{poissonsolve} and Proposition~\ref{distprop} together with H\"ormander's theorem, there exists $\phi_{k,R}\in C^\infty \cap L_0^2(\tilde{\pi})$ such that 
\begin{equation*}
-\mathcal{L}\phi_{k,R} = f_{k,R} - \tilde{\pi}(f_{k,R})
\end{equation*}
for each $k,R\in\mathbb{N}$. For $r\in\mathbb{N}$, $\eta_r$ from Lemma~\ref{etaalgo} and the smallest eigenvalue $\lambda_m$ of $\Gamma$,
\begin{align}
\lambda_m\int \eta_r \abs*{\nabla\!_p \phi_{k,R}}^2 d\tilde{\pi} &\leq \int \eta_r \nabla\!_p \phi_{k,R} ^\top \Gamma \nabla\!_p \phi_{k,R} d\tilde{\pi}\nonumber\\
&= -\int \phi_{k,R} \nabla\!_p\eta_r^\top \Gamma \nabla\!_p \phi_{k,R} d\tilde{\pi} \nonumber\\
&\quad -\int \eta_r\phi_{k,R} (-M^{-1}p+\nabla\!_p)^\top(\Gamma \nabla\!_p \phi_{k,R}) d\tilde{\pi}. \label{ontheright}
\end{align}
The first term on the right can be written as
\begin{align*}
-\int \phi_{k,R} \nabla\!_p \eta_r^\top \Gamma \nabla\!_p \phi_{k,R} d\tilde{\pi} &= \int \phi_{k,R}(-M^{-1}p+\nabla\!_p)^\top\Gamma (\nabla\!_p\eta_r \phi_{k,R}) d\tilde{\pi}\\
&= \int \phi_{k,R}^2 (-M^{-1}p+\nabla\!_p)^\top \Gamma \nabla\!_p\eta_r d\tilde{\pi}\\
&\quad + \int \phi_{k,R} \nabla\!_p\eta_r^\top \Gamma \nabla\!_p \phi_{k,R} d\tilde{\pi},
\end{align*}
so that
\begin{align}
-\int \phi_{k,R} \nabla\!_p \eta_r^\top \Gamma \nabla\!_p \phi_{k,R} d\tilde{\pi} &= \frac{1}{2}\int\phi_{k,R}^2(-M^{-1}p+\nabla\!_p)^\top\Gamma \nabla\!_p \eta_r d\tilde{\pi}\nonumber\\
&= \frac{1}{2}\int\phi_{k,R}^2 \mathcal{L} \eta_r d\tilde{\pi}\nonumber\\
&\quad + \frac{1}{2}\int\phi_{k,R}^2 (-p^\top M^{-1} \nabla\!_q + \nabla U(q)^\top \nabla\!_p) \eta_r d\tilde{\pi}\nonumber\\
&\leq \frac{C}{r}\|\phi_{k,R}\|_{L^2(\tilde{\pi})}^2\label{stog}
\end{align}
for some generic constant $C>0$, where the last line follows by~\eqref{etaprop}. The second term on the right of~\eqref{ontheright} is
\begin{align}
-\int& \eta_r\phi_{k,R} (-M^{-1}p+\nabla\!_p)^\top(\Gamma \nabla\!_p \phi_{k,R}) d\tilde{\pi}\nonumber\\
&= -\int \eta_r\phi_{k,R} \mathcal{L}\phi_{k,R} d\tilde{\pi} + \int \eta_r\phi_{k,R} (p^\top M^{-1} \nabla\!_q - \nabla U(q)^\top \nabla\!_p) \phi_{k,R} d\tilde{\pi},\label{secondone}
\end{align}
where the last term integrates by parts to obtain
\begin{align*}
\int &\eta_r \phi_{k,R} (p^\top M^{-1} \nabla\!_q - \nabla U(q)^\top \nabla\!_p) \phi_{k,R} d\tilde{\pi}\\
&= -\frac{1}{2}\int (p^\top M^{-1} \nabla\!_q - \nabla U(q)^\top \nabla\!_p)\eta_r \phi_{k,R}^2 d\tilde{\pi},
\end{align*}
so that plugging back into~\eqref{secondone} and using again~\eqref{etaprop},
\begin{align*}
-\!\int\! \eta_r\phi_{k,R} (-M^{-1}p+\nabla\!_p)^\top(\Gamma \nabla\!_p \phi_{k,R}) d\tilde{\pi} &\leq \|\phi_{k,R} \|_{L^2(\tilde{\pi})} \| f_{k,R}-\tilde{\pi}(f_{k,R})\|_{L^2(\tilde{\pi})}\\
&\quad + \frac{C}{r}\|\phi_{k,R} \|_{L^2(\tilde{\pi})}^2.
\end{align*}
Plugging into~\eqref{ontheright}, together with~\eqref{stog},
\begin{equation*}
\lambda_m\int \eta_r\abs*{\nabla\!_p \phi_{k,R}}^2d\tilde{\pi} \leq \|\phi_{k,R} \|_{L^2(\tilde{\pi})} \| f_{k,R}-\tilde{\pi}(f_{k,R})\|_{L^2(\tilde{\pi})} + \frac{C}{r}\|\phi_{k,R}\|_{L^2(\tilde{\pi})}^2,
\end{equation*}
that is, $\nabla\!_p\phi_{k,R}\in L^2(\tilde{\pi})$. Since $\mathcal{L}$ and its parts are linear, the same arguments as above give 
\begin{align}
\lambda_m\int & \abs*{\nabla\!_p\phi_{k,R}-\nabla\!_p\phi_{k',R'}}^2d\tilde{\pi}\nonumber\\
& \leq \|\phi_{k,R} - \phi_{k',R'} \|_{L^2(\tilde{\pi})} \| f_{k,R} - f_{k',R'}-\tilde{\pi}(f_{k,R}) + \tilde{\pi}(f_{k',R'})\|_{L^2(\tilde{\pi})}\label{nabb}
\end{align}
for any $k',R'\in\mathbb{N}$. Moreover,
\begin{equation}\label{holdsalgo}
\abs*{f_{k,R}} = \abs*{\varphi_{\frac{1}{k}} \ast (f\mathds{1}_{B_R})} \leq C\mathds{1}_{B_{R+1}}\varphi_{\frac{1}{k}} \ast \mathcal{K}_l
\end{equation}
for a generic constant $C>0$ independent of $k,R$ and since $\varphi_{\frac{1}{k}}\ast \mathcal{K}_l\rightarrow \mathcal{K}_l$ uniformly on compact subsets, for any fixed $R\in\mathbb{N}$, there exists $K_R\in\mathbb{N}$ such that $k \geq K_R$ implies
\begin{equation}\label{klb}
\abs*{f_{k,R}} \leq 2C \mathcal{K}_l.
\end{equation}
Choosing now the sequences $R_i = i$, $k_i = i+ \max_{j\leq i} K_j$ for all $i\in\mathbb{N}$. By Theorems~\ref{thmc} and~\ref{poissonsolve}, $i,j\in\mathbb{N}$,
\begin{equation*}
\| \phi_{k_i,R_i} - \phi_{k_j,R_j} \|_{L^2(\tilde{\pi})} \leq C \left\| \frac{f_{k_i,R_i}-f_{k_j,R_j} - \tilde{\pi}(f_{k_i,R_i})+\tilde{\pi}(f_{k_j,R_j})}{\mathcal{K}_l} \right\|_{L^\infty},
\end{equation*}
where $C>0$ is again independent of $k,R$. Using the definition~\eqref{deffkr}, the terms $\abs*{\frac{\tilde{\pi}(f_{k,R})}{\mathcal{K}_l}}$ are bounded uniformly in $k$ and $R$, which, together with~\eqref{klb}, implies $\| \phi_{k_i,R_i} - \phi_{k_j,R_j} \|_{L^2(\tilde{\pi})} $ is bounded uniformly in $i,j$, so that inserting into~\eqref{nabb} gives
\begin{equation*}
\lambda_m\!\int \abs*{\nabla\!_p\phi_{k_i,R_i}-\nabla\!_p\phi_{k_j,R_j}}^2d\tilde{\pi}\leq C\| f_{k_i,R_i} - f_{k_j,R_j}-\tilde{\pi}(f_{k_i,R_i}) + \tilde{\pi}(f_{k_j,R_j})\|_{L^2(\tilde{\pi})}.
\end{equation*}
Together with~\eqref{tendsl2}, $\nabla\!_p\phi_{k_i,R_i}$ is a Cauchy sequence, with limit denoted as $g\in L^2(\tilde{\pi})$, so that for any $h\in C_c^\infty$,
\begin{equation*}
\abs*{\int g h + \int \phi \nabla\!_p h} \leq \abs*{\int g h - \int \nabla\!_p\phi_{k_i,R_i} h} + \abs*{\int \phi \nabla\!_p h  - \int \phi_{k_i,R_i} \nabla\!_p h},
\end{equation*}
hence 
\begin{equation}\label{caulim}
\nabla\!_p\phi_{k_i,R_i}\rightarrow g = \nabla\!_p \phi\in L^2(\tilde{\pi}).
\end{equation}
\end{proof}

Some additional preliminaries are presented here for the proof of Theorem~\ref{funcder}. For small $\epsilon\in\mathbb{R}$ and some direction $\delta\Gamma\in\mathbb{R}^{n\times n}$ in the space of smooth friction matrices such that $\Gamma + \epsilon\delta\Gamma\in \mathbb{S}_{++}^n$, let $L_{\epsilon}$ be the infinitesimal generator of~\eqref{langevin00} with the perturbed friction matrix $\Gamma+\epsilon\delta\Gamma$ in place of $\Gamma$, given formally by the differential operator
\begin{align*}
-\mathcal{L}_{\epsilon} &= -p^\top M^{-1} \nabla_q + \nabla U(q)^\top \nabla\!_p + p^\top M^{-1} (\Gamma+\epsilon\delta\Gamma) \nabla\!_p  - \nabla\!_p^\top (\Gamma+\epsilon\delta\Gamma)\nabla\!_p,
\end{align*}
where the notation $-\epsilon \mathcal{S}$ will be used for the perturbation on $\mathcal{L}$, that is,
\begin{equation*}
-\mathcal{S} = p^\top M^{-1} \delta\Gamma \nabla\!_p - \nabla\!_p^\top \delta\Gamma \nabla\!_p.
\end{equation*}
The formal $L^2(\tilde{\pi})$-adjoint of $\mathcal{L}_\epsilon$ is denoted
\begin{align*}
-\mathcal{L}_\epsilon^* &= p^\top M^{-1} \nabla_q - \nabla U(q)^\top \nabla\!_p + p^\top M^{-1} (\Gamma+\epsilon\delta\Gamma) \nabla\!_p - \nabla\!_p^\top (\Gamma+\epsilon\delta\Gamma)\nabla\!_p
\end{align*}
just as for $\mathcal{L}^*$.

\begin{proof}\textit{(of Theorem~\ref{funcder})}
For $\epsilon\leq \epsilon'$, by Theorem~\ref{poissonsolve} there exists a solution $\phi+\delta\phi_\epsilon\in L_0^2(\tilde{\pi})$ to the Poisson equation with the perturbed generator 
\begin{equation*}
-L_{\epsilon}(\phi+\delta\phi_\epsilon) = f - \pi(f).
\end{equation*}
By Theorem~\ref{theclt}, the directional derivative of $\sigma^2(\Gamma)$ in the direction $\delta\Gamma:\mathbb{R}^n \rightarrow \mathbb{R}^{n\times n}$ is
\begin{equation}\label{limitfunc}
\frac{1}{2}d\sigma^2.\delta\Gamma = \lim_{\epsilon\rightarrow 0}\frac{1}{\epsilon}\int \delta \phi_\epsilon f d\tilde{\pi}.
\end{equation}
Let $(f_{k,R})_{k,R\in\mathbb{N}}$ be given by~\eqref{deffkr}. Since inequality~\eqref{holdsalgo} holds by definition and $\varphi_{\frac{1}{k}}\ast g \rightarrow g$ uniformly on compact subsets for any continuous $g$, there exists for each $R\in\mathbb{N}$ a constant $\hat{K}_R\in\mathbb{N}$ such that $k\geq \hat{K}_R$ implies~\eqref{klb} for $C$ independent of $k,R$ and also 
\begin{equation*}
\abs*{f_{k,R}-f} \leq \frac{1}{R} \quad \textrm{on }B_{R-1}.
\end{equation*}
The sequences $R_i = i$, $k_i = i+\max_{j\leq i} \hat{K}_j$ for $i\in\mathbb{N}$ then give the sequence $(f_i)_{i\in\mathbb{N}}$, $f_i:=f_{k_i,R_i}\in C_c^\infty$ for all $i\in\mathbb{N}$, that satisfies
\begin{equation*}
\left\|\frac{f_i-f}{\mathcal{K}_{2l}}\right\|_{L^\infty} \rightarrow 0\qquad\textrm{as }i\rightarrow\infty,
\end{equation*}
which implies 
\begin{equation}\label{tend1}
\|f_i-f\|_{L^2(\tilde{\pi})}\rightarrow 0
\end{equation}
by~\eqref{fin} or by definition. Moreover, $\pi(f_i)\rightarrow 0$ by~\eqref{klb} and dominated convergence. Therefore the solutions $\phi_i,\phi_{i,\epsilon}\in L_0^2(\tilde{\pi})$ to the Poisson equations
\begin{equation*}
-L\phi_i = f_i - \pi(f_i) ,\qquad -L_{\epsilon}\phi_{i,\epsilon} = f_i - \pi(f_i)
\end{equation*}
given by Theorem~\ref{poissonsolve} satisfy
\begin{equation}\label{tend2}
\| \phi_i - \phi \|_{L^2(\tilde{\pi})} +  \| \phi_{i,\epsilon} - (\phi + \delta\phi_{\epsilon}) \|_{L^2(\tilde{\pi})} \rightarrow 0\quad \textrm{as } i\rightarrow \infty
\end{equation}
by~\eqref{poissol} and Theorem~\ref{thmc}. Since $f_i\in C^\infty$, H\"ormander's theorem together with Proposition~\ref{distprop} say that $\phi_i,\phi_{i,\epsilon} \in C^\infty$ and so $\phi_i,\phi_{i,\epsilon}$ solves $-\mathcal{L}\phi_i = -\mathcal{L}_{\epsilon}\phi_{i,\epsilon} = f_i-\pi(f_i)$ classically. Furthermore, in the same way as in the proof of Lemma~\ref{nabphi} to obtain~\eqref{caulim}, it holds that 
\begin{equation}\label{tend3}
\| \nabla\!_p\phi_i - \nabla\!_p\phi \|_{L^2(\tilde{\pi})} +\| \nabla\!_p\phi_{i,\epsilon} - \nabla\!_p(\phi + \delta\phi_\epsilon) \|_{L^2(\tilde{\pi})} \rightarrow 0\quad \textrm{as } i\rightarrow \infty,
\end{equation}
where $\nabla\!_p\phi_i,\nabla\!_p\phi_{i,\epsilon}\in L^2(\tilde{\pi})$ by Lemma~\ref{nabphi} itself.
The term under the limit in~\eqref{limitfunc} is now approximated with a term involving $f_i$ and the truncation functions $\eta_k$ from Lemma~\ref{etaalgo}.  Working now with the approximating integral and using Lemma~\ref{tildelem} together with the obvious extension on the notation from~\eqref{phitilde},
\begin{align}
\!\int\!\eta_k(\phi_{i,\epsilon}\! - \phi_i) f_i d\tilde{\pi} &= -\int\eta_k(\phi_{i,\epsilon}\! - \phi_i) \mathcal{L}_{\epsilon}^*\tilde{\phi}_{i,\epsilon} d\tilde{\pi}\nonumber\\
&=  -\int(\mathcal{L}_{\epsilon}\eta_k(\phi_{i,\epsilon}\! - \phi_i)+ 2\nabla\!_p\eta_k(\Gamma+\epsilon\delta\Gamma)\nabla\!_p(\phi_{i,\epsilon}\! - \phi_i)\nonumber\\
&\qquad \quad-\epsilon\eta_k\mathcal{S}\phi_i ) \tilde{\phi}_{i,\epsilon} d\tilde{\pi}\nonumber\\
&=  -\!\int\!(\mathcal{L}_{\epsilon}\eta_k(\phi_{i,\epsilon}\! - \phi_i)\!+ 2\nabla\!_p\eta_k(\Gamma+\epsilon\delta\Gamma)\nabla\!_p(\phi_{i,\epsilon}\! - \phi_i))\tilde{\phi}_{i,\epsilon}d\tilde{\pi}\nonumber\\
&\quad- \int\epsilon(\eta_k\nabla\!_p\phi_i^\top\delta\Gamma\nabla\!_p \tilde{\phi}_{i,\epsilon} + \tilde{\phi}_{i,\epsilon}\nabla\!_p\phi_i^\top\delta\Gamma\nabla\!_p \eta_k) d\tilde{\pi}.\label{dpf}
\end{align}
By Lemma~\ref{etaalgo} and~\ref{nabphi}, the terms involving gradients of $\eta_k$ converge to zero as $k\rightarrow \infty$, so that taking $k\rightarrow \infty$, then $i\rightarrow\infty$, 
\begin{equation*}
\int \delta\phi_\epsilon f d\tilde{\pi} = -\int\epsilon\nabla\!_p\phi^\top\delta\Gamma\nabla\!_p (\tilde{\phi} + \delta\tilde{\phi}_\epsilon) d\tilde{\pi}
\end{equation*}
holds, where~\eqref{tend1},~\eqref{tend2} and~\eqref{tend3} have been used. Plugging into~\eqref{limitfunc}, the directional derivative becomes
\begin{equation}\label{obl}
\frac{1}{2}d\sigma^2.\delta\Gamma = -\lim_{\epsilon \rightarrow 0} \int \nabla\!_p\phi^\top\delta\Gamma\nabla\!_p (\tilde{\phi} + \delta\tilde{\phi}_\epsilon) d\tilde{\pi}.
\end{equation}
From here, for any $\epsilon>0$, the unwanted term under the limit can be controlled by approximating again with a truncation and $f_i$, $i\in\mathbb{N}$,
\begin{align}
\lambda_m\int \eta_k&\abs*{\nabla\!_p (\tilde{\phi}_{i,\epsilon} - \tilde{\phi_i})}^2 d\tilde{\pi}\nonumber\\
 &\leq \int \eta_k \nabla\!_p (\tilde{\phi}_{i,\epsilon} - \tilde{\phi_i})^\top (\Gamma+\epsilon\delta\Gamma)\nabla\!_p (\tilde{\phi}_{i,\epsilon} - \tilde{\phi_i}) d\tilde{\pi}\nonumber\\
&= -\int  (\tilde{\phi}_{i,\epsilon} - \tilde{\phi_i})\nabla\!_p\eta_k^\top(\Gamma+\epsilon\delta\Gamma)\nabla\!_p(\tilde{\phi}_{i,\epsilon} - \tilde{\phi_i})d\tilde{\pi}\nonumber\\
&\quad - \int\eta_k (\tilde{\phi}_{i,\epsilon} - \tilde{\phi_i}) (-M^{-1}p + \nabla\!_p)^\top((\Gamma+\epsilon\delta\Gamma)\nabla\!_p(\tilde{\phi}_{i,\epsilon} - \tilde{\phi_i})d\tilde{\pi},\label{tog1e}
\end{align}
where 
\begin{equation*}
\lambda_m=\inf_{0<\epsilon\leq\epsilon'}\lambda_m^\epsilon
\end{equation*}
and $\lambda_m^\epsilon$ is the smallest eigenvalue of $\Gamma+\epsilon\delta\Gamma$. 
The first term on the right hand side is negligible as $k\rightarrow\infty$ because of Lemmata~\ref{etaalgo} and~\ref{nabphi}. The remaining term is 
\begin{align}
- \int\eta_k& (\tilde{\phi}_{i,\epsilon} - \tilde{\phi_i}) (-M^{-1}p + \nabla\!_p)^\top((\Gamma+\epsilon\delta\Gamma)\nabla\!_p(\tilde{\phi}_{i,\epsilon} - \tilde{\phi_i})d\tilde{\pi}\nonumber\\
 &= -\int \eta_k(\tilde{\phi}_{i,\epsilon} - \tilde{\phi_i})\mathcal{L}_\epsilon^* (\tilde{\phi}_{i,\epsilon} - \tilde{\phi_i})d\tilde{\pi}\nonumber\\
&\quad - \int \eta_k (\tilde{\phi}_{i,\epsilon} - \tilde{\phi_i})(p^\top M^{-1} \nabla\!_p - \nabla U^\top \nabla\!_p)(\tilde{\phi}_{i,\epsilon} - \tilde{\phi_i})d\tilde{\pi},\label{togg}
\end{align}
where the last term, after integrating by parts, gives
\begin{align}
\int \eta_k& (\tilde{\phi}_{i,\epsilon} - \tilde{\phi_i}) (p^\top M^{-1} \nabla\!_p - \nabla U^\top \nabla\!_p)(\tilde{\phi}_{i,\epsilon} - \tilde{\phi_i})d\tilde{\pi}\nonumber\\
&= -\frac{1}{2}\int (p^\top M^{-1} \nabla\!_q - \nabla U^\top \nabla\!_p)\eta_k (\tilde{\phi}_{i,\epsilon} - \tilde{\phi_i})^2d\tilde{\pi},\label{tog2e}
\end{align}
which is negligible as $k\rightarrow \infty$ again due to Lemma~\ref{etaalgo} and~\eqref{assump2eq}. On the other hand, the first term on the right hand side of~\eqref{togg} is 
\begin{align}
\!-\int \eta_k(\tilde{\phi}_{i,\epsilon} - \tilde{\phi_i}) \mathcal{L}_\epsilon^* (\tilde{\phi}_{i,\epsilon} - \tilde{\phi_i})d\tilde{\pi} &= \epsilon\int\eta_k(\tilde{\phi}_{i,\epsilon} - \tilde{\phi_i}) (M^{-1}p-\nabla\!_p)^\top\delta\Gamma \nabla\!_p\tilde{\phi}_i d\tilde{\pi}\nonumber\\
&= \epsilon\int\nabla\!_p\eta_k(\tilde{\phi}_{i,\epsilon} - \tilde{\phi_i})^\top\delta\Gamma\nabla\!_p\tilde{\phi}_id\tilde{\pi}\nonumber\\
&\quad +\epsilon\int\eta_k \nabla\!_p(\tilde{\phi}_{i,\epsilon} - \tilde{\phi_i})^\top\delta\Gamma\nabla\!_p\tilde{\phi}_i d\tilde{\pi},\label{tog3e}
\end{align}
where again the term involving $\nabla\!_p\eta_k$ is negligible as $k\rightarrow\infty$, so that putting together~\eqref{tog1e},~\eqref{togg},~\eqref{tog2e} and~\eqref{tog3e}, then taking $k\rightarrow \infty$ and $i\rightarrow \infty$ with~\eqref{tend3} gives
\begin{align*}
\lambda_m\int \abs*{\nabla\!_p\delta\tilde{\phi}_\epsilon}^2d\tilde{\pi}&\leq \epsilon\int \nabla\!_p\delta\tilde{\phi}_\epsilon^\top\delta\Gamma\nabla\!_p\tilde{\phi}d\tilde{\pi}\\
&\leq \frac{\epsilon\lambda_M}{\epsilon'}\int \bigg( \abs*{\nabla\!_p\delta\tilde{\phi}_\epsilon}^2 + \abs*{\nabla\!_p\tilde{\phi}}^2\bigg)d\tilde{\pi},
\end{align*}
where $\lambda_M=\inf_{0<\epsilon\leq\epsilon'}\lambda_M^\epsilon$, $\lambda_M^\epsilon$ is the largest eigenvalue of $\Gamma+\epsilon\delta\Gamma$ and $\delta\Gamma = \frac{\Gamma + \epsilon'\delta\Gamma - \Gamma}{\epsilon'}$ has been used. Therefore
\begin{equation*}
\int \abs*{\nabla\!_p\delta\tilde{\phi}_\epsilon}^2d\tilde{\pi} \leq \frac{\epsilon\lambda_M}{\epsilon'\lambda_m - \epsilon\lambda_M} \int \abs*{\nabla\!_p\tilde{\phi}}^2 d\tilde{\pi}
\end{equation*}
holds for small enough $\epsilon$ and putting into~\eqref{obl} concludes the proof.

\end{proof}

For the proof of Proposition~\ref{oddprop}, some notation is introduced. For $\tilde{k}\in\mathbb{N}$, let the tridiagonal matrix $M_{\tilde{k}}\in\mathbb{R}^{\tilde{k}+1 \times \tilde{k}+1}$ be given by its elements
\begin{equation}\label{Mkmat}
(M_{\tilde{k}})_{i,j} = \begin{cases}
i & \textrm{if } i + 1  = j,\\
(i-1)\gamma & \textrm{if } i = j,\\
i - \tilde{k} -2 & \textrm{if } i - 1 = j,\\
0 & \textrm{otherwise}
\end{cases}
\end{equation}
for indices $1\leq i,j \leq \tilde{k}+1$.

\begin{lemma}\label{Mklem}
Let $m\in\mathbb{N}$. Any tridiagonal matrix $\tilde{M}\in\mathbb{R}^{m\times m}$ of the form
\begin{equation*}
(\tilde{M})_{i,j} = \begin{cases}
b_i & \textrm{if } i + 1  = j,\\
b_i'\gamma & \textrm{if } i = j,\\
b_i'' & \textrm{if } i - 1 = j,\\
0 & \textrm{otherwise}
\end{cases}
\end{equation*}
for constants $b_i,b_i',b_i''\in\mathbb{R}$, has an order $\gamma$ determinant as $\gamma\rightarrow 0$ if $m$ is odd and a determinant that is bounded away from zero as $\gamma\rightarrow 0$ if $m$ is even.
\end{lemma}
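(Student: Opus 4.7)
The plan is to exploit the tridiagonal structure through the standard three-term determinant recursion. Writing $D_k(\gamma)$ for the determinant of the leading $k\times k$ principal submatrix of $\tilde{M}$, Laplace expansion along the last row (or column) immediately gives
\begin{equation*}
D_k(\gamma) = b_k'\gamma\, D_{k-1}(\gamma) - b_{k-1} b_{k-1}''\, D_{k-2}(\gamma),
\end{equation*}
with initial values $D_0 = 1$ and $D_1(\gamma) = b_1'\gamma$. Since the recursion is polynomial in $\gamma$ at each step, $D_k(\gamma)$ is itself a polynomial in $\gamma$, so both halves of the claim reduce to analysing its value at $\gamma = 0$ (and, for the odd case, noting the polynomial structure).

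Next I would set $\gamma = 0$, which kills the first summand and leaves the reduced recursion $D_k(0) = -b_{k-1} b_{k-1}''\, D_{k-2}(0)$. Combined with $D_0(0) = 1$ and $D_1(0) = 0$, a straightforward induction on $k$ separates the two parities:
\begin{equation*}
D_{2j}(0) = \prod_{i=1}^{j} \bigl(-b_{2i-1}\, b_{2i-1}''\bigr), \qquad D_{2j+1}(0) = 0.
\end{equation*}
Under the implicit nondegeneracy of the off-diagonal constants $b_i, b_i''$ (which holds in the instances where the lemma is applied in the proof of Proposition~\ref{oddprop}, in particular for $M_{\tilde k}$ in~\eqref{Mkmat}), the even-index product is nonzero. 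Continuity of the polynomial $D_m(\gamma)$ at $0$ then delivers the "bounded away from zero" conclusion for even $m$.

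For odd $m = 2j+1$, the identity $D_m(0) = 0$ together with the polynomial nature of $\gamma \mapsto D_m(\gamma)$ forces a factorisation $D_m(\gamma) = \gamma\, h(\gamma)$ for some polynomial $h$, which is exactly the "order $\gamma$" claim. I do not anticipate any real obstacle in carrying this out: the only points requiring care are the sign and index bookkeeping in the induction and checking that the factors appearing in the even-index product do not vanish for the specific $M_{\tilde k}$ in~\eqref{Mkmat}, both of which are purely mechanical.
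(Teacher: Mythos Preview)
Your proposal is correct and follows essentially the same approach as the paper, which merely states that the result is obtained by ``repeatedly taking Laplace expansions'' without further detail; your three-term recursion is exactly that expansion specialised to tridiagonal matrices. One minor slip: the recursion should read $D_k(\gamma) = b_k'\gamma\, D_{k-1}(\gamma) - b_{k-1} b_k''\, D_{k-2}(\gamma)$ (the subdiagonal entry in row $k$ carries index $k$, not $k-1$), but this does not affect the substance of the argument.
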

Lemma~\ref{Mklem} is straightforwardly proved by repeatedly taking Laplace expansions. An explicit proof is not given here.

\begin{proof}\textit{(of Proposition~\ref{oddprop})}
Only a standard Gaussian and $M = 1$ is considered, the arguments for the general centered Gaussian case are the same. First consider the observable
\begin{equation}\label{polyob}
f(q) = q^k
\end{equation}
for some odd $k\in\mathbb{N}$. Take the polynomial ansatz
\begin{equation}\label{annie}
\phi(q,p) = \sum_{i,j = 0}^k a_{i,j} q^i p^j
\end{equation}
for $a_{i,j}\in\mathbb{R}$ and $\Gamma = \gamma >0$. It will be shown that arbitrarily small asymptotic variance is achieved in the $\gamma\rightarrow 0$ limit. Note that only pairs $(i,j)$ with odd $i$ and even $j$ make nonzero contributions to the asymptotic variance. Applying $-\mathcal{L}$ to the ansatz,
\begin{align*}
-\mathcal{L}\phi &= \sum_{i,j = 0}^k -ia_{i,j} q^{i-1} p^{j+1} + ja_{i,j}q^{i+1} p^{j-1} + \gamma j a_{i,j}q^i p^j - \gamma j (j-1) a_{i,j} q^i p^{j-2}\\
&= \sum_{i,j = 0} (-(i+1)a_{i+1,j-1} + (j+1)a_{i-1,j+1} + \gamma j a_{i,j} \\
&\qquad - \gamma (j+2)(j+1)a_{i,j+2} ) q^i p^j. 
\end{align*}
where
\begin{equation}\label{a1}
a_{i,j} = 0\quad \forall i,j <0 \textrm{ and } \forall i,j >k.
\end{equation}
Comparing coefficients in~\eqref{poisson0},
\begin{equation}\label{a2}
-(i+1)a_{i+1,j-1} + (j+1)a_{i-1,j+1} + \gamma j a_{i,j} - \gamma (j+2)(j+1)a_{i,j+2} = 0
\end{equation}
for all $(i,j)\neq (k,0)$. It holds by strong induction (in $j'$) that 
\begin{equation}\label{induct}
a_{i'+j',k+1-j'} = 0 \quad\forall i',j' \geq 0
\end{equation}
because of the following. The base case $j'=0$ follows by~\eqref{a1}, the induction step follows by taking $(i,j) = (i'+j'-1,k+2-j')$ for $i' \geq 0$ in~\eqref{a2} and again using~\eqref{a1} where necessary. Comparing coefficients in the Poisson equation~\eqref{poisson0} for $(i,j) = (k,0)$ and using~\eqref{a1},~\eqref{induct} yields\footnote{It is illustrative to imagine a grid of coefficients and the relations~\eqref{a2} and~\eqref{just1} as L-shaped chains on the grid, where~\eqref{induct} and~\eqref{a1} leave only a triangular area of nonzero coefficients.}
\begin{equation}\label{just1}
a_{k-1,1} = 1.
\end{equation}
Combining~\eqref{just1} with setting $(i,j) = (j'-1,k+1-j')$ for $j' = 1,\dots,k$ in~\eqref{a2}, the entries $a_{j',k-j'}$ satisfy the linear system
\begin{equation}\label{Mkeq}
M_k (a_{k,0},a_{k-1,1},\dots,a_{0,k})^\top = (1,0,\dots,0)^\top,
\end{equation}
where $M_k \in\mathbb{R}^{k+1\times k+1}$ is the tridiagonal matrix given in~\eqref{Mkmat}. In order to find the order in $\gamma$ as $\gamma\rightarrow 0$ of the elements of $(a_{k,0},\dots,a_{0,k})^\top$ appearing in~\eqref{Mkeq}, it suffices to find the order of the entries in the leftmost column of $M_k^{-1}$. 
For this, let $C_i\in\mathbb{R}$ be the $i^\textrm{th}$ minor appearing in the top row of the cofactor matrix of $M_k$. On the corresponding submatrix, repeatedly taking the Laplace expansion on the leftmost column until only the determinant of a $(k+1-i)$-by-$(k+1-i)$ square matrix from the bottom right corner of $M_k$ remains to be calculated, then using Lemma~\ref{Mklem} for this $(k+1-i)$-by-$(k+1-i)$ matrix 
gives that $C_i$ is of order $\gamma$ as $\gamma\rightarrow 0$ for odd $i$. Furthermore, the determinant of $M_k$ is bounded away from zero as $\gamma\rightarrow 0$ by Lemma~\ref{Mklem}. Therefore the elements of $(a_{k,0},\dots,a_{0,k})$ in the left hand side of~\eqref{Mkeq} with an odd index, that is $a_{k-j,j}$ for even $j$, have order $\gamma$ and at most order $1$ otherwise as $\gamma\rightarrow 0$. These elements with odd indices are exactly those from the vector $(a_{k,0},\dots,a_{0,k})^\top$ that make a contribution to the asymptotic variance. The `next' set of contributions come from the vector $(a_{k-2,0},a_{k-3,1}\dots,a_{0,k-2})$. 
Using again~\eqref{a1} and~\eqref{a2}, the vector satisfies
\begin{equation*}
M_{k-2} (a_{k-2,0},a_{k-3,1},\dots,a_{0,k-2})^\top = v_{k-2},
\end{equation*}
for some vector $v_{k-2}$ (from the last term on the left hand side of~\eqref{a2}) of order $\gamma$ as $\gamma\rightarrow 0$ and since the determinant of $M_{k-2}$ is of order $1$ (by Lemma~\ref{Mklem}), the contributions here to the asymptotic variance are again of order $\gamma$. Continuing for $(a_{k-2j,0},a_{k-2j-1,1}\dots,a_{0,k-2j})^\top$, $j\in\mathbb{N}$, it follows that all contributions are of order $\gamma$ as $\gamma\rightarrow 0$. The resulting coefficients indeed make up a solution $\phi$ to the Poisson equation because the matrices $M_k$ are invertible and because the coefficients $a_{i,j}$ for even $i+j$ are equal to zero from repeating the above procedure for the coefficients associated to $M_{k-1}$, $M_{k-3}$ and so on.  \\
For the general case of~\eqref{genodd}, since $\mathcal{L}$ is a linear differential operator and the contributions to the value of $\int\phi (f-\pi(f))d\tilde{\pi}$ come from exactly the same (odd $i$, even $j$) $a_{i,j}$ coefficients from the corresponding solution $\phi$ to each summand in~\eqref{genodd}, the proof concludes.
\end{proof}

\begin{proof}\textit{(of Proposition~\ref{quarprop})}
Take the polynomial ansatz
\begin{equation}\label{phan}
\phi(q,p) = \sum_{i,j = 0}^4 a_{i,j} q^i p^j
\end{equation}
for $a_{i,j}\in\mathbb{R}$, where $a_{i,j}$ not appearing in the sum are taken to be zero in the following. Again, only the standard Gaussian is considered, it turns out the arguments follow similarly otherwise. Comparing coefficients in~\eqref{poisson0} and using the same strong induction argument as in the proof of Proposition~\ref{oddprop} leads to~\eqref{a2} for all $(i,j)\neq(4,0),(0,0)$ and equation~\eqref{induct}. Taking $(i,j) = (j'-1, 5 - j')$ for $1\leq j' \leq 4$ in~\eqref{a2} and comparing the $q^4$ coefficients in the Poisson equation, it holds that
\begin{equation}\label{Mkeq3}
M_4 (a_{4,0},a_{3,1},a_{2,2},a_{1,3},a_{0,4})^\top = (1,0,\dots,0)^\top
\end{equation}
and taking $(i,j) = (j'-1, 3 - j')$ for $j' \geq 1$ in~\eqref{a2} yields
\begin{equation}\label{Mkeq4}
M_2 (a_{2,0},a_{1,1},a_{0,2})^\top = \gamma(2a_{2,2},6a_{1,3},12a_{0,4})^\top.
\end{equation}
Equations~\eqref{Mkeq3},~\eqref{Mkeq4} can be solved explicitly and the asymptotic variance is a weighted sum of the resulting coefficients. Those in~\eqref{phan} that make contributions are $a_{4,0},a_{2,2},a_{2,0}$, which gives the asymptotic variance $\frac{12(21\gamma^4 + 55\gamma^2 + 27)}{\gamma(3\gamma^2 + 4)}$ that goes to infinity as $\gamma\rightarrow 0$ or $\gamma\rightarrow\infty$. Comparing constant terms in the Poisson equation yields
\begin{equation*}
a_{0,2} = \frac{1}{2\Gamma}\int q^4 \frac{e^{\frac{q^2}{2}}}{\sqrt{2\pi}}dq = \frac{3}{2\Gamma},
\end{equation*}
which turns out to be satisfied by the solution for $a_{0,2}$, so that~\eqref{phan} is indeed a solution; note that the coefficients associated to $M_3$ and $M_1$ are zero by a similar procedure as above.
\end{proof}

\section{Discussion}\label{discussion}
\subsection{Relation to previous methodologies}
The infinite time integral~\eqref{pform} has been used for the calculation of transport coefficients in molecular dynamics \cite{MR3463433,MR2740040} and the derivative of the expectation appearing in~\eqref{pform} with respect to initial conditions is a problem considered when calculating the `greeks' in mathematical finance \cite{MR1842285}. On the topic of the latter and in contrast to \cite{MR1842285}, there is previous work dealing with cases of degenerate noise in the system, but the formulae derived were done so under different motivations and do not seem to improve upon~\eqref{nabphi0} in our situation; some of these references are given in Remark~\ref{alodir}.\\
Taking $\Gamma \rightarrow \infty$ together with a time rescaling, the dynamics~\eqref{langevin00} become the overdamped Langevin equation \cite{MR3288096}. An analogous result holds \cite{MR3324144} when $\Gamma = \Gamma(q)$ is position dependent
, where a preconditioner for the corresponding overdamped dynamics appears in terms of $\Gamma^{-1}$; see Section~\ref{posdep} for a consideration of our method in the position dependent friction case. 
On the other hand, 
the Hessian of $U$ makes a good preconditioner in the overdamped dynamics because of the Brascamp-Lieb inequality, see Remark 1 in \cite{MR3774648}.\\
On the application of underdamped Langevin dynamics with (variance reduced) stochastic gradients alongside the related Hamiltonian Monte Carlo method, \cite{pmlr-v139-zou21b} presents a comparison with convergence rates for the latter. In \cite{Chatterji2018OnTT}, convergence guarantees are provided for variance reduced gradients in the overdamped case and the control variate stochastic gradients in the underdamped case, along with numerical comparisons in low dimensional, tall dataset regimes. Furthermore, the underdamped dynamics with single, randomly selected component gradient update in place of the full gradient is considered in \cite{ding2020random}.\\
Variance reduction by modifying the observable instead of changing the dynamics has been considered for example in \cite{MR3969063, MR4108687,south2020regularised}. The methods there are incompatible with the framework in the present work due to the improved observable being unknown before the simulation of the Markov chain. Although useful, their applicability are limited in large $n$ cases due to storage requirements\cite{MR3969063}, not to mention either escalating computational cost for improvements in the observable or requirement of a priori knowledge\cite{south2020regularised}. 

\subsection{The nonconvex case}
In the case where $U$ is nonconvex, the Monte Carlo procedure in Section~\ref{thin} may continue to be used as presented, however the first variation process could easily stray from the case of exponential decay as in Theorem~\ref{newformprop}. Transitions from one metastable state to another cause the tangent process to increase in magnitude. In a one dimension double well potential $U(q) = \frac{q^4}{4} -q^2 + \frac{q}{2}$, linear observable $f(q) = q$ case, these transitions occur frequently enough during the gradient procedure in $\Gamma$ that $Dq$ blows up in simulation. Even in cases for which the metastabilities are strong, so that transitions occur less frequently, simulations show that $\Gamma$ dives to zero in periods where no transitions are occuring (as if the case of Corollary~\ref{lincor}), but increase dramatically in value once a transition does occur, causing the trajectory in $\Gamma$ to decay over time but occasionally jumping in value, so that there is no convergence for $\Gamma$. On the other hand, the Galerkin method presented in Appendix~\ref{num1} tends to give good convergence for $\Gamma$ in such cases.
\subsection{Position-dependent friction}\label{posdep}
It is possible to adapt the formula~\eqref{funcdereq} to the case of position-dependent gradient direction in $\Gamma$ given a Feynman-Kac representation formula and the corresponding existence result, which will be the aim of future work. The gradient direction is the same as~\eqref{DeltaGam} with the change that the integral is replaced by the corresponding marginal integral in $p$. 
Ideas using such a formula need to take into account that the first variation process retains a non-vanishing stochastic integral with respect to Brownian motion, 
so that the truncation in calculating the corresponding infinite time integral in Section~\ref{thin} is not as well justified, or rather, does not happen in the execution of Algorithm~\ref{algorithm} due to~\eqref{convconds} not being satisfied.
\subsection{Metropolisation}
Throughout Section~\ref{nummeth}, the implementation has not involved accept-reject steps. Metropolisation of discretisations of the underdamped Langevin dynamics was given in \cite{HOROWITZ1991247}, see also Section 2.2.3.2 in \cite{MR2681239} and \cite{MR4309974, doi:10.1063/1.2354490}. The systematic discretisation error is removed with the inclusion of this step but the momentum is reversed upon rejection (to avoid high rejection rates \cite{doi:10.1063/1.2354490}), which raises the question of whether friction matrices arising from Algorithm~\ref{algo0} improve the Metropolised situation where dynamics no longer imitate those in the continuous-time. For example the intuition in the Gaussian target measure, linear observable case discussed in Section~\ref{oddpol} no longer applies.
\subsection{Conclusion}
We have presented the central limit theorem for the underdamped Langevin dynamics and provided a formula for the directional derivative of the corresponding asymptotic variance with respect to a friction matrix $\Gamma$. A number of methods for approximating the gradient direction in $\Gamma$ have been discussed together with numerical results giving improved observed variances. Some cases where an improved friction matrix can be explicitly found have been given to guide the expectation of an optimal $\Gamma$. In particular, in cases where the observable is linear and the potential is close to quadratic, which is the case when finding the posterior mean in Bayesian inference with Gaussian priors, the optimal friction is expected to be close to zero (due to Corollary~\ref{lincor}). This is consistent with the numerical conclusion from the proposed Algorithm~\ref{algorithm}. Moreover, it is shown that the improvement in variance is retained when using minibatch stochastic gradients in a case of Bayesian inference.\\
We mention that the gradient procedure using~\eqref{funcder00} and~\eqref{nabphi0} can be used to guide $\Gamma$ in arbitrarily high dimension by extrapolation; that is, given a high dimensional problem of interest, the gradient procedure can be used on similar, intermediate dimensional problems in order to obtain a friction matrix that can be extrapolated to the original problem. In particular, for the Bayesian inference problem as formulated in Section~\ref{bayinf}, the algorithm recommends the choice of a small friction scalar, which can be expected to apply for datasets in an arbitrary number of dimensions. \\
Future directions not mentioned above includes well-posedness of the optimisation in $\Gamma$, extension to higher-order Langevin samplers methods as in \cite{chak2021generalised,MR4253735} and gradient formulae in the discrete time case analogous to Theorem~\ref{funcder}.

\section*{Acknowledgements}

M.C. was funded under a EPSRC studentship. G.A.P. was partially supported by
the EPSRC through grants EP/P031587/1, EP/L024926/1, and EP/L020564/1. N.K. and G.A.P. were funded in part by JPMorgan Chase $\&$ Co under a J.P. Morgan A.I. Research Awards 2019. Any views or
opinions expressed herein are solely those of the authors listed, and may differ from
the views and opinions expressed by JPMorgan Chase $\&$ Co. or its affiliates. This
material is not a product of the Research Department of J.P. Morgan Securities
LLC. This material does not constitute a solicitation or offer in any jurisdiction. Part of this project was carried out as T.L. was a visiting professor at Imperial College of London, with a visiting professorship grant from the Leverhulme
Trust. The Department of Mathematics at ICL and the Leverhulme Trust are warmly thanked for their support.

\bibliography{document}

\begin{appendices}

\section{Preliminaries}
\begin{theorem}\label{dent}
Let Assumption~\ref{smu} hold. For any $\mathcal{F}_0$-measurable $z_0:\Omega\rightarrow\mathbb{R}^{2n}$, there exists a unique almost surely continuous in $t$ solution $(q_t,p_t)=z_t:\Omega\rightarrow \mathbb{R}^{2n}$ to~\eqref{langevin00} that is $\mathcal{F}_t$-adapted and satisfies
\begin{equation}\label{membd}
\mathbb{E}\abs*{z_t}^2 \leq e^{\kappa t}(1+ \mathbb{E}\abs*{z_0}^2 )
\end{equation}
for a constant $\kappa\in\mathbb{R}$ and all $t\geq 0$. Furthermore, for any $z\in\mathbb{R}^{2n}$, $t\geq 0$, let $p_t^z$ be the probability measure given by
\begin{equation}\label{trans}
p_t^z(A) = \mathbb{P}(z_t^z\in A)
\end{equation}
for any Borel measurable $A$, where $z_t^z$ denotes the solution to~\eqref{langevin00} starting at $z_0 = z$, then $p_t^z$
\begin{enumerate}
\item is a transition probability in the sense that 
\begin{enumerate}
\item $(t,z)\mapsto p_t^z(A)$ is Borel measurable on $(0,\infty)\times \mathbb{R}^{2n}$,
\item the Chapman-Kolmogorov relation \cite{MR0494490} holds and
\end{enumerate}
\item \textcolor{black}{admits a density denoted $p(z,\cdot,t):\mathbb{R}^{2n}\rightarrow\mathbb{R}$ with respect to the Lebesgue measure on $\mathbb{R}^{2n}$ at every $(t,z)\in(0,\infty)\times\mathbb{R}^{2n}$ such that $p$ is a measurable function satisfying for every $z\in\mathbb{R}^{2n}$,
\begin{equation}\label{smoothden}
p(z,\cdot,\cdot)\in C^\infty(\mathbb{R}^{2n}\times(0,\infty)).
\end{equation}}
\end{enumerate}
\end{theorem}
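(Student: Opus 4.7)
The plan is to split the argument into four parts: strong existence and uniqueness, the moment bound~\eqref{membd}, the transition-kernel measurability with Chapman--Kolmogorov, and finally the existence and $C^\infty$-smoothness of the transition density. For the first part I would invoke the standard theory of SDEs with globally Lipschitz drift and constant diffusion. Under Assumption~\ref{smu}, $U \in C^\infty$ and~\eqref{assump2eq0} ensures that $\nabla U$ is globally Lipschitz, so the drift $(q,p) \mapsto (M^{-1}p, -\nabla U(q) - \Gamma M^{-1}p)$ in~\eqref{langevin00} is globally Lipschitz while the diffusion coefficient is constant. Consequently the result quoted from \cite{MR2329435} directly yields strong existence, pathwise uniqueness, and continuous $(\mathcal{F}_t)$-adapted sample paths for any $\mathcal{F}_0$-measurable initial condition $z_0$.

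For the moment bound I would apply It\^{o}'s formula to $V(z) = 1 + |z|^2$. Using~\eqref{assump2eq}, the drift contribution is bounded by $C(1+|z|^2)$ (the friction term $-\langle p, \Gamma M^{-1}p\rangle$ being nonpositive), and the It\^{o} correction contributes the constant $2\,\mathrm{Tr}(\Gamma)$. Standard localisation via $\tau_N := \inf\{t : |z_t| \geq N\}$ followed by monotone passage to the limit yields $\tfrac{d}{dt}\mathbb{E}|z_t|^2 \leq \kappa(1+\mathbb{E}|z_t|^2)$, and Gr\"onwall's lemma then gives~\eqref{membd} with $\kappa$ depending only on $K_U$, $M$, $\Gamma$.

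For the transition-kernel structure, continuous dependence of $z_t^z$ on $z$ (Theorem 39 of Chapter V in \cite{MR2020294}, invoked later in the paper for the first variation process) gives that $z \mapsto \mathbb{E}[\varphi(z_t^z)]$ is continuous for every bounded continuous $\varphi$; together with stochastic continuity in $t$, a monotone class argument then produces the required Borel measurability in $(t,z)$ of $p_t^z(A)$ for arbitrary Borel $A$. The Chapman--Kolmogorov identity follows from the strong Markov property, which in turn is a consequence of pathwise uniqueness.

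The main obstacle is the last claim, the existence of a $C^\infty$ transition density. I would write $\mathcal{L}$ in H\"ormander form as $X_0 + \tfrac{1}{2}\sum_{i=1}^n X_i^2$, where $X_0 = p^\top M^{-1}\nabla_q - (\nabla U(q) + \Gamma M^{-1}p)^\top \nabla_p$ is the drift field and the $X_i = \sqrt{2}\,(\sqrt{\Gamma})_{\cdot,i}\cdot \nabla_p$ are the diffusion fields. Since $\Gamma \in \mathbb{S}_{++}^n$, the $X_i$ span $\{0\}\times\mathbb{R}^n$ at every point, and the brackets $[X_0,X_i]$ introduce an $M^{-1}\partial_q$ contribution that together with the $X_i$ spans all of $\mathbb{R}^{2n}$ uniformly. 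H\"ormander's theorem \cite{MR222474} then provides a jointly $C^\infty$ transition density $p(z,\zeta,t)$ on $\mathbb{R}^{2n}\times\mathbb{R}^{2n}\times(0,\infty)$ in the sense of~\eqref{smoothden}, and the joint Borel measurability required in~\eqref{trans} follows from this smoothness combined with the kernel measurability established above.
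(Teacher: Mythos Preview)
Your proposal is correct and follows essentially the same strategy as the paper: cite standard SDE theory for strong well-posedness and the moment bound, standard Markov process theory for the transition-kernel structure, and H\"ormander's theorem for the smooth density. The differences are in execution rather than substance. You spell out the It\^o--Gr\"onwall argument for~\eqref{membd}, whereas the paper simply absorbs this into the citation of \cite{MR2329435}. For joint $(t,z)$-measurability of $p_t^z(A)$, you go through continuous dependence on the initial datum plus stochastic continuity and a monotone class; the paper instead invokes \cite{MR0494490} for the kernel property in $z$ and handles the $t$-variable by passing to an everywhere-continuous modification $\hat z_t^z$, which makes $t\mapsto\mathbb P(z_t^z\in A)$ continuous directly. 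For the density, you assert joint $C^\infty$-regularity in all variables via H\"ormander; the paper is more conservative, using H\"ormander only for smoothness in $(\zeta,t)$ at fixed $z$ and then obtaining measurability in $z$ separately from the strong Feller property (Theorem~4.2 in \cite{MR3256873}) by writing $p(t,\cdot,\zeta)$ as a pointwise limit of mollified transition probabilities. Your route is slightly more self-contained; the paper's is slightly more careful on the last point, since the standard statement of H\"ormander's theorem does not immediately give regularity in the starting point without an additional argument.
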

\begin{proof}
Theorem 3.1.1 in \cite{MR2329435} yields existence and uniqueness of the solution to~\eqref{langevin00} together with~\eqref{membd}. Theorem 3.1 and 3.6 in Section 5 of \cite{MR0494490} give that $p_t^z(A)$ given by~\eqref{trans} is a probability kernel, that is, $p_t^z(A)$ is Borel measurable in $z$ for fixed $A,t$, is a probability measure in $A$ for fixed $z,t$ and satisfies the Chapman-Kolmogorov relation. For Borel measurability of $(t,z)\mapsto p_t^z(A)$ for fixed $A$, consider $\hat{z}_t^z$ given by
\begin{align}\label{modifi}
\hat{z}_t^z(\omega) = \begin{cases}
z_t^z(\omega) &\textrm{if }\omega : z_\bullet^z(\omega) \in C([0,\infty)),\\
0&\textrm{otherwise.}
\end{cases}
\end{align}
The process $\hat{z}_t^z$ is continuous in $t$ and $\mathcal{F}$-measurable in $\omega$, therefore 
$\mathbb{P}(\hat{z}_t^z\in A) = \mathbb{P}(z_t^z\in A)$ is continuous in $t$ hence Borel measurable in $(t,z)$. \textcolor{black}{Finally, $p_t^z$ admits a density at every $(t,z)\in(0,\infty)\times\mathbb{R}^{2n}$ satisfying~\eqref{smoothden} due to It\^{o}'s rule and H\"ormander's theorem \cite{MR222474}; measurability with respect to the starting point $z$ and therefore jointly in all of the arguments follows by the strong Feller property given by Theorem 4.2 in \cite{MR3256873}, because $p(t,\cdot,\zeta)$ is the pointwise limit of the continuous functions $(\int \eta_k(\zeta - \zeta') p(t,\cdot,\zeta')d\zeta')_{k>0}$, where $\eta_k$ denotes the standard scaled mollifiers.
}
\end{proof}
For all $t\geq 0$, all $z\in\mathbb{R}^{2n}$ and all $f:\mathbb{R}^{2n}\rightarrow\mathbb{R}$ integrable under the law $\mathcal{L}((z_t)_{t\geq 0}|z_0 = z)$ of $z_t$ starting at $z$, let 
\begin{equation}\label{semigroup}
P_t(f): z\mapsto \mathbb{E}(f(z_t^z)) = \mathbb{E}(f(z_t)|z_0 = z).
\end{equation}
The family $(P_t)_{t\geq 0}$ forms a strongly continuous (Proposition~\ref{strongcont}) Markov semigroup on $L^2(\tilde{\pi})$ with unit operator norm. Denote by $L$ the infinitesimal generator associated to this semigroup, given by
\begin{equation}\label{l2gen}
Lu = \lim_{t\rightarrow 0} \frac{P_t(u) - u}{t}
\end{equation}
for all functions $u\in\mathcal{D}(L)\subset L^2(\tilde{\pi})$, where the domain $\mathcal{D}(L)$ consists of the functions for which the above limit in $L^2(\tilde{\pi})$ exists.\\
\begin{prop}\label{strongcont}
The family $(P_t)_{t\geq 0}$ is strongly continuous in $L^2(\tilde{\pi})$.
\end{prop}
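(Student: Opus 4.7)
The plan is the standard two-step argument: establish a uniform $L^2(\tilde\pi)$-bound on $(P_t)_{t\geq 0}$, prove strong continuity at $t=0$ on a convenient dense subset, and conclude by an $\varepsilon/3$ density argument. Semigroup property then upgrades continuity at $0$ to continuity at every $t \geq 0$.

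First I would check the contraction bound. For $f\in L^2(\tilde\pi)$, Jensen's inequality applied to the conditional expectation defining $P_t$ yields $|P_t f(z)|^2 \leq P_t(|f|^2)(z)$ pointwise, and integrating against $\tilde\pi$ using the invariance established in~\eqref{inv} (which gives $\int P_t g\, d\tilde\pi = \int g\, d\tilde\pi$ for $g \geq 0$ by the monotone class / approximation argument based on the transition density from Theorem~\ref{dent}) gives $\|P_t f\|_{L^2(\tilde\pi)} \leq \|f\|_{L^2(\tilde\pi)}$. So the operators are uniformly bounded.

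Next I would prove $\|P_t f - f\|_{L^2(\tilde\pi)} \to 0$ as $t \to 0^+$ for $f \in C_b(\mathbb{R}^{2n})$. Using the almost sure continuity in $t$ of $(z_t^z)$ from Theorem~\ref{dent}, $f(z_t^z) \to f(z)$ a.s.\ and $|f(z_t^z)| \leq \|f\|_\infty$, so dominated convergence gives $P_t f(z) \to f(z)$ pointwise for every $z$. Since $|P_t f - f|^2 \leq 4\|f\|_\infty^2$ and $\tilde\pi$ is a probability measure, a second application of dominated convergence yields $P_t f \to f$ in $L^2(\tilde\pi)$.

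Then comes the density/$\varepsilon/3$ step. Since $C_b(\mathbb{R}^{2n})$ is dense in $L^2(\tilde\pi)$ (standard, via truncation of $f$ to a ball, multiplication by a cutoff, and mollification, noting that $\tilde\pi$ has a bounded density), for any $f \in L^2(\tilde\pi)$ and $\varepsilon>0$ I can pick $g \in C_b$ with $\|f-g\|_{L^2(\tilde\pi)} < \varepsilon/3$ and write
\begin{equation*}
\|P_t f - f\|_{L^2(\tilde\pi)} \leq \|P_t(f-g)\|_{L^2(\tilde\pi)} + \|P_t g - g\|_{L^2(\tilde\pi)} + \|g-f\|_{L^2(\tilde\pi)} \leq 2\|f-g\|_{L^2(\tilde\pi)} + \|P_t g - g\|_{L^2(\tilde\pi)},
\end{equation*}
which is less than $\varepsilon$ for all sufficiently small $t$ by the previous step and the contraction bound. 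Finally, continuity at an arbitrary $t_0 > 0$ follows from $P_{t_0+h} f - P_{t_0} f = P_{t_0}(P_h f - f)$ (for $h>0$) together with contractivity, and a symmetric argument using $P_{t_0} f - P_{t_0 - h} f = P_{t_0-h}(P_h f - f)$ for $0<h<t_0$.

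The only mildly nontrivial point is the contractivity / invariance pairing, since it requires justifying that the invariance identity~\eqref{inv} for $\tilde\pi$ passes from smooth compactly supported test functions to the pointwise functional identity $\int P_t g\, d\tilde\pi = \int g\, d\tilde\pi$ for $g\in L^1_+(\tilde\pi)$. This is handled by Fubini on the joint density $p(z,\zeta,t)$ from~\eqref{smoothden} in Theorem~\ref{dent}, so no further technicality is needed.
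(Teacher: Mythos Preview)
Your proof is correct and follows essentially the same $\varepsilon/3$ density argument as the paper: contraction via Jensen plus invariance of $\tilde\pi$, then approximation by a dense class. The only minor differences are that the paper uses $C_c^\infty$ rather than $C_b$ as the dense set and handles $\|P_t g - g\|_{L^2(\tilde\pi)}\to 0$ via It\^o's rule (giving $P_t g - g = \int_0^t P_s\mathcal{L}g\,ds$ with $\mathcal{L}g$ bounded) instead of your pathwise-continuity-plus-dominated-convergence argument; both routes are standard and equally valid here.
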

\begin{proof}
Fix $\epsilon>0$. For any $f\in L^2(\tilde{\pi})$, there exists $g\in C_c^\infty$ such that $\|f-g\|_{L^2(\tilde{\pi})}\leq \frac{\epsilon}{3}$. Writing
\begin{equation*}
\|P_tf - f\|_{L^2(\tilde{\pi})} \leq \|P_tf - P_tg\|_{L^2(\tilde{\pi})} + \| f - g \|_{L^2(\tilde{\pi})} + \| P_t g - g\|_{L^2(\tilde{\pi})},
\end{equation*}
the right hand side is bounded by $\epsilon$ after Jensen's inequality, invariance of $\tilde{\pi}$ and It\^o's rule together with a small enough $t$.
\end{proof}

\section{Solving the Poisson equation with a Galerkin method}\label{num1}
Throughout this appendix, $M=I_n$ is assumed. In low dimensions, 
it is feasible to approximate $\nabla\!_p\phi$ and a change in $\Gamma$ using Hermite polynomials. 
This approach gives an approximation in a finite subspace of $L^2(\tilde{\pi})$ at the level of $\phi$, as opposed to estimates of $\nabla\!_p\phi$ at particular points in space as in the Monte Carlo approach in Section~\ref{nummeth}. Specifically, the polynomials given by
\begin{equation*}
H_l(z) = \frac{(-1)^l}{\sqrt{l!}} e^{\frac{z^2}{2}} \frac{d^l}{dz^l}\bigg(e^{-\frac{z^2}{2}}\bigg)
\end{equation*}
for $l\in\mathbb{N}$ and their products in the multidimensional case
\begin{equation*}
H_{\underline{l}}(p) = \prod_{k = 1}^n H_{\underline{l}_{k}}(p_k), \qquad p=(p_1,\dots,p_n) \in\mathbb{R}^n
\end{equation*}
for multiindices $\underline{l}=(\underline{l}_1,\dots,\underline{l}_n)\in\mathbb{N}^n$ are considered in the weighted $L^2(\omega)$ space, where $\omega(p) = \frac{e^{-\frac{1}{2}\abs*{p}^2}}{(2\pi)^{-\frac{n}{2}}}$. A property of the Hermite polynomials that is repeatedly used here is that
\begin{equation*}
\partial_z H_l(z) = \sqrt{l} H_{l-1}(z).
\end{equation*}
For the application of Hermite polynomials in solving the Poisson equation associated to Langevin dynamics (in the case of scalar friction), we refer to \cite{MR3865558}. See also Chapter 5 in \cite{MR1641586} for Hermite polynomials in the multidimensional setting. In the case of a non-quadratic potential $U$, the same polynomials are used here after a Gram-Schmidt procedure in $L^2(\pi)$, which are denoted $(\hat{H}_{\underline{l}})_{\underline{l}\in\mathbb{N}^n}$, so that 
\begin{equation*}
\hat{H}_{\underline{l}} = \sum_{\abs*{\underline{k}}_\infty\leq K} \alpha_{\underline{l}}^{\underline{k}} H_{\underline{k}},
\end{equation*}
where $\abs*{\underline{k}}_\infty = \max(\underline{k}_1, \dots, \underline{k}_n)$, $K\in\mathbb{N}$, for some constants $\alpha_{\underline{k}}^{\underline{l}}\in\mathbb{R}$ calculated numerically. Their products with $H_{\underline{l}}$ are considered on $L^2(\tilde{\pi})$. Similarly, Fourier approximations can be used in the case of an $n$-torus (in $q$).\\
The observable $f\in L_0^2(\pi)$ is approximated by the projection defined by
\begin{equation}\label{fproj}
\Pi_K^q f := \sum_{\abs*{\underline{l}}_\infty\leq K} \hat{H}_{\underline{l}} \int f \hat{H}_{\underline{l}} d\pi = \sum_{\abs*{\underline{k}}_\infty,\abs*{\underline{l}}_\infty\leq K} \hat{H}_{\underline{l}} \alpha_{\underline{l}}^{\underline{k}} \int f H_{\underline{k}} d\pi.
\end{equation}
Since the generator has the form
\begin{equation*}
\mathcal{L} = \nabla_{\!p}^*\cdot \nabla_{\!q} - \nabla_{\!q}^* \cdot \nabla_{\!p} - (\nabla_{\!p}^*)^\top\Gamma\nabla_{\!p},
\end{equation*}
where
\begin{equation*}
\nabla_{\!q}^* = -\nabla_{\!q} + \nabla U,\qquad \nabla_{\!p}^* = -\nabla\!_p + p
\end{equation*}
are the respective formal $L^2(\tilde{\pi})$-adjoints of $\nabla\!_q$ and $\nabla\!_p$, the negative of the generator in the Poisson equation applied on functions of the form~\eqref{fproj} is the $(K+1)^{2n}$-by-$(K+1)^{2n}$ matrix given by 
\begin{align}
L_{\underline{k},\underline{l},\underline{\hat{k}},\underline{\hat{l}}} &= \langle \hat{H}_{\underline{k}} H_{\underline{l}}, -\mathcal{L} (\hat{H}_{\underline{\hat{k}}} H_{\underline{\hat{l}}})\rangle_{\tilde{\pi}}\nonumber\\
&= -\langle \hat{H}_{\underline{k}} \nabla\!_p H_{\underline{l}}, \nabla_{\!q}\hat{H}_{\underline{\hat{k}}} H_{\underline{\hat{l}}}\rangle_{\tilde{\pi}}  + \langle \nabla\!_q\hat{H}_{\underline{k}} H_{\underline{l}}, \hat{H}_{\underline{\hat{k}}} \nabla\!_p H_{\underline{\hat{l}}} \rangle_{\tilde{\pi}} + \langle \hat{H}_{\underline{k}} \nabla\!_p H_{\underline{l}},\Gamma\hat{H}_{\underline{\hat{k}}} \nabla\!_p H_{\underline{\hat{l}}}\rangle_{\tilde{\pi}}\nonumber\\
&= -\sum_i\langle\hat{H}_{\underline{k}},\partial_{q_i} \hat{H}_{\underline{\hat{k}}} \rangle_{\pi} (\sqrt{\underline{l}_i}\delta_{\underline{\hat{l}}}^{\underline{l}-e_i}) + \sum_i \langle\partial_{q_i}\hat{H}_{\underline{k}},\hat{H}_{\underline{\hat{k}}} \rangle_{\pi} (\sqrt{\underline{\hat{l}}_i}\delta_{\underline{\hat{l}}-e_i}^{\underline{l}})\nonumber\\
&\quad + \sum_{i,j}\delta_{\underline{\hat{k}}}^{\underline{k}} \delta_{\underline{\hat{l}}-e_i}^{\underline{l}-e_j} \sqrt{\underline{l}_j\underline{\hat{l}}_i} \Gamma_{i,j}\label{discgen}
\end{align}
where $\delta$ denotes the Kronecker delta here, the dependences of $\hat{H}_{\underline{k}}$, $\hat{H}_{\underline{\hat{k}}}$ and $H_{\underline{l}}$, $H_{\underline{\hat{l}}}$ on $q$ and $p$ respectively have been suppressed, $\langle v,w\rangle$ denotes $\sum_i\langle v_i, w_i \rangle$ for $v=(v_1,\dots,v_n)$, $w=(w_1,\dots,w_n)$ and $\langle \cdot,\cdot \rangle$ denotes the inner product on $L^2(\tilde{\pi})$. Note further that 
\begin{equation*}
\langle \partial_{q_i} \hat{H}_{\underline{k}}, \hat{H}_{\underline{\hat{k}}} \rangle_{\pi} = \sum_{\abs*{\underline{l}}_\infty,\abs*{\underline{\hat{l}}}_\infty\leq K}\alpha_{\underline{k}}^{\underline{l}}\sqrt{\underline{l}_i}\langle H_{\underline{l}-e_i},H_{\underline{\hat{l}}} \rangle_\pi \alpha_{\underline{\hat{k}}}^{\underline{\hat{l}}},
\end{equation*}
so that since $\alpha_{\underline{k}}^{\underline{l}}$ are derived from the inner products in $L^2(\pi)$ between the original Hermite polynomials $(H_{\underline{l}})_{\underline{l}}$, these inner products are the only values to be computed numerically other than those for the projection $\Pi_K^q f$ of the observable onto the finite dimensional subspace of $L^2(\tilde{\pi})$ spanned by the first $K+1$ Hermite polynomials given by~\eqref{fproj}. Solving the Poisson equation then reduces to finding the coefficients $\phi_{\underline{k},\underline{l}}\in\mathbb{R}$ of 
\begin{equation*}
\Pi_K^{(q,p)}\phi = \sum_{\abs*{\underline{k}}_\infty,\abs*{\underline{l}}_\infty\leq K}\phi_{\underline{k},\underline{l}} \hat{H}_{\underline{k}}H_{\underline{l}}
\end{equation*}
solving the linear system
\begin{equation}\label{linsys}
\sum_{\abs*{\underline{\hat{k}}}_\infty,\abs*{\underline{\hat{l}}}_\infty\leq K} L_{\underline{k},\underline{l},\underline{\hat{k}},\underline{\hat{l}}} \phi_{\underline{\hat{k}},\underline{\hat{l}}} = \Pi_K^{(q,p)} f = \begin{cases}
\sum_{\abs*{\underline{\hat{k}}}_\infty\leq K}\alpha_{\underline{k}}^{\underline{\hat{k}}}\int f H_{\underline{\hat{k}}} d\pi & \textrm{if } \underline{l} = \underline{0}\\
0 & \textrm{otherwise},
\end{cases}
\end{equation}
where note $L_{\underline{k},\underline{l},\underline{0},\underline{0}} = L_{\underline{0},\underline{0},\underline{\hat{k}},\underline{\hat{l}}} = 0$ so that only $\phi_{\underline{k},\underline{l}}$ for $(\underline{k},\underline{l})\neq (\underline{0},\underline{0})$ are determined by~\eqref{linsys} and $\phi_{\underline{0},\underline{0}}=0$ is enforced independently. Finally, the gradient direction in $\Gamma$ is given by
\begin{align}
(\Delta\Gamma)_{i,j} &= \int \sum_{\abs*{\underline{k}}_\infty,\abs*{\underline{l}}_\infty\leq K}\phi_{\underline{k},\underline{l}} \hat{H}_{\underline{k}}\sqrt{\underline{l}_i}H_{\underline{l}-e_i} \sum_{\abs*{\underline{\hat{k}}}_\infty,\abs*{\underline{\hat{l}}}_\infty\leq K} \phi_{\underline{\hat{k}},\underline{\hat{l}}} (-1)^{\abs*{\underline{\hat{l}}}}\hat{H}_{\underline{\hat{k}}}\sqrt{\underline{\hat{l}}_j}H_{\underline{\hat{l}}-e_j}d\tilde{\pi}\nonumber\\
&= \sum_{\abs*{\underline{k}}_\infty,\abs*{\underline{l}}_\infty,\abs*{\underline{\hat{k}}}_\infty,\abs*{\underline{\hat{l}}}_\infty\leq K}\phi_{\underline{k},\underline{l}}\phi_{\underline{\hat{k}},\underline{\hat{l}}}\sqrt{\underline{l}_i \underline{\hat{l}}_j }(-1)^{\abs*{\underline{\hat{l}}}} \delta_{\underline{\hat{k}}}^{\underline{k}} \delta_{\underline{\hat{l}}-e_j}^{\underline{l}-e_i}\nonumber\\
&= \sum_{\abs*{\underline{k}}_\infty,\abs*{\underline{l}}_\infty\leq K}\phi_{\underline{k},\underline{l}}\phi_{\underline{k},\underline{l}-e_i+e_j}\sqrt{\underline{l}_i (\underline{l}-e_i+e_j)_j }(-1)^{\abs*{\underline{l}}}\label{delgam1}
\end{align}
where $\abs*{\underline{\hat{l}}} = \underline{\hat{l}}_1+\dots+\underline{\hat{l}}_n$ and $\phi_{\underline{k},\underline{l}}=0$ if there is some $i$ such that $\underline{k}_i>K$ or $\underline{l}_i>K$. More robustly, the asymptotic variance can be discretised first, followed by taking the gradient direction with respect to the approximate asymptotic variance. Namely, (half of) the asymptotic variance $\int \nabla\!_p \phi^\top \Gamma \nabla\!_p\phi d\tilde{\pi}$ can be approximated by
\begin{equation}\label{approxav}
\sum_{\abs*{\underline{k}}_\infty,\abs*{\underline{l}}_\infty,\abs*{\underline{\hat{k}}}_\infty,\abs*{\underline{\hat{l}}}_\infty\leq K}\phi_{\underline{k},\underline{l}}L_{\underline{k},\underline{l},\underline{\hat{k}},\underline{\hat{l}}}\phi_{\underline{\hat{k}},\underline{\hat{l}}}
\end{equation}
(or simply the last term in~\eqref{discgen} replacing $L_{\underline{k},\underline{l},\underline{\hat{k}},\underline{\hat{l}}}$), so that the derivative with respect to the entries $\Gamma_{i,j}$ can be taken as follows. With abuse of notation, let $L^{-1}\in\mathbb{R}^{(K+1)^{2n}-1\times (K+1)^{2n}-1}$ be the inverse of the matrix depending on $\Gamma$ given by~\eqref{discgen} with the $L_{\underline{k},\underline{l},\underline{0},\underline{0}} = L_{\underline{0},\underline{0},\underline{\hat{k}},\underline{\hat{l}}} = 0$ entries removed. Let also $\underline{\phi}\in\mathbb{R}^{(K+1)^{2n}}$ be the vector made up of the coefficients $\phi_{\underline{k},\underline{l}}$ for $\underline{k}+\underline{l}\neq 0$ so that equation~\eqref{linsys} can be rewritten as 
\begin{equation*}
\underline{\phi}= L^{-1} ((\Pi_K^{(q,p)}f)_2,\dots,(\Pi_K^{(q,p)}f)_{(K+1)^{2n}})^\top.
\end{equation*}
By~\eqref{discgen}, the derivative of $L_{\underline{k},\underline{l},\underline{\hat{k}},\underline{\hat{l}}}$ with respect to the entry $\Gamma_{i,j}$ is 
\begin{equation*}
\partial L_{\underline{k},\underline{l},\underline{\hat{k}},\underline{\hat{l}}}^{i,j}:=\delta_{\underline{\hat{k}}}^{\underline{k}} \delta_{\underline{\hat{l}}-e_i}^{\underline{l}-e_j} \sqrt{\underline{l}_j\underline{\hat{l}}_i}.
\end{equation*}
Let $\partial L^{i,j}\in\mathbb{R}^{(K+1)^{2n}-1\times(K+1)^{2n}-1}$ denote the matrix with entries $\partial L_{\underline{k},\underline{l},\underline{\hat{k}},\underline{\hat{l}}}^{i,j}$ except the $\partial L_{\underline{k},\underline{l},\underline{0},\underline{0}}^{i,j}$ and $\partial L_{\underline{0},\underline{0},\underline{\hat{k}},\underline{\hat{l}}}^{i,j}$ entries are deleted. The derivative of~\eqref{approxav} with respect to the entry $\Gamma_{i,j}$ is then
\begin{equation}\label{discthengrad}
\underline{\phi}^\top \partial L^{i,j} \underline{\phi} + \sum_{\abs*{\underline{k}}_\infty,\abs*{\underline{l}}_\infty,\abs*{\underline{\hat{k}}}_\infty,\abs*{\underline{\hat{l}}}_\infty\leq K} 2(\partial\phi^{i,j})_{\underline{k},\underline{l}}L_{\underline{k},\underline{l},\underline{\hat{k}},\underline{\hat{l}}}\phi_{\underline{\hat{k}},\underline{\hat{l}}},
\end{equation}
where $\partial \phi^{i,j}\in\mathbb{R}^{(K+1)^{2n}}$ is the vector given by
\begin{equation*}
(\partial \phi^{i,j})_k := \begin{cases}
0 & \textrm{if } k = 1\\
-(L^{-1}\partial L^{i,j} \underline{\phi})_{k-1}& \textrm{otherwise},
\end{cases}
\end{equation*}
so that the gradient direction in $\Gamma$ is given by the negative of~\eqref{discthengrad}.

It's also possible to approximate $\phi$ using a finite difference in $q$, Hermite projection in $p$ approach in the case when the state space in $q$ is the $n$-torus; we omit further descriptions but refer to \cite{MR1933042} for this direction.

\section{Approximation of $\Delta\Gamma$ using independent realisations}\label{cigam1}
One can use the endpoints of a number of independent realisations of~\eqref{langevindisc} to approximate the integral with respect to $\pi$ in~\eqref{expecs} and, for each of those realisations, to use two additional sets of realisations of~\eqref{langevindisc} and~\eqref{Dqdisc} to approximate each of the expectations under the integral in~\eqref{expecs}.\\
Fix a starting point $(q,p)$; the first of the expectations in~\eqref{expecs} (and similarly for the second) can be approximated at time $s=i\Delta t$ with
\begin{equation*}
\frac{1}{K}\sum_{k=1}^K \nabla f(q_{(k)}^i)^\top Dq_{(k)}^i
\end{equation*}
where $K\in\mathbb{N}$, $(q_{(k)}^i,p_{(k)}^i)_{i\in\mathbb{N}}$ denotes the solution to~\eqref{langevindisc} with initial condition $(q,p)$, noise $\xi^i = \xi_{(k)}^i$ for all $i\in\mathbb{N}$ and where $\xi_{(k)}^i$ are independent as $k=1,\dots,K$, $i$ varies and $(Dq_{(k)}^i,Dp_{(k)}^i)$ is the corresponding solution to~\eqref{Dqdisc}. Subsequently, introducing an additional population of independent realisations of~\eqref{langevindisc} to draw from $\pi$ after some burn-in period, the change~\eqref{expecs} in $\Gamma$ can be approximated by
\begin{equation*}
-\frac{1}{L}\sum_{l=1}^L \bigg( \sum_{i=B+1}^{B+T}  \frac{\Delta t}{K}\sum_{k = 1}^K \nabla f(q_{(l,k)}^i)^\top Dq_{(l,k)}^i \bigg)^{\!\!\!\top}\! \bigg(  \sum_{i=B+1}^{B+T} \frac{\Delta t}{K}\sum_{k = 1}^K \nabla f(\tilde{q}_{(l,k)}^i)^\top D\tilde{q}_{(l,k)}^i \bigg)
\end{equation*}
where $B\in\mathbb{N}$ is some burn-in number of iterations, $T\in\mathbb{N}$ is some a posteriori number of iterations depending on whether the magnitude of the entries of $Dq_{(l,k)}^{B+T}$ are smaller than some fixed value for all $k,l$; furthermore $L\in\mathbb{N}$, $((q_{(l,k)}^i,p_{(l,k)}^i))_{i\in\mathbb{N}}$ denotes the solution to~\eqref{langevindisc} with initial condition say $(0,0)$, noise $\xi^i = \xi_{(l,k)}^i$ for all $i\in\mathbb{N}$ satisfying
\begin{equation*}
\xi_{(l,k)}^i = \xi_{(l,k')}^i \qquad \forall i < B, 1\leq k,k' \leq K
\end{equation*}
and are independent otherwise, $((\tilde{q}_{(l,k)}^i,\tilde{p}_{(l,k)}^i))_{i\geq B}$ denotes the solution to~\eqref{langevindisc} with `initial' condition
\begin{equation*}
(\tilde{q}_{(l,k)}^B,\tilde{p}_{(l,k)}^B) = (q_{(l,k)}^B,-p_{(l,k)}^B)
\end{equation*}
for all $1\leq k\leq K$, $1\leq l \leq L$, independent noise $\xi^i = \tilde{\xi}_{(l,k)}^i$ for $i\geq B$ independent also to $(\xi_{(l,k)}^i)_{i\in\mathbb{N}}$. The notation $(Dq_{(l,k)}^i,Dp_{(l,k)}^i)$, $(D\tilde{q}_{(l,k)}^i,D\tilde{p}_{(l,k)}^i)$ represent the corresponding solutions to~\eqref{Dqdisc}.
\end{appendices}

\end{document}